    \newcommand{\ihbar}{\imath \hbar}
    \newcommand{\Te}{\mathbb{T}e}
    \newcommand{\Pe}{\mathbb{P}_{\mathcal C} e}
    \newcommand{\llangle}{\langle \hspace{-0.2em} \langle}
    \newcommand{\rrangle}{\rangle \hspace{-0.2em} \rangle}
    \renewcommand{\S}{\mathcal H_{\mathcal S}}
    \newcommand{\E}{\mathcal H_{\mathcal E}}
    \newcommand{\Ad}{\mathrm{Ad}}
    \newcommand{\Sp}{\mathrm{Sp}}
    \newcommand{\im}{\Im\mathrm{m}}
    \newcommand{\re}{\Re\mathrm{e}}
    \newcommand{\Obj}{\mathrm{Obj}}
    \newcommand{\Morph}{\mathrm{Morph}}
    \newenvironment{proof}{\noindent \textit{Proof:}}{\hfill $\Box$ \\}
    \newtheorem{defi}{Definition}
    \newtheorem{prop}{Property}
    \newtheorem{propo}{Proposition}
    \newtheorem{theo}{Theorem}
\begin{document}

    \title[Geometric phases in open quantum systems and higher gauge theory]{A new kind of geometric phases in open quantum systems and higher gauge theory}

    \author{David Viennot and Jos\'e Lages}

    \address{Institut UTINAM (CNRS UMR 6213, Universit\'e de Franche-Comt\'e), 41 bis Avenue de
    l'Observatoire, BP1615, 25010 Besan{\c c}on cedex, France}

		\eads{
		\mailto{david.viennot@utinam.cnrs.fr},
		\mailto{jose.lages@utinam.cnrs.fr}
		}

\begin{abstract}
A new approach is proposed, extending the concept of geometric phases to adiabatic open quantum systems described by density matrices (mixed states). This new approach is based on an analogy between open quantum systems and dissipative quantum systems which uses a $C^*$-module structure. The gauge theory associated with these new geometric phases does not employ the usual principal bundle structure but a higher structure, a categorical principal bundle (so-called principal 2-bundle or non-abelian bundle gerbes) which is sometimes a non-abelian twisted bundle. The need to site the gauge theory in this higher structure is a geometrical manifestation of the decoherence induced by the environment on the quantum system.
\end{abstract}

\section{Introduction}
Since the pioneering work of Berry \cite{berry} concerning the adiabatic dynamics of a closed two-level system, several types of geometric phases for quantum systems have been studied, including non-abelian geometric phases \cite{wilczek}, nonadiabatic geometric phases \cite{aharonov}, geometric phases associated with noncyclic evolution \cite{samuel}, geometric phases associated with the Floquet theory \cite{moore}, and geometric phases associated with resonances \cite{mondragon, mostafazadeh,viennot2}. The geometric structures within which the Berry phases arise have also been studied, e.g., the line bundle associated with an abelian Berry phase \cite{simon}, the principal bundle associated with a non-abelian Berry phase \cite{wilczek}, and the composite bundle associated with a Berry phase which does not commute with the dynamical phase \cite{sardana,viennot1,viennot4}.\\
These geometric phase phenomena are related to pure states $\psi$ (vectors in a Hilbert space $\mathcal H$) governed by the Schr\"odinger equation $\ihbar \frac{d\psi}{dt} = H(t)\psi(t)$ (where $H \in \mathcal L(\mathcal H)$ is the Hamiltonian, which is selfadjoint for the closed quantum systems but not for the dissipative quantum systems). The open quantum systems \cite{breuer,bengtsson, bratteli} are described by mixed states (density matrices $\rho \in \mathcal L(\mathcal H)$, $\rho^\dagger = \rho$, $\rho \geq 0$ and $\tr_{\mathcal H} \rho = 1$) which are governed by a Lindblad equation $\ihbar \frac{d \rho}{dt} = \mathcal L(\rho)$ (where the Lindbladian $\mathcal L$ is often written in the standard form: $\mathcal L(\rho) = [h,\rho] - \frac{\imath}{2} \{\Gamma^\dagger_k \Gamma^k,\rho\} + \imath \Gamma^k \rho \Gamma^\dagger_k$ , $h,\Gamma^k \in \mathcal L(\mathcal H)$, $\{.,.\}$ being the anticommutator). Four approaches have previously been proposed to extend the concept of the geometric phases to open quantum systems. Uhlmann's approach \cite{uhlmann1,uhlmann2,uhlmann3,uhlmann4} is based on the purification of mixed states; since a purified state is Hilbert-Schmidt, we can use the definition of the geometric phases for vectors of a Hilbert space. The second approach \cite{chaturvedi} is based on the decomposition of mixed states into convex combinations of pure states. As the geometric phase associated with a pure state is the holonomy of the natural connection of a universal bundle over a complex projective space, this generalization to mixed states is obtained by considering the natural connection of the bundle induced by the convex combinations. The third approach due to Sarandy and Lidar \cite{sarandy1,sarandy2} is based on the fact that in finite dimension ($\dim \mathcal H = n$) the reduced density matrices $\rho-I_n$ are Hilbert-Schmidt. Thus the Lindblad equation can be considered as a Schr\"odinger equation in a $n^2$-dimensional Hilbert space and we can use the definition of the geometric phases for vectors of a Hilbert space. The last approach, due to Sj\"oqvist \textit{et al.} \cite{sjoqvist,tong}, is based on the introduction of a notion phase for evolving mixed states in interferometry. This method is generalized and extended to a nonunitary evolution via a purification method.\\
In this paper we propose another approach to define the geometric phases of an open quantum system based on an analogy between open and dissipative quantum systems. This approach uses the identification of a mixed state with a ``norm'' in a $C^*$-module (see for example ref. \cite{landsman}). This identification provides a new interpretation of the geometric phases of open quantum systems. The geometric structure in which this new kind of geometric phase is sited is not a principal bundle (as used for the geometric phases of a closed quantum system) but a higher gauge structure, a nonabelian bundle gerbes \cite{breen, aschieri, kalkkinen} or equivalently a principal 2-bundle \cite{baez1,baez2,baez3}, i.e. a categorical generalization of a principal bundle. The higher degree in the gauge theory seems to be a manifestation of the decoherence induced by the environment on which the quantum system is open. In previous works, we have already shown that such a higher degree arises for geometric phases of some quantum systems which are described by pure states but which present a kind of decoherence process due to resonances crossings \cite{viennot2} or to the use of the adiabatic Floquet theory \cite{viennot3}.\\
Section 2 is a brief review concerning geometric phases of closed and dissipative quantum systems. In section 3, we develop an analogy between dissipative quantum systems and open quantum systems, which allows us to introduce in section 4 a new kind of geometric phase for open quantum systems and its related geometric structure. Section 5 presents the adiabatic approximation for open quantum systems from the viewpoint of the geometric phase approach introduced in section 4. Section 6 presents a very simple example: the control of a qubit subjected to a decoherence process. For the sake of simplicity the present paper focuses on finite dimensional open quantum systems.

\section{Geometric phases of closed and dissipative quantum systems}
\subsection{Geometric phases of closed quantum systems}
Let $(\mathcal H,H(x),M)$ be a quantum dynamical system, where $\mathcal H$ is an Hilbert space representing the quantum states space (for the sake of simplicity we suppose a finite dimensional Hilbert space, $\dim \mathcal H = n$), $M \ni x \mapsto H(x) \in \mathcal L(\mathcal H)$ is a continuous family of self-adjoint operators representing the driven Hamiltonian, and $M$ is a $\mathcal C^\infty$-manifold representing the space of all possible configurations of the external classical parameters controlling the quantum system. A given dynamics is generated by a control parameter evolution $t \mapsto x(t)$ as being the solution of the Schr\"odinger equation:
\begin{equation}
\ihbar \frac{d \psi(t)}{dt} = H(x(t)) \psi(t) \qquad \psi(0) = \psi_0 \in \mathcal H
\end{equation}
Let $\{U^\alpha\}_\alpha$ be a good open cover of $M$ (a set of contractible open sets such that $\bigcup_\alpha U^\alpha = M$). Let $M \ni x \mapsto e(x) \in \mathbb R$ be a continuous eigenvalue of $H(x)$ supposed non-degenerate and $U^\alpha \ni x \mapsto \phi_e^\alpha(x) \in \mathcal H$ be the associated (locally defined) continuous normalized eigenvector (the eigenvector is only locally defined because in many cases there do not exist eigenvectors continuously extendable on the whole of $M$).
\begin{equation}
\forall x \in U^\alpha, \quad H(x) \phi_e^\alpha(x) = e(x) \phi_e^\alpha(x)
\end{equation}
If we suppose that $\psi(0) = \phi_e^\alpha(x(0))$ and that the evolution $t \mapsto x(t)$ is slow with respect to the quantum proper time of state transition, i.e.
\begin{eqnarray}
& & \forall f \in \Sp(H) \setminus \{e\}, \forall \phi_f \in \ker(H-f 1_{\mathcal H}) \nonumber \\
& & \quad  \langle \phi_f(x(t)) | \frac{d}{dt} \phi_e(x(t)) \rangle = \frac{\langle \phi_f(x(t))|\frac{dH(x(t))}{dt}|\phi_e(x(t)) \rangle}{e(x(t))-f(x(t))} \simeq 0
\end{eqnarray}
(note that this requires a gap condition $|e(x(t)) - f(x(t))| > 0$, $\forall t$, $\forall f \in \Sp(H) \setminus\{e\}$), then the wave function is
\begin{equation}
\label{adiabtranspusuelle}
\psi(t) \simeq e^{-\ihbar^{-1}\int_0^t e(x(t'))dt'} e^{-{\int_{x(0)}^{x(t)}}_{\mathcal C} A^\alpha } \phi_e^\alpha(x(t))
\end{equation}
where $\mathcal C$ is the path in $M$ drawing by $t \mapsto x(t)$ and
\begin{equation}
A^\alpha = \langle \phi_e^\alpha|d\phi_e^\alpha \rangle \in \Omega^1 (U^\alpha,\imath \mathbb R)
\end{equation}
is the generator of the geometric phase ($d$ is the exterior derivative of $M$ and $\Omega^n (M,\imath \mathbb R)$ is the set of pure imaginary valued differential $n$-forms of $M$). We have supposed that $\mathcal C$ is completely included in $U^\alpha$. Equation (\ref{adiabtranspusuelle}) is called the adiabatic transport formula, $e^{-\ihbar^{-1}\int_0^t e(t')dt'}$ is called the dynamical phase and $e^{-{\int_{x(0)}^{x(t)}}_{\mathcal C} A^\alpha }$ is the geometric phase which depends only on the shape of $\mathcal C$ in $M$ and not on the time duration. By construction the wave function $\tilde \psi(t) = e^{-{\int_{x(0)}^{x(t)}}_{\mathcal C} A^\alpha} \phi_e^\alpha(x(t))$ satisfies the parallel transport condition $\langle \tilde \psi(t)|\frac{d\tilde \psi(t)}{dt} \rangle = 0$.\\
If $\mathcal C$ crosses several charts of $\{U^\alpha\}_\alpha$ we have
\begin{eqnarray}
\label{alvaformule}
\psi(t) & \simeq & e^{-\ihbar^{-1} \int_0^t e(x(t'))dt'} e^{-{\int_{x(0)}^{x^{\alpha \beta}}}_{\mathcal C} A^\alpha} g^{\beta \alpha}(x^{\alpha \beta}) e^{-{\int_{x^{\alpha \beta}}^{x^{\beta \gamma}}}_{\mathcal C} A^\beta}... \nonumber \\
& & ... g^{\xi \zeta}(x^{\zeta \xi}) e^{-{\int_{x^{\xi \zeta}}^{x(t)}}_{\mathcal C} A^\zeta} \phi_e^\zeta(x(t))
\end{eqnarray}
where $U^\alpha,U^\beta,...,U^\xi,U^\zeta$ are the different charts crossed by $\mathcal C$, $x^{\alpha \beta}$ is an arbitrary point in $U^\alpha \cap U^\beta \cap \mathcal C$ and $g^{\alpha \beta}(x) \in U(1)$ is defined by
\begin{equation}
\forall x \in U^\alpha \cap U^\beta \quad \phi_e^\beta(x) = g^{\alpha \beta}(x) \phi_e^\alpha(x)
\end{equation}
(since over $U^\alpha \cap U^\beta$, $\phi_e^\alpha$ and $\phi_e^\beta$ are two normalized eigenvectors of the same non-degenerate eigenvalue, they differ only by a phase factor $g^{\alpha \beta}$). The formula of the geometric phase is independent of the arbitrary choices of transition points $\{x^{\alpha \beta}\}_{\alpha,\beta}$ \cite{alvarez}. The transition functions $g^{\alpha \beta}$ satisfy the cocycle relation:
\begin{equation}
\forall x \in U^\alpha \cap U^\beta \cap U^\gamma \quad g^{\alpha \beta}(x) g^{\beta \gamma}(x) g^{\gamma \alpha}(x) = 1
\end{equation}
which is required in order to have a single valued eigenvector $\phi^\alpha_e = g^{\alpha \beta}g^{\beta \gamma} g^{\gamma \alpha} \phi^\alpha_e$.

As pointed out by Simon \cite{simon} the geometric phase is associated with the horizontal lift of $\mathcal C$ in $P$ the principal $U(1)$-bundle over $M$ defined by the transition functions $g^{\alpha \beta}$ and endowed with the connection having $A^\alpha$ as gauge potential (see ref. \cite{nakahara} for an exposition of the principal bundle theory). If the path $\mathcal C$ is closed, we have at the end $T$ of the evolution
\begin{equation}
\psi(T) \simeq e^{-\ihbar^{-1} \int_0^T e(x(t'))dt'} \mathrm{hol}(\mathcal C) \phi_e^\alpha(x(0))
\end{equation}
where $\mathrm{hol}(\mathcal C)$ is the holonomy of the horizontal lift of $\mathcal C$ in $P$ which (if $\mathcal C \subset U^\alpha$) is
\begin{equation}
\mathrm{hol}(\mathcal C) = e^{\oint_{\mathcal C} A^\alpha}
\end{equation}
The geometry of the principal bundle $P$ is characterized by the curvature $F = dA^\alpha \in \Omega^2(M,\imath \mathbb R)$ which satisfies $dF = 0$ (since $dA^\alpha = dA^\beta$ for all $x\in U^\alpha \cap U^\beta$ the curvature $F$ is globally defined).\\

We now suppose that $e_a$ is a $m$-fold degenerate eigenenergy on the whole of $M$. Let $\{\phi_{e,a}^\alpha \}_{a=1,...,m}$ be an orthonormal basis of continuous associated eigenvectors. If $\psi(0) = \phi_{e,a}^\alpha(x(0))$ and $\mathcal C \subset U^\alpha$ then at the adiabatic limit we have
\begin{equation}
\psi(t) \simeq e^{-\ihbar^{-1} \int_0^t e(x(t'))dt'} \sum_{b=1}^m \left[\Pe^{- \int_{x(0)}^{x(t)} A^\alpha} \right]_{ba} \phi_{e,b}^\alpha(x(t))
\end{equation}
where
\begin{equation}
A^\alpha = \left(\begin{array}{ccc} \langle \phi_{e,1}^\alpha|d\phi_{e,1}^\alpha \rangle & ... & \langle \phi_{e,1}^\alpha|d\phi_{e,m}^\alpha \rangle \\ \vdots & \ddots & \vdots \\ \langle \phi_{e,m}^\alpha | d\phi_{e,1}^\alpha \rangle & ... & \langle \phi_{e,m}^\alpha |d\phi_{e,m}^\alpha \rangle \end{array} \right) \in \Omega^1(U^\alpha,\mathfrak u(m))
\end{equation}
with $\Omega^n(U^\alpha,\mathfrak u(m))$ the set of $\mathfrak u(m)$ valued $n$-differential forms of $M$, and $\mathfrak u(m)$ the Lie algebra of the antiselfadjoint matrices of order $m$, and where $\Pe^{- \int_{x(0)}^{x(t)} A^\alpha}$ is the path ordered exponential along $\mathcal C$ \cite{nakahara}, i.e. $\Pe^{-\int_{x(0)}^{x(t)} A}$ is solution of
\begin{equation}
\frac{d\Pe^{- \int_{x(0)}^{x(t)} A}}{dt} = -A_\mu(x(t)) \dot x^\mu(t) \Pe^{- \int_{x(0)}^{x(t)} A}
\end{equation}
The non-abelian geometric phase $\Pe^{- \int_{x(0)}^{x(t)} A^\alpha}$ is then associated with the horizontal lift of $\mathcal C$ in $P$ the principal $U(m)$-bundle on the right over $M$ ($U(m)$ is the Lie group of the unitary matrices of order $m$) defined by the transition functions $g^{\alpha \beta}(x) \in U(m)$ such that
\begin{equation}
\forall x \in U^\alpha \cap U^\beta, \quad \left[g^{\alpha \beta}(x)\right]_{ab} = \langle \phi^\alpha_{e,a}(x)|\phi^\beta_{e,b}(x) \rangle
\end{equation}
$g^{\alpha \beta}$ is the passage matrix between the two bases $\{\phi_{e,a}^\alpha\}_a$ and $\{\phi_{e,a}^\beta\}_a$. $P$ is endowed with the connection having $A^\alpha$ as gauge potential, and for a closed path $\Pe^{\oint A^\alpha}$ is the holonomy of the horizontal lift of $\mathcal C$ in $P$. The geometry of $P$ is characterized by the curvature $F^\alpha = dA^\alpha + A^\alpha \wedge A^\alpha \in \Omega^2(U^\alpha,\mathfrak u(m))$ which satisfies $F^\beta(x) = g^{\alpha \beta}(x)^{-1} F^\alpha(x) g^{\alpha \beta}(x)$ ($\forall x \in U^\alpha \cap U^\beta$) and the Bianchi identity $dF^\alpha + [A^\alpha,F^\alpha]=0$.

\subsection{Geometric phases of dissipative quantum systems}
Let $(\mathcal H, H(x),M)$ be a quantum dynamical system where $H(x)$ is not selfadjoint with $\frac{1}{2\imath} (H(x)-H(x)^\dagger) \leq 0$. The system is then a dissipative quantum system, the eigenvalues of $H(x)$ are complex with a negative imaginary part (they are called resonances). Let $t \mapsto \psi(t)$ be the wave function solution of the Schr\"odinger equation. The non-selfadjointness of $H$ implies
\begin{equation}
\label{densdiss}
\|\psi(0)\|^2 = 1 \qquad \forall t>0, \quad \|\psi(t)\|^2<1
\end{equation}
The decreasing of $\|\psi(t)\|^2$ is characteristic of the dissipation of the quantum system.\\
Let $M \ni x \mapsto e(x) \in \mathbb C$ be a continuous supposed non-degenerate eigenvalue of $H(x)$ and $U^\alpha \ni x \mapsto \phi_e^\alpha(x) \in \mathcal H$ be the associated (locally defined) continuous (not necessary normalized) eigenvector. In a similar way to that for the closed case, we can define at the adiabatic limit a ``dissipative geometric phase\footnote{Since the dissipative geometric phase is not unitary, another terminology, as geometric factor for example, could be more appropriate.}'' $e^{-{\int_{x(0)}^{x(t)}}_{\mathcal C}A^\alpha} \in \mathbb C^*$ by defining the geometric phase generator as
\begin{equation}
A^\alpha = \frac{\langle \phi_e^\alpha|d\phi_e^\alpha \rangle}{\| \phi_e^\alpha \|^2} \in \Omega^1(U^\alpha,\mathbb C)
\end{equation}
Dissipative geometric phases are associated with the horizontal lifts in a principal $\mathbb C^*$-bundle $P$.\\
The gauge group $\mathbb C^*$ must be viewed as the product $\mathbb C^* = U(1) \times \mathbb R^{*+}$ of two physical gauge groups (and consequently $P$ must be viewed as the product of two bundles). $U(1)$ is associated with the ordinary geometric phase acquired by the wave function during the evolution, which is completely equivalent to the geometric phase of a closed quantum system, whereas $\mathbb R^{*+}$ is associated with the geometric contribution to the dissipation. If we are interested only in the dissipation process, we can describe just the structure involving the gauge group $\mathbb R^{*+}$.\\
The main object is then $\|\psi\|^2 \in \mathbb R^{*+}$ which plays the role of the state of dissipation of the quantum system. We call it the density of the system. The Schr\"odinger equation induces the following dynamical equation for the dissipation:
\begin{equation}
\ihbar \frac{d\|\psi\|^2}{dt} = \mathcal L_{x(t)}(\|\psi\|^2)
\end{equation}
with
\begin{equation}
\mathcal L_x(\|\psi\|^2) = \langle \psi |(H(x)-H(x)^\dagger)|\psi \rangle
\end{equation}
The eigenvalue equation induces the following equation for the eigendensity
\begin{equation}
\mathcal L_x(\|\phi_e^\alpha(x)\|^2) = 2 \imath \im(e(x)) \|\phi_e^\alpha(x)\|^2
\end{equation}
The generator of the density geometric phase (the geometric contribution to the dissipation) is defined as
\begin{equation}
\re A^\alpha = \frac{1}{2} \frac{d\|\phi_e^\alpha\|^2}{\|\phi_e^\alpha\|^2} \in \Omega^1(U^\alpha,\mathbb R)
\end{equation}
$\re A^\alpha$ is then the gauge potential of the connection of a principal $\mathbb R^{*+}$-bundle.\\
The geometry of the density geometric phases is trivial since the connection is pure gauge, i.e. the gauge potential is $d$-exact:
\begin{equation}
\re A^\alpha = d \ln (\|\phi_e^\alpha\|)
\end{equation}
Consequently, each holonomy is reduced to one:
\begin{equation}
\mathrm{hol}(\mathcal C) = e^{\oint_{\mathcal C} \re A^\alpha} = e^{\oint_{\mathcal C} d \ln(\|\phi_e^\alpha\|)} = 1
\end{equation}
Although the geometry of the density geometric phases of dissipative quantum systems is trivial, we use it as a model to construct a theory for open quantum systems which is not trivial. 

\section{$C^*$-module of an open quantum system, eigenvectors and gauge invariances}
\subsection{The $C^*$-module modelling an open quantum system}
An open quantum system consists of a quantum system $\mathcal S$ which can exchange matter and/or energy with an environment $\mathcal E$. The total closed system $\mathcal S+\mathcal E$ is called the universe. Being a closed system, the universe can be described as a quantum dynamical system $(\S \otimes \E, H(x), M)$ where $\S$ is the Hilbert space of the quantum states of the system (which is supposed finite dimensional, $\dim \S = n$) and $\E$ is the Hilbert space of the quantum states of the environment. To avoid any confusion, the scalar product in $\S$ is denoted by $\langle .|.\rangle_{\S}$, the scalar product in $\E$ is denoted by $\langle .|. \rangle_{\E}$ and the scalar product in $\S \otimes \E$ is denoted by$\llangle .|. \rrangle$. In the same manner, the adjoint with respect to $\langle . |.\rangle_{\S}$ is denoted by $\dagger$ whereas the adjoint with respect to $\llangle .|. \rrangle$ is denoted by $\ddagger$. The Hamiltonian of the universe $H(x) \in \mathcal L(\S \otimes \E)$ (which is selfadjoint for $\llangle .|.\rrangle$) can be written as
\begin{equation}
H(x) = H_{\mathcal S}(x) \otimes 1_{\E} + 1_{\S} \otimes H_{\mathcal E}(x) + H_I(x)
\end{equation}
$H_{\mathcal S} \in \mathcal L(\S)$ is the Hamiltonian of the isolated system, $H_{\mathcal E} \in \mathcal L(\E)$ is the Hamiltonian of the environment and $H_I \in \mathcal L(\S \otimes \E)$ is the system-environment interaction operator (all these operators are selfadjoint for the scalar products of their respective Hilbert spaces). The mixed state $\rho$ of the system associated with the normalized pure state $\psi$ of the universe is defined as being the partial trace of the orthogonal projector on $\psi$:
\begin{equation}
\label{densmat}
\rho = \tr_{\E}(|\psi \rrangle \llangle \psi|)
\end{equation}
$\rho$ is called a density matrix and we denote by $\mathcal D(\S) = \{\rho \in \mathcal L(\S)| \rho^\dagger = \rho, \rho \geq 0, \tr_{\S} \rho = 1 \}$ the set of density matrices. $\rho$ can be viewed as a statistical convex combination of pure states, $\rho = \sum_{i=1}^n p_i |\chi_i \rangle \langle \chi_i|$ where $\{p_i\}_i = \Sp(\rho)$ ($0 \leq p_i \leq 1$) and $\{\chi_i \in \S\}_i$ are the associated eigenvectors. The Schr\"odinger equation for the environment induces the Schr\"odinger-von Neuman equation
\begin{equation}
\ihbar \frac{d}{dt} |\psi(t) \rrangle \llangle \psi(t) | = [H(x(t)),|\psi(t)\rrangle \llangle \psi(t)|]
\end{equation}
which induces by the partial trace the Lindblad equation
\begin{equation}
\ihbar \frac{d\rho(t)}{dt} = \mathcal L_{x(t)}(\rho(t))
\end{equation}
where
\begin{eqnarray}
\mathcal L_x(\rho) & = & [H_{\mathcal S}(x),\rho] + \tr_{\E}\left([H_{\mathcal E}(x)+H_I(x),|\psi \rrangle \llangle \psi|]\right) \nonumber \\
& = & [H_{\mathcal S}(x),\rho] + \mathcal D_x(\rho)
\end{eqnarray}
The Liouvillian $[H_{\mathcal S}(x),.]$ represents the free evolution of the system, whereas $\mathcal D_x$ represents the decoherence induced by the environment on the system.\\
A dissipative quantum system can be viewed as a kind of open quantum system in which the dissipation is the only one effect of the coupling with the environment. The density $\|\psi\|^2 \in \mathbb R^{+*}$ of equation (\ref{densdiss}) and the density matrix $\rho \in \mathcal D(\S)$ of equation (\ref{densmat}) play then the same role. They characterize the influence of the environment on the system. The equivalent of $\S$ for a dissipative quantum system is $\mathbb C$, the equivalent of $\E$ is then $\mathcal H$ and the equivalent of the state space of the universe $\S \otimes \E$ is then $\mathbb C \otimes \mathcal H = \mathcal H$.\\
We can then consider the open quantum systems in the same manner as the dissipative quantum systems, and use the following paradigm: we consider $\S \otimes \E$ no longer as a vector space over the ring $\mathbb C$ but as a left $C^*$-module over the $C^*$-algebra $\mathcal L(\S)$ (a module which has the same axioms as a vector space but where an algebra takes the place of $\mathbb C$, see ref. \cite{landsman}). The $C^*$-module $\S \otimes \E$ is endowed with the following inner product:
\begin{equation}
\begin{array}{rcl}
(\S \otimes \E) \times (\S \otimes \E) & \to & \mathcal L(\S) \\
(\psi,\phi) & \mapsto & \langle \psi|\phi \rangle_* = \tr_{\E} \left(|\phi \rrangle \llangle \psi| \right)
\end{array}
\end{equation}
This inner product has the following useful properties:
\begin{itemize}
\item it is linear on the right and antilinear on the left: $\forall A,B \in \mathcal L(\S), \forall \phi,\psi,\chi \in \S \otimes \E$
\begin{equation}
\langle \psi|A\phi+B\chi \rangle_* = A \langle \psi|\phi \rangle_* + B \langle \psi|\chi\rangle_*
\end{equation}
\begin{equation}
\langle A\psi+B\chi|\phi \rangle_* = \langle \psi|\phi \rangle_* A^\dagger + \langle \chi|\phi\rangle_* B^\dagger
\end{equation}
\item it is ``hermitian'': $\forall \phi,\psi \in \S \otimes \E$
\begin{equation}
\langle \phi|\psi \rangle_* = \langle \psi | \phi \rangle_*^\dagger
\end{equation}
\item it is ``positive definite'': $\forall \psi \in \S\otimes\E$
\begin{equation}
\langle \psi|\psi \rangle_* \in \mathcal D(\S) \times \mathbb R^{*+}
\end{equation}
\begin{equation}
\langle \psi | \psi \rangle_* = 0 \iff \psi = 0
\end{equation}
\end{itemize}
A mixed state of the system (a density matrix) is then the square $*$-norm of a normalized state of the universe:
\begin{equation}
\rho = \|\psi\|^2_*
\end{equation}
The analogy between open and dissipative quantum systems is then complete.\\
Remark : The Hamiltonian of a dissipative system is not selfadjoint; for an open system, even if $H$ is selfadjoint for the scalar product $\llangle.|.\rrangle$, it is not selfadjoint for the inner product $\langle .|. \rangle_*$.

\subsection{Eigenoperator equation in the $C^*$-module}
The replacement of $\mathbb C$ by $\mathcal L(\S)$ requires the introduction of a new definition of the eigenvectors. 
\begin{defi}[Eigenoperator and $*$-eigenvector]
$M \ni x \mapsto E(x) \in \mathcal L(\S)$ is said to be a continuous eigenoperator of $H(x)$ and $U^\alpha \ni x \mapsto \phi^\alpha_E(x) \in \S \otimes \E$ is said to be an associated (locally defined) continuous $*$-eigenvector if
\begin{equation}\label{eoe}
H(x) \phi_E^\alpha(x) = E(x) \phi_E^\alpha(x)
\end{equation}
with
\begin{equation}
[E(x) \otimes 1_{\E}, H(x)] = 0
\end{equation}
$x \mapsto E(x)$ being a continuous map on $M$ and $x \mapsto \phi_E^\alpha(x)$ being a continuous map on $U^\alpha$.
\end{defi}
This definition is the exact transposition of an eigenvalue equation except that $E$ is an operator belonging to the $C^*$-algebra $\mathcal L(\S)$. We note that the commutation between $E$ and $H$ is required to have a behaviour sufficiently close to an eigenvalue. We suppose moreover that $\llangle \phi_E^\alpha | \phi_E^\alpha \rrangle = 1$ in order to ensure that $\rho_E^\alpha = \|\phi_E^\alpha\|^2_* = \tr_{\E}(|\phi_E^\alpha \rrangle \llangle \phi_E^\alpha|)$ is a mixed state (the mixed eigenstate). The eigenoperator satisfies some interesting mathematical properties which are set out in \ref{annexeA}; one of them permits an easy solution of the eigenoperator equation.

\subsection{Gauge invariances of the eigenoperator equation}
For the usual eigenvalue equation $H \phi_e = e \phi_e$ (with $e\in \mathbb C$ non degenerate) the gauge invariance is very simple, the action of the group $\mathbb C^*$ (or $U(1)$ if we want to preserve the normalization) transforms an eigenvector associated with $e$ to another eigenvector associated with $e$. The situation is more complicated for the eigenoperator equation (\ref{eoe}).\\
Let $G_x \subset \mathcal{GL}(\S)$ be the maximal subgroup of $\mathcal{GL}(\S)$ ($\mathcal{GL}(\S)$ is the group of invertible operators of $\S$) which leaves invariant the vector subspace $\ker(H(x)-E(x) \otimes 1_{\E}) \subset \S \otimes \E$ :
\begin{equation}
G_x \ker(H(x)-E(x)\otimes 1_{\E}) \subset \ker(H(x)-E(x)\otimes 1_{\E})
\end{equation}
Let $K_x \subset \mathcal U(\E)$ be the isotropy subgroup of $H(x)$ within $\mathcal U(\E)$ ($\mathcal U(\E)$ is the group of unitary operators of $\E$) :
\begin{equation}
K_x = \{k \in \mathcal U(\E)| k^{-1} H(x) k = H(x) \} \equiv \mathcal U(\E)_{H(x)}
\end{equation}
$G_x \times K_x$ constitutes the gauge group leaving invariant the eigenoperator equation, $\forall gk \in G_x \times K_x$ we have
\begin{equation}
H(x)\phi_E^\alpha(x) = E(x)\phi_E^\alpha(x) \Rightarrow H(x)gk\phi_E^\alpha(x) = E(x)gk\phi_E^\alpha(x)
\end{equation}
This follows directly from the definitions of $G_x$ and $K_x$ and from the fact that $kE=Ek$ since $k \in \mathcal L(\E)$ and $E \in \mathcal L(\S)$. \\
We denote by $\mathfrak g_x$ the Lie algebra of $G_x$ (the maximal subalgebra of $\mathcal L(\S)$ leaving invariant $\ker(H(x)-E(x)\otimes 1_{\E})$) and by $\mathfrak k_x =\{k \in \mathcal L(\E)|k^\dagger = -k, [1_{\S}\otimes k,H(x)] = 0\}$ the Lie algebra of $K_x$.

\begin{prop}
Under a gauge transformation $\tilde \phi_E^\alpha(x) = gk \phi_E^\alpha(x)$ with $gk \in G_x \times K_x$, the mixed eigenstate is tranformed as follows:
\begin{equation}
\tilde \rho_E^\alpha(x) = g \rho_E^\alpha(x) g^\dagger
\end{equation}
\end{prop}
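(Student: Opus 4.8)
The plan is to reduce everything to the definition $\rho_E^\alpha = \|\phi_E^\alpha\|^2_* = \tr_{\E}(|\phi_E^\alpha \rrangle \llangle \phi_E^\alpha|)$, to track how the rank-one operator $|\phi_E^\alpha \rrangle \llangle \phi_E^\alpha|$ on $\SE$ is modified by the gauge transformation, and then to push the partial trace through. First I would make the action of the gauge group explicit: an element $gk \in G_x \times K_x$ acts on the $C^*$-module $\SE$ as $(g \otimes 1_{\E})(1_{\S} \otimes k) = g \otimes k$, so that $\tilde \phi_E^\alpha = (g\otimes k)\phi_E^\alpha$. Since the $\ddagger$-adjoint of a factorized operator is the tensor product of the adjoints of its factors and $k$ is unitary, $(g\otimes k)^\ddagger = g^\dagger \otimes k^{-1}$, whence
\[
|\tilde \phi_E^\alpha \rrangle \llangle \tilde \phi_E^\alpha| = (g\otimes k)\, |\phi_E^\alpha \rrangle \llangle \phi_E^\alpha|\, (g^\dagger \otimes k^{-1}).
\]

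Next I would apply $\tr_{\E}$, relying on two elementary properties of the partial trace: (i) it intertwines multiplication by operators acting only on $\S$, i.e. $\tr_{\E}\big((A\otimes 1_{\E})\,X\,(B\otimes 1_{\E})\big) = A\,\tr_{\E}(X)\,B$; and (ii) conjugation by a unitary of $\E$ leaves it invariant, $\tr_{\E}\big((1_{\S}\otimes k)\,X\,(1_{\S}\otimes k^{-1})\big) = \tr_{\E}(X)$, which follows from the $\E$-cyclicity $\tr_{\E}((1_{\S}\otimes B)X) = \tr_{\E}(X(1_{\S}\otimes B))$ together with $k^{-1}k = 1_{\E}$. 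Writing $g\otimes k = (g\otimes 1_{\E})(1_{\S}\otimes k)$ and $g^\dagger\otimes k^{-1} = (1_{\S}\otimes k^{-1})(g^\dagger\otimes 1_{\E})$ and combining (i) and (ii) yields
\[
\tilde \rho_E^\alpha = \tr_{\E}\big(|\tilde \phi_E^\alpha \rrangle \llangle \tilde \phi_E^\alpha|\big) = g\,\tr_{\E}\big(|\phi_E^\alpha \rrangle \llangle \phi_E^\alpha|\big)\,g^\dagger = g\, \rho_E^\alpha\, g^\dagger,
\]
which is the assertion.

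I do not expect a genuine obstacle; the only points requiring care are the bookkeeping of the two adjoints ($\dagger$ on $\S$ versus $\ddagger$ on $\SE$) when forming $(g\otimes k)^\ddagger$, and the observation that, although $g$ is merely invertible and not unitary, one never needs to cycle it under a full trace — it acts trivially on $\E$ and therefore passes through $\tr_{\E}$ by property (i) alone, while the unitarity of $k$ is exactly what makes property (ii) work. As a consistency check it is worth noting that, when $g \notin \mathcal U(\S)$, one has $\tr_{\S}\tilde\rho_E^\alpha = \tr_{\S}(\rho_E^\alpha g^\dagger g) \neq 1$ in general, so $\tilde \rho_E^\alpha$ is a positive operator that is a bona fide density matrix only after renormalization; equivalently the mixed-state normalization is preserved only by the unitary part of $G_x$, paralleling the splitting $\mathbb C^* = U(1)\times \mathbb R^{*+}$ of the dissipative case.
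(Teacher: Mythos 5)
Your proof is correct and follows essentially the same route as the paper's: both pull $g$ and $g^\dagger$ out of the partial trace because they act only on $\mathcal H_{\mathcal S}$, and cancel $k$ against $k^{-1}$ using the cyclicity of the partial trace over $\mathcal H_{\mathcal E}$ with respect to operators acting only on the environment. Your version merely makes the tensor factorization $g\otimes k$, the adjoint bookkeeping, and the two partial-trace properties explicit where the paper leaves them implicit.
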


\begin{proof}
\begin{eqnarray}
\tilde \rho_E & = & \tr_{\E} (|\tilde \phi_E \rrangle \llangle \tilde \phi_E |) \nonumber \\
& = & \tr_{\E} (gk|\phi_E\rrangle\llangle\phi_E|k^{-1}g^\dagger) \nonumber \\
& = & g \tr_{\E}(k^{-1}k|\phi_E \rrangle\llangle \phi_E|) g^\dagger \nonumber \\
& = & g \rho_E g^\dagger
\end{eqnarray}
\end{proof}
 
$G_x$ is the maximal subgroup of $\mathcal{GL}(\S)$ leaving equivariant the square $*$-norm. From the viewpoint of the analogy between open and dissipative quantum systems, a $G_x$-gauge transformation is equivalent to a norm change of the eigenvector. The property for the dissipative quantum systems, $\forall \lambda \in \mathbb C^*$, $\tilde \phi_e^\alpha = \lambda \phi_e^\alpha \Rightarrow \|\tilde \phi_e^\alpha\|^2 = |\lambda|^2 \|\phi_e^\alpha \|^2$, becomes for the open quantum systems, $\forall g \in G_x$, $\tilde \phi_E^\alpha = g \phi_E^\alpha \Rightarrow \|\tilde \phi_E^\alpha \|^2_* = g \|\phi_E^\alpha \|^2_* g^\dagger$. Since $K_x$ leave invariant the square $*$-norm, a $K_x$-gauge transformation is a kind of ``phase change'' specific to the open quantum system.\\
The definition of $G_x \times K_x$ the group equivalent to a norm change for a $*$-eigenvector permits us to define the notion of degeneracy for the eigenoperator.

\begin{defi}[Non-degenerate eigenoperator]
Let $M \ni x \mapsto E(x) \in \mathcal L(\S)$ be an eigenoperator of $H(x)$ and $U^\alpha \ni x \mapsto \phi^\alpha_E(x) \in \S \otimes \E$ be an associated $*$-eigenvector. Let $G_x \times K_x$ be the gauge group defined as previously. We say that $E(x)$ is non-degenerate on $U^\alpha$ if $\forall x \in U^\alpha$ the action of $G_x \times K_x$ on $\ker(H(x)-E(x)\otimes 1_{\E})$ is transitive, i.e. if
\begin{equation}
G_x \times K_x \phi^\alpha_E(x) = \ker(H(x)-E(x)\otimes 1_{\E})
\end{equation}
\end{defi}

We see that $K_x$ has no direct influence on $\rho_E$. We introduce another gauge group $J_x$ related to $K_x$ (this relation will be apparent in the next section) but which is associated with $\rho_E$. Let $J^0_x \subset G_x$ be the union of all isotropy subgroups of the elements of the orbit by $G_x$ of $\rho_E^\alpha(x)$ :
\begin{eqnarray}
J_x^0 & = & \{j \in G_x | \exists g\in G_x, jg\rho_E^\alpha(x)g^\dagger j^\dagger = g\rho_E^\alpha(x)g^\dagger\} \nonumber \\
& = & \bigcup_{g \in G_x} G_{x,g\rho_E^\alpha(x)g^\dagger} \nonumber \\
& = & \bigcup_{g \in G_x} g G_{x,\rho_E^\alpha(x)} g^{-1}
\end{eqnarray}
($G_{x,\rho_E^\alpha(x)} = \{j \in G_x|j\rho_E^\alpha(x)j^\dagger = \rho_E^\alpha(x)\}$). $J_x^0$ is then also the orbit of the isotropy subgoup of $\rho_E^\alpha$ by the action of $G_x$ on itself. $J^0_x$ is a normal subgroup of $G_x$ (see \ref{annexeA}).\\
Suppose that $\dim \ker (\rho_E^\alpha(x)) = n-p$, let $\mathfrak j_x^1 \subset \mathfrak g_x$ be
\begin{equation}
\mathfrak j^1_x = \left\{O \in \mathfrak g_x | \hat O = \left( \begin{array}{cc} 0_{p\times p} & *_{p \times (n-p)} \\ 0_{(n-p)\times p} & *_{(n-p)\times(n-p)} \end{array} \right) \right\}
\end{equation}
where $\hat O$ is the matrix representation of $O$ in the diagonalization basis of $\rho_E^\alpha$ (the eigenvectors associated with the eigenvalue 0 being from the position $n-p$ to the position $n$), $0_{n \times p}$ is the null $n \times p$ matrix and $*_{n \times p}$ denotes any $n \times p$ matrix. $\mathfrak j^1_x$ is a normal solvable subalgebra of $\mathfrak g_x$ (see \ref{annexeA}).\\
Let $J^1_x = \{e^j, j\in \mathfrak j^1_x \}$ be the Lie group of $\mathfrak j^1_x$. $J_x$ is the direct product of groups $J_x = J^0_x \times J^1_x$. We denotes by $\mathfrak j_x = \mathfrak j^0_x \oplus \mathfrak j^1_x$ its Lie algebra, $\mathfrak j^0_x = \{j \in \mathfrak g_x|\exists g \in G, jg\rho_E^\alpha(x)g^\dagger+g \rho_E^\alpha(x)g^\dagger j^\dagger = 0 \}$. Noting that if $\rho_E^\alpha(x)$ is invertible then $J_x$ is reduced to $J^0_x$.\\

In order to have the same behaviour on the whole of $M$, we assume for the rest of this paper the following \textbf{local triviality assumption} :
\begin{itemize}
\item $\forall x,y \in M$, $G_x$ and $G_y$ are isomorphic,
\item $\forall x,y \in M$, $\dim \ker (\rho_E^\alpha(x)) = \dim \ker (\rho_E^\alpha(y))$ (and by consequences $J_x$ and $J_y$ are isomorphic).
\end{itemize}
Let $G$ be the abstract group typical of the family $\{G_x\}_{x \in M}$ and $\forall x\in U^\alpha$ let $\zeta^\alpha_x : G \to G_x \subset \mathcal{GL}(\S)$ be the group isomorphism between $G$ and $G_x$ continuous with respect to $x$ ($\zeta^\alpha_x$ is locally defined because the existence of a continuous extension on the whole of $M$ is not ensured). $\forall x,y \in U^\alpha$, $\zeta^\alpha_y \circ (\zeta^\alpha_x)^{-1}$ constitutes the isomorphism between $G_x$ and $G_y$. Let $J$ be the subgroup of $G$ such that $\zeta^\alpha_x(J) = J_x$ (by construction $J$ is the product of a normal and a solvable subgroups of $G$). The map $\zeta^\alpha_x$ defines a principal $G$-bundle on the left over $U^\alpha$ as being its fibre diffeomorphism. We denote by $P^\alpha_G$ this bundle and by $\zeta^\alpha : U^\alpha \times G \to P^\alpha_G$ its local trivialization defined by $\zeta^\alpha(x,g) = \zeta^\alpha_x(g)$. The restriction of $\zeta^\alpha$ on $U^\alpha \times J$ defines a principal $J$-bundle on the left over $U^\alpha$ denoted by $P^\alpha_J$. The bundles $P^\alpha_G$ and $P^\alpha_J$ encode the gauge theory associated with the eigenoperator equation. We denote by $P^\alpha_{\mathfrak g} = P^\alpha_G \times_G \mathfrak g$ and by $P^\alpha_{\mathfrak j} = P^\alpha_J \times_J \mathfrak j$ the vector bundles over $U^\alpha$ associated with $P^\alpha_G$ and $P^\alpha_J$ by the adjoint representation of $G$ on $\mathfrak g$, $\Ad(g) X = gXg^{-1}$ ($\mathfrak g$ and $\mathfrak j$ are the Lie algebras of $G$ and $J$). In a same manner we define the bundles $P^\alpha_K$ and $P^\alpha_{\mathfrak k}$.\\

We consider now the gauge transformations at the intersection of several charts. Let $g^{\alpha \beta} \in \mathcal C^\infty(U^\alpha \cap U^\beta) \otimes_M \bigsqcup_x G_x$ and $k^{\alpha \beta} \in \mathcal C^\infty(U^\alpha \cap U^\beta) \otimes_M \bigsqcup_x K_x$ be the 1-transition functions defined by $\forall x \in U^\alpha \cap U^\beta$
\begin{equation}
\phi^\alpha_E(x) = k^{\alpha \beta}(x) g^{\alpha \beta}(x) \phi^\beta_E(x)
\end{equation}
and then
\begin{equation}
\rho^\alpha_E(x) = g^{\alpha \beta}(x) \rho^\beta_E(x) g^{\alpha \beta}(x)^\dagger
\end{equation}
($\otimes_M$ denotes the image of the tensor product by the diagonal map which transforms $(x,y) \mapsto f(x,y)$ to $x \mapsto f(x,x)$, and $\bigsqcup$ denotes the disjoint union).\\
The 1-transition functions generate some 2-transition functions $h^{\alpha \beta \gamma} \in \mathcal C^\infty(U^\alpha \cap U^\beta \cap U^\gamma) \otimes_M \bigsqcup_x J_x$ measuring the failure of the cocycle relation (which is not ensured): $\forall x \in U^\alpha \cap U^\beta \cap U^\gamma$
\begin{equation}
h^{\alpha \beta \gamma}(x) = g^{\alpha \beta}(x) g^{\beta \gamma}(x) g^{\gamma \alpha}(x)
\end{equation}
$h^{\alpha \beta \gamma}(x)$ is an element of $J_x$ since to have a single valued mixed state, we must have $g^{\alpha \beta} g^{\beta \gamma} g^{\gamma \alpha} \rho_E^\alpha (g^{\alpha \beta} g^{\beta \gamma} g^{\gamma \alpha})^{-1} = \rho_E^\alpha$.
\begin{prop}
The 2-transition functions satisfy the following generalized cocycle relations: $\forall x \in U^\alpha \cap U^\beta \cap U^\gamma$
\begin{eqnarray}
h^{\alpha \beta \gamma}(x) & = & g^{\alpha \beta}(x) h^{\beta \gamma \alpha}(x) g^{\alpha \beta}(x)^{-1} \\
h^{\alpha \gamma \beta}(x) & = & h^{\alpha \beta \gamma}(x)^{-1} \\
h^{\beta \alpha \gamma}(x) & = & g^{\alpha \beta}(x)^{-1} h^{\alpha \beta \gamma}(x)^{-1} g^{\alpha \beta}(x)
\end{eqnarray}
and $\forall x \in U^\alpha \cap U^\beta \cap U^\gamma \cap U^\delta$
\begin{equation}
h^{\alpha \delta \gamma}(x)h^{\alpha \gamma \beta}(x) = h^{\alpha \delta \beta}(x) g^{\alpha \beta}(x) h^{\beta \delta \gamma}(x) g^{\alpha \beta}(x)^{-1}
\end{equation}
\end{prop}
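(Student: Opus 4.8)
The plan is to reduce all four relations to purely algebraic identities in the group $G_x$, extracted from the definition $h^{\alpha \beta \gamma}(x) = g^{\alpha \beta}(x) g^{\beta \gamma}(x) g^{\gamma \alpha}(x)$, once the $1$-transition functions have been normalised. First I would fix a total order on the index set of $\{U^\alpha\}_\alpha$, pick $g^{\alpha \beta}$ and $k^{\alpha \beta}$ for $\alpha < \beta$ from the defining relation $\phi^\alpha_E = k^{\alpha \beta} g^{\alpha \beta} \phi^\beta_E$, and complete the choice by setting $g^{\alpha \alpha} = 1_{\S}$, $k^{\alpha \alpha} = 1_{\E}$, $g^{\beta \alpha} = (g^{\alpha \beta})^{-1}$ and $k^{\beta \alpha} = (k^{\alpha \beta})^{-1}$. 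Because $G_x$ acts on $\S$ and $K_x$ on $\E$, these actions commute, so $\phi^\beta_E = (g^{\alpha \beta})^{-1}(k^{\alpha \beta})^{-1}\phi^\alpha_E = k^{\beta \alpha} g^{\beta \alpha}\phi^\alpha_E$, and applying $\tr_{\E}(|\cdot\rrangle\llangle\cdot|)$ (exactly as in the proof of Property~1) gives $\rho^\beta_E = g^{\beta \alpha}\rho^\alpha_E (g^{\beta \alpha})^\dagger$; continuity in $x$ is preserved since group inversion is smooth. Thus the normalised functions obey all the relations stated just before the proposition together with $g^{\beta \alpha} = (g^{\alpha \beta})^{-1}$ for all indices.

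The first identity then follows from associativity alone: substituting $h^{\beta \gamma \alpha} = g^{\beta \gamma} g^{\gamma \alpha} g^{\alpha \beta}$ gives $g^{\alpha \beta} h^{\beta \gamma \alpha} (g^{\alpha \beta})^{-1} = g^{\alpha \beta} g^{\beta \gamma} g^{\gamma \alpha} = h^{\alpha \beta \gamma}$. For the second I would apply $g^{\sigma \tau} = (g^{\tau \sigma})^{-1}$ to each factor of $h^{\alpha \gamma \beta} = g^{\alpha \gamma} g^{\gamma \beta} g^{\beta \alpha}$, obtaining $(g^{\gamma \alpha})^{-1}(g^{\beta \gamma})^{-1}(g^{\alpha \beta})^{-1} = (g^{\alpha \beta} g^{\beta \gamma} g^{\gamma \alpha})^{-1} = (h^{\alpha \beta \gamma})^{-1}$. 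The third is the $g^{\alpha \beta}$-conjugate of the second: in $(g^{\alpha \beta})^{-1}(h^{\alpha \beta \gamma})^{-1}g^{\alpha \beta}$ the outer pair $(g^{\alpha \beta})^{-1}g^{\alpha \beta}$ telescopes, leaving $(g^{\alpha \beta})^{-1}(g^{\gamma \alpha})^{-1}(g^{\beta \gamma})^{-1} = g^{\beta \alpha} g^{\alpha \gamma} g^{\gamma \beta} = h^{\beta \alpha \gamma}$.

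For the quadruple-intersection relation I would write both sides in terms of the $g^{\sigma \tau}$ and telescope. The left-hand side $(g^{\alpha \delta} g^{\delta \gamma} g^{\gamma \alpha})(g^{\alpha \gamma} g^{\gamma \beta} g^{\beta \alpha})$ collapses to $g^{\alpha \delta} g^{\delta \gamma} g^{\gamma \beta} g^{\beta \alpha}$ via $g^{\gamma \alpha} g^{\alpha \gamma} = 1_{\S}$, while the right-hand side $(g^{\alpha \delta} g^{\delta \beta} g^{\beta \alpha}) g^{\alpha \beta}(g^{\beta \delta} g^{\delta \gamma} g^{\gamma \beta})(g^{\alpha \beta})^{-1}$ collapses to the same expression via the successive cancellations $g^{\beta \alpha} g^{\alpha \beta} = 1_{\S}$, $g^{\delta \beta} g^{\beta \delta} = 1_{\S}$ and $(g^{\alpha \beta})^{-1} = g^{\beta \alpha}$; comparing the two proves the identity. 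As a consistency check I would note that all these manipulations keep the result inside $J_x$: products remain in $J_x$ because it is a group, and conjugation by $g^{\alpha \beta} \in G_x$ preserves $J_x$ because $J^0_x$ is normal in $G_x$ and $\mathfrak j^1_x$ is an ideal of $\mathfrak g_x$. I expect the only genuinely delicate step to be the normalisation above --- establishing that $g^{\beta \alpha}$ may be taken equal to $(g^{\alpha \beta})^{-1}$ without breaking continuity or compatibility with $\rho_E$ --- since, once that is in place, the four relations are a routine telescoping computation; a secondary point of care is keeping the order of cancellations straight in the last relation.
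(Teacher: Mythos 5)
Your proof is correct and follows essentially the same route as the paper's: every relation is obtained by expanding $h^{\alpha \beta \gamma} = g^{\alpha \beta} g^{\beta \gamma} g^{\gamma \alpha}$ and telescoping with $g^{\beta \alpha} = (g^{\alpha \beta})^{-1}$. The only differences are cosmetic --- you make explicit the normalisation $g^{\beta \alpha} = (g^{\alpha \beta})^{-1}$ that the paper uses tacitly, and you verify the quadruple-overlap identity by fully expanding both sides rather than by comparing the paper's two factorisations of $g^{\alpha \delta}$.
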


\begin{proof}
\begin{equation}
h^{\alpha \beta \gamma} = g^{\alpha \beta} g^{\beta \gamma} g^{\gamma \alpha} = g^{\alpha \beta} g^{\beta \gamma} g^{\gamma \alpha} g^{\alpha \beta} g^{\beta \alpha} = g^{\alpha \beta} h^{\beta \gamma \alpha} g^{\beta \alpha} 
\end{equation}
\begin{equation}
h^{\beta \alpha \gamma} = g^{\beta \alpha} g^{\alpha \gamma} g^{\gamma \beta} = g^{\beta \alpha} g^{\alpha \gamma} g^{\gamma \beta} g^{\beta \alpha} g^{\alpha \beta} = (g^{\alpha \beta})^{-1} (h^{\alpha \beta \gamma})^{-1} g^{\alpha \beta}
\end{equation}
\begin{equation}
h^{\alpha \gamma \beta} = g^{\alpha \gamma} g^{\gamma \beta} g^{\beta \alpha} = (g^{\alpha \beta} g^{\beta \gamma} g^{\gamma \alpha})^{-1} = (h^{\alpha \beta \gamma})^{-1} 
\end{equation}
finally
\begin{equation}
g^{\alpha \delta} = h^{\alpha \delta \gamma} g^{\alpha \gamma} g^{\gamma \delta} = h^{\alpha \delta \gamma} h^{\alpha \gamma \beta} g^{\alpha \beta} g^{\beta \gamma} g^{\gamma \delta}
\end{equation}
and
\begin{equation}
g^{\alpha \delta} = h^{\alpha \delta \beta} g^{\alpha \beta} g^{\beta \delta} = h^{\alpha \delta \beta} g^{\alpha \beta} h^{\beta \delta \gamma} g^{\beta \gamma} g^{\gamma \delta}
\end{equation}
we conclude that
\begin{equation}
h^{\alpha \delta \gamma} h^{\alpha \gamma \beta} g^{\alpha \beta} = h^{\alpha \delta \beta} g^{\alpha \beta} h^{\beta \delta \gamma}
\end{equation}
\end{proof}

\section{Geometric phases of open quantum systems}
\subsection{Generator of the $C^*$-geometric phases}
Now we are able to define a generator of the geometric phases for open quantum systems by analogy with the geometric phases of dissipative quantum systems: $\mathcal A^\alpha \in \Omega^1(U^\alpha,\mathcal L(\S))$ is a solution of the following equation
\begin{equation}
\label{geomphase}
\mathcal A^\alpha \|\phi_E^\alpha\|^2_* = \langle \phi_E^\alpha |d\phi_E^\alpha \rangle_*
\end{equation}

\begin{prop}
The $C^*$-geometric phase generator and the mixed eigenstate are related by the following equation
\begin{equation}
\label{geomphasematdens}
d\rho_E^\alpha = \mathcal A^\alpha \rho_E^\alpha + \rho_E^\alpha (\mathcal A^\alpha)^\dagger
\end{equation}
\end{prop}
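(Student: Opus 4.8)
The plan is to differentiate the definition of the mixed eigenstate and then substitute the defining equation of the generator. Recall that $\rho_E^\alpha = \|\phi_E^\alpha\|^2_* = \langle \phi_E^\alpha|\phi_E^\alpha\rangle_* = \tr_{\E}(|\phi_E^\alpha\rrangle\llangle\phi_E^\alpha|)$. Since the $*$-inner product is $\mathbb C$-bilinear in the real and imaginary directions of $M$ and $\phi_E^\alpha$ is an $(\S\otimes\E)$-valued $0$-form, the exterior derivative $d$ acts as a derivation on the pairing, so that, working componentwise in a local frame if one wishes to be scrupulous,
\begin{equation}
d\rho_E^\alpha = \langle d\phi_E^\alpha|\phi_E^\alpha\rangle_* + \langle \phi_E^\alpha|d\phi_E^\alpha\rangle_* .
\end{equation}
The only thing to check here is that the Leibniz rule survives the non-commutative, $\mathcal L(\S)$-valued pairing, which it does because each $\partial_\mu$ is an ordinary derivation and no form of positive degree sits in more than one slot, so no Koszul sign intervenes.

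Next I would treat the two terms separately. For the second, equation (\ref{geomphase}) with $\|\phi_E^\alpha\|^2_* = \rho_E^\alpha$ gives directly $\langle\phi_E^\alpha|d\phi_E^\alpha\rangle_* = \mathcal A^\alpha\rho_E^\alpha$. For the first, I would invoke the ``hermiticity'' property $\langle\phi|\psi\rangle_* = \langle\psi|\phi\rangle_*^\dagger$ of the $*$-inner product, extended to $\mathcal L(\S)$-valued forms by applying $\dagger$ componentwise; this yields $\langle d\phi_E^\alpha|\phi_E^\alpha\rangle_* = \big(\langle\phi_E^\alpha|d\phi_E^\alpha\rangle_*\big)^\dagger = (\mathcal A^\alpha\rho_E^\alpha)^\dagger = (\rho_E^\alpha)^\dagger(\mathcal A^\alpha)^\dagger$. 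Finally, $\rho_E^\alpha$ is a density matrix, hence $(\rho_E^\alpha)^\dagger = \rho_E^\alpha$, and combining the two terms gives $d\rho_E^\alpha = \mathcal A^\alpha\rho_E^\alpha + \rho_E^\alpha(\mathcal A^\alpha)^\dagger$, which is (\ref{geomphasematdens}).

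The computation is a one-line differentiation, so I do not anticipate a genuine obstacle; the only points deserving attention are bookkeeping ones, namely the order reversal $(\mathcal A^\alpha\rho_E^\alpha)^\dagger = (\rho_E^\alpha)^\dagger(\mathcal A^\alpha)^\dagger$ and the fact that, although (\ref{geomphase}) does not pin $\mathcal A^\alpha$ down uniquely when $\rho_E^\alpha$ fails to be invertible (it constrains $\mathcal A^\alpha$ only on $\setim\rho_E^\alpha$), the identity (\ref{geomphasematdens}) holds for every admissible solution, since only $\mathcal A^\alpha\rho_E^\alpha$ and its adjoint enter.
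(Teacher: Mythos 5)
Your proposal is correct and follows essentially the same route as the paper: differentiate $\rho_E^\alpha=\tr_{\E}(|\phi_E^\alpha\rrangle\llangle\phi_E^\alpha|)$ by the Leibniz rule, identify the two resulting terms as $\langle\phi_E^\alpha|d\phi_E^\alpha\rangle_*$ and its adjoint, and substitute the defining equation of $\mathcal A^\alpha$ together with $(\rho_E^\alpha)^\dagger=\rho_E^\alpha$. Your closing remark that the identity holds for every admissible solution of (\ref{geomphase}) even when $\rho_E^\alpha$ is singular is a worthwhile observation the paper leaves implicit.
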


\begin{proof}
\begin{eqnarray}
d \rho_E & = & \tr_{\E} \left( |d\phi_E \rrangle \llangle \phi_E | + |\phi_E \rrangle \llangle d\phi_E |\right) \nonumber \\
& = & \langle \phi_E | d\phi_E \rangle_* + \langle \phi_E | d\phi_E \rangle_*^\dagger \nonumber \\
& = & \mathcal A \rho_E + (\mathcal A \rho_E)^\dagger
\end{eqnarray}
\end{proof}

\begin{prop}
\label{gaugetransf}
Under a gauge transformation $\tilde \phi_E^\alpha(x) = g(x)k(x) \phi_E^\alpha(x)$ with $g \in \Gamma(U^\alpha,P^\alpha_G)$ and $k \in \Gamma(U^\alpha,P^\alpha_K)$ the $C^*$-geometric phase generator is transformed as follows:
\begin{equation}
\tilde \mathcal A^\alpha = g\mathcal A^\alpha g^{-1} + dgg^{-1} + g \eta g^{-1}
\end{equation}
with $\eta \in \Omega^1(U^\alpha, P^\alpha_{\mathfrak j}) = \Omega^1(U^\alpha) \otimes_M \Gamma(U^\alpha,P^\alpha_{\mathfrak j})$ solution of
\begin{equation}
\eta \|\phi_E^\alpha \|^2_* = \langle \phi_E^\alpha |k^{-1}dk \phi_E^\alpha \rangle_*
\end{equation}
\end{prop}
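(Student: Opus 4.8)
The plan is to substitute the transformed $*$-eigenvector $\tilde\phi_E^\alpha=g(x)k(x)\phi_E^\alpha(x)$ directly into the defining equation (\ref{geomphase}) and simplify, using three elementary facts: $g(x)\in G_x$ acts only on the system factor $\S$; $k(x)\in K_x$ acts as a unitary on the environment factor $\E$ and hence commutes with every operator coming from $\mathcal L(\S)$, in particular with $g$ and with $dg$; and the partial trace $\tr_{\E}$ is cyclic with respect to operators of the form $1_{\S}\otimes a$, so that $\tr_{\E}(kXk^{-1})=\tr_{\E}(X)$ and $\tr_{\E}((dk)Xk^{-1})=\tr_{\E}(k^{-1}(dk)X)$.

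First I would record $\|\tilde\phi_E^\alpha\|^2_*=\tilde\rho_E^\alpha=g\rho_E^\alpha g^\dagger$ (the transformation law of the mixed eigenstate already established above) and expand
\[ d\tilde\phi_E^\alpha=(dg)k\phi_E^\alpha+g(dk)\phi_E^\alpha+gk\,d\phi_E^\alpha . \]
Then I would compute $\langle\tilde\phi_E^\alpha|d\tilde\phi_E^\alpha\rangle_*=\tr_{\E}\big(|d\tilde\phi_E^\alpha\rrangle\llangle\tilde\phi_E^\alpha|\big)$ with $\llangle\tilde\phi_E^\alpha|=\llangle\phi_E^\alpha|k^{-1}g^\dagger$; factoring the $\S$-operators $dg$, $g$, $g^\dagger$ out of $\tr_{\E}$ on the appropriate side and using cyclicity to kill the pairs $k,k^{-1}$, the three terms collapse to
\[ \langle\tilde\phi_E^\alpha|d\tilde\phi_E^\alpha\rangle_*=(dg)\rho_E^\alpha g^\dagger+g\,\langle\phi_E^\alpha|k^{-1}dk\,\phi_E^\alpha\rangle_*\,g^\dagger+g\,\langle\phi_E^\alpha|d\phi_E^\alpha\rangle_*\,g^\dagger . \]
By (\ref{geomphase}) the last inner product is $\mathcal A^\alpha\rho_E^\alpha$, and the middle one is by definition $\eta\rho_E^\alpha$. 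Writing $(dg)\rho_E^\alpha g^\dagger=(dg)g^{-1}\cdot g\rho_E^\alpha g^\dagger$ and similarly for the other two, the right-hand side becomes $\big(g\mathcal A^\alpha g^{-1}+(dg)g^{-1}+g\eta g^{-1}\big)\|\tilde\phi_E^\alpha\|^2_*$, so $\tilde{\mathcal A}^\alpha:=g\mathcal A^\alpha g^{-1}+(dg)g^{-1}+g\eta g^{-1}$ is a solution of (\ref{geomphase}) for $\tilde\phi_E^\alpha$ (the unique one when $\rho_E^\alpha$ is invertible, and otherwise one among those allowed by the ambiguity of (\ref{geomphase})).

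It then remains to justify the two assertions about $\eta$. For solvability of $\eta\|\phi_E^\alpha\|^2_*=\langle\phi_E^\alpha|k^{-1}dk\,\phi_E^\alpha\rangle_*$ I would check that the right-hand side annihilates $\ker\rho_E^\alpha$: writing $\phi_E^\alpha=\sum_i u_i\otimes w_i$, a vector $v$ lies in $\ker\rho_E^\alpha$ precisely when $\sum_i\langle v|u_i\rangle_{\S}\,w_i=0$, and this forces $\langle\phi_E^\alpha|(1_{\S}\otimes a)\phi_E^\alpha\rangle_*\,v=0$ for every $a\in\mathcal L(\E)$; hence $\langle\phi_E^\alpha|k^{-1}dk\,\phi_E^\alpha\rangle_*$ lies in the range of right multiplication by $\rho_E^\alpha$ and $\eta$ exists. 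For the $P^\alpha_{\mathfrak j}$-valuedness I would apply the mixed-state/generator identity (\ref{geomphasematdens}) to the $*$-eigenvector $k\phi_E^\alpha$ — whose mixed eigenstate is again $\rho_E^\alpha$ and whose generator is $\mathcal A^\alpha+\eta$ (this is the computation above specialised to $g=1$) — and subtract the same identity for $\phi_E^\alpha$; this yields $\eta\rho_E^\alpha+\rho_E^\alpha\eta^\dagger=0$, which in the diagonalization basis of $\rho_E^\alpha$ is exactly the block-triangular condition cutting out $\mathfrak j^0_x\oplus\mathfrak j^1_x$ inside $\mathcal L(\S)$. Membership of $\eta$ in $\mathfrak g_x$ (so that it is genuinely $\mathfrak j_x$-valued rather than just $\mathfrak j^0_x\oplus\mathfrak j^1_x$-shaped) is the point where one uses that $k^{-1}dk$ is $P^\alpha_{\mathfrak k}$-valued together with the definition of $G_x$.

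Everything up to the introduction of $\eta$ is routine bookkeeping with the $C^*$-module inner product; I expect the only genuine obstacle to be this last structural step — that the $\mathfrak k_x$-valued Maurer–Cartan object $k^{-1}dk$ produces, through the inner product, an $\mathfrak j_x$-valued $\eta$. This is precisely the correspondence between $K_x$ and $J_x$ that the text announces will become transparent in the next section, and once it is available the proposition follows by assembling the displayed identities.
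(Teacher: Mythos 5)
Your proposal is correct and follows essentially the same route as the paper: substitute $\tilde\phi_E^\alpha = gk\phi_E^\alpha$ into the defining equation, expand $d\tilde\phi_E^\alpha$, pull the $\mathcal L(\S)$-operators out of the partial trace and cancel the $k,k^{-1}$ pairs by cyclicity, and read off $\tilde{\mathcal A}^\alpha = g\mathcal A^\alpha g^{-1}+dgg^{-1}+g\eta g^{-1}$. The only divergence is minor: the paper gets $\eta\rho_E^\alpha+\rho_E^\alpha\eta^\dagger=0$ directly from $\langle\phi_E^\alpha|k^{-1}dk\,\phi_E^\alpha\rangle_*^\dagger=-\langle\phi_E^\alpha|k^{-1}dk\,\phi_E^\alpha\rangle_*$ (antiselfadjointness of $k^{-1}dk$), whereas you derive the same identity by applying equation (\ref{geomphasematdens}) to $k\phi_E^\alpha$ and subtracting, and your added check that the equation defining $\eta$ is solvable (its right-hand side annihilates $\ker\rho_E^\alpha$) is a point the paper leaves implicit.
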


$\Gamma(U^\alpha,P^\alpha_G)$ and $\Gamma(U^\alpha,P^\alpha_K)$ denote the sets of sections of $P^\alpha_G$ and $P^\alpha_K$.

\begin{proof}
\begin{eqnarray}
\tilde {\mathcal A} \| \tilde \phi_E \|^2_* & = & \langle \tilde \phi_E | d \tilde \phi_E \rangle_* \\
\Rightarrow \tilde {\mathcal A} g \|\phi_E\|^2_* g^\dagger & = & \tr_{\E} (k dg|\phi_E \rrangle \llangle \phi_E|g^\dagger k^{-1}) \nonumber \\
& & \quad + \tr_{\E} (g dk|\phi_E \rrangle \llangle \phi_E|k^{-1}g^\dagger) \nonumber \\
& & \quad + \tr_{\E} (gk|d\phi_E \rrangle \llangle \phi_E|k^{-1} g^\dagger) \nonumber \\
\Rightarrow \tilde {\mathcal A} g \|\phi_E\|^2_* g^\dagger & = & dg \|\phi_E \|^2_* g^\dagger + g {\mathcal A} \|\phi_E\|^2_* g^\dagger + g \eta \|\phi_E\|^2_* g^\dagger \nonumber \\
\Rightarrow g^{-1} \tilde {\mathcal A} g \|\phi_E\|^2_* & = & (g^{-1}dg + {\mathcal A}+ \eta) \|\phi_E \|^2_* \nonumber \\
\Rightarrow \tilde {\mathcal A} & = & dg g^{-1} + g{\mathcal A}g^{-1} + g\eta g^{-1}
\end{eqnarray}

\begin{eqnarray}
& & \langle \phi_E|k^{-1}dk|\phi_E \rangle_*^\dagger = \langle \phi_E |d(k)^{-1}k|\phi_E \rangle_* \nonumber \\
& \Rightarrow & \langle \phi_E|k^{-1}dk|\phi_E \rangle_*^\dagger = - \langle \phi_E|k^{-1}dk|\phi_E \rangle_* \nonumber \\
& \Rightarrow & \eta \rho_E + \rho_E \eta^\dagger = 0 \nonumber \\
& \Rightarrow & \eta \in \Omega^1(U^\alpha,P^\alpha_{\mathfrak j_x})
\end{eqnarray}
\end{proof}
Remark : $dgg^{-1} \in \Omega^1(U^\alpha,P^\alpha_{\mathfrak g})$. We see that the $K_x$-gauge transformations on the pure state of the universe $\phi_E$ induce $\mathfrak j_x$-gauge transformations on the $C^*$-geometric phase generator. $G_x$ is the gauge group associated with the dynamics of the quantum system (the equivalent of the gauge group for closed quantum systems) whereas $J_x$ is the gauge group associated with the decoherence induced by the environment (by its relation with $K_x$ the gauge group associated with the dynamics of the environment).\\

When $\rho_E^\alpha$ is invertible the equation (\ref{geomphase}) has a single solution:
\begin{equation}
\mathcal A^\alpha = \langle \phi^\alpha_E |d\phi_E^\alpha \rangle_* (\|\phi_E^\alpha \|^2_*)^{-1}
\end{equation}
but there can be difficulty in applying this, particularly if the precise description of the universe is unknown. This problem is partially solved by the fact that we can prove (see \ref{annexeB}) that the $C^*$-geometric phase generator is equal to $\breve {\mathcal A}^\alpha = \frac{1}{2} d\rho_E^\alpha (\rho_E^\alpha)^{-1}$ modulo a $\mathfrak j_x$-gauge transformation. If $\rho_E^\alpha$ is not invertible, equation (\ref{geomphase}) can have several solutions, but they are equals modulo a $\mathfrak j_x$-gauge transformation. This result and other mathematical properties of the $C^*$-geometric phase generator can be found in   \ref{annexeB}.

\begin{prop}
We assume that $E$ is non-degenerate. The $C^*$-geometric phase generator can be decomposed as $\mathcal A^\alpha = A^\alpha + R^\alpha$ where the reduced generator $A^\alpha \in \Omega^1(U^\alpha,P^\alpha_{\mathfrak g})$ is called the gauge potential (or the $G$-potential) and where the remainder $R^\alpha$ is almost zero in the sense $\tr_{\S}(\rho_E^\alpha R^\alpha) = 0$.
\end{prop}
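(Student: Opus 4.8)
The plan is to split the differential $d\phi_E^\alpha$ along the $\llangle\cdot|\cdot\rrangle$-orthogonal decomposition $\SE = \ker(H(x)-E(x)\otimes 1_\E)\oplus\ker(H(x)-E(x)\otimes 1_\E)^\perp$ and to track, through (\ref{geomphase}), the effect of each summand on $\mathcal A^\alpha$: the component of $d\phi_E^\alpha$ that stays inside the eigenspace will produce the $\mathfrak g$-valued part $A^\alpha$, while the ``leakage'' component transverse to the eigenspace will produce a remainder which is automatically $\rho_E^\alpha$-traceless. Concretely, write $d\phi_E^\alpha = d\phi_\parallel^\alpha + d\phi_\perp^\alpha$ with $d\phi_\parallel^\alpha$ the orthogonal projection of $d\phi_E^\alpha$ onto $\ker(H(x)-E(x)\otimes 1_\E)$; since $\dim\ker(H(x)-E(x)\otimes 1_\E)$ is locally constant, both terms depend smoothly on $x$ over $U^\alpha$.

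Next I would use the non-degeneracy of $E$. Because $G_x\times K_x$ acts transitively on $\ker(H(x)-E(x)\otimes 1_\E)$, the orbit map $(g,k)\mapsto(g\otimes 1_\E)(1_\S\otimes k)\phi_E^\alpha(x)$ is a submersion onto that kernel, hence its differential at the unit, $(\xi,\kappa)\mapsto(\xi\otimes 1_\E + 1_\S\otimes\kappa)\phi_E^\alpha$, is surjective onto $\ker(H(x)-E(x)\otimes 1_\E)$. Choosing a smooth right inverse over the contractible chart $U^\alpha$ and applying it to $d\phi_\parallel^\alpha$ gives $d\phi_\parallel^\alpha = (B^\alpha\otimes 1_\E + 1_\S\otimes\kappa^\alpha)\phi_E^\alpha$ with $B^\alpha\in\Omega^1(U^\alpha,P^\alpha_{\mathfrak g})$ and $\kappa^\alpha\in\Omega^1(U^\alpha,P^\alpha_{\mathfrak k})$. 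Applying $\langle\phi_E^\alpha|\cdot\rangle_*$ to this identity and using $\tr_\E((A\otimes 1_\E)X) = A\,\tr_\E(X)$ for $A\in\mathcal L(\S)$, one gets $\langle\phi_E^\alpha|(B^\alpha\otimes 1_\E)\phi_E^\alpha\rangle_* = B^\alpha\rho_E^\alpha$ and $\langle\phi_E^\alpha|(1_\S\otimes\kappa^\alpha)\phi_E^\alpha\rangle_* =: \eta\rho_E^\alpha$. As $\kappa^\alpha$ takes values in $\mathfrak k_x$ (antiselfadjoint for $\langle\cdot|\cdot\rangle_\E$), the computation in the proof of Proposition \ref{gaugetransf} applies verbatim and yields $\eta\rho_E^\alpha + \rho_E^\alpha\eta^\dagger = 0$, so $\eta$ may be chosen in $\Omega^1(U^\alpha,P^\alpha_{\mathfrak j})$; since $\mathfrak j_x$ is a subalgebra of $\mathfrak g_x$ (by construction of $\mathfrak j_x = \mathfrak j_x^0\oplus\mathfrak j_x^1$, cf. \ref{annexeA}), this is in particular an element of $\Omega^1(U^\alpha,P^\alpha_{\mathfrak g})$. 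I then set $A^\alpha := B^\alpha + \eta\in\Omega^1(U^\alpha,P^\alpha_{\mathfrak g})$ and define $R^\alpha$ through $R^\alpha\rho_E^\alpha := \langle\phi_E^\alpha|d\phi_\perp^\alpha\rangle_*$. Equation (\ref{geomphase}) then reads $\mathcal A^\alpha\rho_E^\alpha = \langle\phi_E^\alpha|d\phi_\parallel^\alpha\rangle_* + \langle\phi_E^\alpha|d\phi_\perp^\alpha\rangle_* = (A^\alpha + R^\alpha)\rho_E^\alpha$, so $\mathcal A^\alpha = A^\alpha + R^\alpha$ is an admissible solution (when $\rho_E^\alpha$ is singular, (\ref{geomphase}) determines $\mathcal A^\alpha$ only modulo a $\mathfrak j_x$-term and the above is one representative, cf. \ref{annexeB}).

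It then remains to verify the ``almost zero'' property of $R^\alpha$. Using cyclicity of $\tr_\S$ and $\tr_\S\circ\tr_\E = \tr_{\SE}$,
\[
\tr_\S(\rho_E^\alpha R^\alpha) = \tr_\S(R^\alpha\rho_E^\alpha) = \tr_\S\big(\langle\phi_E^\alpha|d\phi_\perp^\alpha\rangle_*\big) = \tr_{\SE}\big(|d\phi_\perp^\alpha\rrangle\llangle\phi_E^\alpha|\big) = \llangle\phi_E^\alpha|d\phi_\perp^\alpha\rrangle ,
\]
which vanishes since $\phi_E^\alpha\in\ker(H(x)-E(x)\otimes 1_\E)$ while $d\phi_\perp^\alpha$ lies in the orthogonal complement of that kernel. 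I expect the genuine obstacle to be the middle step of the second paragraph — establishing that the push-down $\eta$ of the $\mathfrak k_x$-part truly lands in $\mathfrak g_x$ and not merely in $\mathcal L(\S)$ — since this is exactly what makes $A^\alpha$ a bona fide $\mathfrak g$-valued gauge potential; it relies on the structural fact $\mathfrak j_x\subset\mathfrak g_x$ already used in Proposition \ref{gaugetransf} and proved in \ref{annexeA}. The remaining ingredients, namely the smooth choice of a local right inverse for the orbit-map differential and the bookkeeping for a singular $\rho_E^\alpha$, are routine.
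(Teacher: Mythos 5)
Your proof is correct and follows the paper's skeleton: you split $d\phi_E^\alpha$ with the $\llangle\cdot|\cdot\rrangle$-orthogonal projector $P_E^\alpha$ onto $\ker(H-E\otimes 1_{\E})$, define $R^\alpha$ from the transverse component, and your final computation $\tr_{\S}(\rho_E^\alpha R^\alpha)=\llangle \phi_E^\alpha|(1-P_E^\alpha)d\phi_E^\alpha\rrangle=\llangle(1-P_E^\alpha)\phi_E^\alpha|d\phi_E^\alpha\rrangle=0$ is exactly the paper's. Where you diverge is the middle step, the $\mathfrak g_x$-valuedness of $A^\alpha$. The paper gets this in one stroke from Property \ref{inner}: for any two sections of $\ker(H-E\otimes 1_{\E})$ the $*$-inner product lands in $\Gamma(U^\alpha,P^\alpha_{\mathfrak g})$, because transitivity of $G_x\times K_x$ forces $\mathcal L(\ker(H-E\otimes 1_{\E}))=\mathrm{Env}(\mathfrak g_x\oplus\mathfrak k_x)$, so $|P_E d\phi_E\rrangle\llangle\phi_E|=\sum_i X_i\otimes Y_i$ with $X_i\in\mathfrak g_x$, $Y_i\in\mathfrak k_x$, and the partial trace gives $\sum_i\tr_{\E}(Y_i)X_i\in\mathfrak g_x$. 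You instead use the infinitesimal form of the same transitivity (surjectivity of the orbit-map differential) to write $P_E d\phi_E=(\xi\otimes 1_{\E}+1_{\S}\otimes\kappa)\phi_E$ and then push each piece through $\langle\phi_E|\cdot\rangle_*$, treating the $\kappa$-contribution by the $\eta$-argument of Property \ref{gaugetransf}. Both arguments rest on non-degeneracy; the paper's is more economical (one appeal to an appendix lemma), while yours makes the $\mathfrak g$-part/$\mathfrak k$-part origin of $A^\alpha$ explicit and exhibits directly which piece of the gauge potential is a $\mathfrak j_x$-ambiguity. One caveat you correctly flag: concluding $\eta\in\Omega^1(U^\alpha,P^\alpha_{\mathfrak j})$ from $\eta\rho_E^\alpha+\rho_E^\alpha\eta^\dagger=0$ alone presupposes $\eta\in\mathfrak g_x$ (membership in $\mathfrak j_x^0$ requires both conditions); this looseness is inherited from the paper's own proof of Property \ref{gaugetransf}, and in your setup it can be closed by noting that $(1_{\S}\otimes\kappa)\phi_E^\alpha$ stays in $\ker(H-E\otimes 1_{\E})$ so that Property \ref{inner} applies to $\langle\phi_E^\alpha|(1_{\S}\otimes\kappa)\phi_E^\alpha\rangle_*$ as well.
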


\begin{proof}
Let $P_E^\alpha(x) \in \mathcal L(\S \otimes \E)$ be the orthogonal projector (in sense of $\llangle .|. \rrangle$) on $\ker(H(x)-E(x)\otimes 1_{\E})$ continuous with respect to $x \in U^\alpha$. 
\begin{eqnarray}
\mathcal A \|\phi_E\|^2_* & = & \langle \phi_E|d\phi_E \rangle_* \\
& = & \langle \phi_E|P_E d\phi_E \rangle_* + \langle \phi_E|(1_{\S \otimes \E}-P_E)d\phi_E \rangle_*
\end{eqnarray}
We set then $A$ and $R$ as the solutions of the following equations
\begin{eqnarray}
A \|\phi_E \|^2_* & = & \langle \phi_E|P_E d\phi_E \rangle_* \\
R \|\phi_E \|^2_* & = & \langle \phi_E|(1-P_E)d\phi_E \rangle_*
\end{eqnarray}
By the property \ref{inner} we have $\langle \phi_E^\alpha|P_E^\alpha d\phi_E^\alpha \rangle_* \in \Omega^1(U^\alpha,P^\alpha_{\mathfrak g})$ and then $A^\alpha \in \Omega^1(U^\alpha,P^\alpha_{\mathfrak g})$.
\begin{eqnarray}
\tr_{\S}(\rho_E R) & = & \tr_{\S}\langle \phi_E|(1-P_E)d\phi_E \rangle_* \nonumber \\
& = & \llangle \phi_E|(1-P_E) d\phi_E \rrangle \nonumber \\
& = & \llangle (1-P_E) \phi_E|d\phi_E \rrangle \nonumber \\
& = & 0
\end{eqnarray}
\end{proof}

\begin{prop}
Under a gauge transformation $\tilde \phi_E^\alpha(x) = g(x)k(x) \phi_E^\alpha(x)$ with $g \in \Gamma(U^\alpha,P^\alpha_G)$ and $k \in \Gamma(U^\alpha,P^\alpha_K)$ the $G$-potential is transformed as follows:
\begin{equation}
\tilde A^\alpha = g A^\alpha g^{-1} + dgg^{-1} + g \eta g^{-1}
\end{equation}
with $\eta \in \Omega^1(U^\alpha, P^\alpha_{\mathfrak j})$ solution of
\begin{equation}
\eta \|\phi_E^\alpha \|^2_* = \langle \phi_E^\alpha |k^{-1}dk \phi_E^\alpha \rangle_*
\end{equation}
and the remainder is transformed as follows :
\begin{equation}
\tilde R^\alpha = gR^\alpha g^{-1}
\end{equation}
\end{prop}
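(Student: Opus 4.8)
The plan is to rerun the computation of Property~\ref{gaugetransf} with the $\llangle\cdot|\cdot\rrangle$-orthogonal projector $P_E^\alpha$ onto $\ker(H-E\otimes 1_{\E})$ inserted throughout, noting that $P_E^\alpha$ is itself \emph{unchanged} by a gauge transformation since neither $H$ nor $E$ is. Concretely, I would start from the defining relations
\[
\tilde A^\alpha \|\tilde\phi_E^\alpha\|^2_* = \langle\tilde\phi_E^\alpha|P_E^\alpha d\tilde\phi_E^\alpha\rangle_* , \qquad \tilde R^\alpha \|\tilde\phi_E^\alpha\|^2_* = \langle\tilde\phi_E^\alpha|(1_{\S\otimes\E}-P_E^\alpha)d\tilde\phi_E^\alpha\rangle_* ,
\]
substitute $\tilde\phi_E^\alpha = gk\phi_E^\alpha$, expand $d\tilde\phi_E^\alpha = (dg)k\phi_E^\alpha + g(dk)\phi_E^\alpha + gk\,d\phi_E^\alpha$ by the Leibniz rule, and use $\|\tilde\phi_E^\alpha\|^2_* = g\rho_E^\alpha g^\dagger$, exactly as in the proof of Property~\ref{gaugetransf}. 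Equivalently, one may prove $\tilde R^\alpha = gR^\alpha g^{-1}$ first and then deduce the transformation of $A^\alpha$ by subtracting it from $\tilde{\mathcal A}^\alpha = g\mathcal A^\alpha g^{-1}+dgg^{-1}+g\eta g^{-1}$.

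The two inhomogeneous contributions are the easy part. Since $G_x$ leaves $\ker(H-E\otimes 1_{\E})$ invariant, and since $dg\,g^{-1}\in\mathfrak g_x$ (recall $dgg^{-1}\in\Omega^1(U^\alpha,P^\alpha_{\mathfrak g})$) and $dk\,k^{-1}\in\mathfrak k_x$ leave it invariant as well ($\mathfrak k_x$ commuting with $H$ and with $E\otimes 1_{\E}$), both $(dg)k\phi_E^\alpha$ and $g(dk)\phi_E^\alpha$ lie in $\ker(H-E\otimes 1_{\E})$; hence $P_E^\alpha$ acts on them as the identity and $1_{\S\otimes\E}-P_E^\alpha$ annihilates them, so they feed only $\tilde A^\alpha$. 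Moreover $k$, being unitary and commuting with $H$ and with $E\otimes 1_{\E}$, commutes with $P_E^\alpha$, so — using the cyclicity of $\tr_{\E}$ over operators acting on $\E$ — the factors $k$ drop out of the partial traces just as in Property~\ref{gaugetransf}, producing the inhomogeneous part $dgg^{-1}+g\eta g^{-1}$, with $\eta$ characterised and shown to lie in $\Omega^1(U^\alpha,P^\alpha_{\mathfrak j})$ by the same argument. Splitting the remaining term through $d\phi_E^\alpha = P_E^\alpha d\phi_E^\alpha + (1_{\S\otimes\E}-P_E^\alpha)d\phi_E^\alpha$, the summand $gk\,P_E^\alpha d\phi_E^\alpha$ again lies in $\ker(H-E\otimes 1_{\E})$ and contributes $gA^\alpha g^{-1}$ to $\tilde A^\alpha$.

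The genuinely delicate summand is $gk\,(1_{\S\otimes\E}-P_E^\alpha)d\phi_E^\alpha$: because $k$ commutes with $P_E^\alpha$ this vector equals $(1_{\S\otimes\E}-P_E^\alpha)(1_{\S}\otimes k)d\phi_E^\alpha$ and so lies in $(\ker(H-E\otimes 1_{\E}))^\perp$, but $g$ — invertible yet not unitary — need not preserve that orthogonal complement, so after the action of $g$ it acquires a component inside $\ker(H-E\otimes 1_{\E})$ too. Both targeted identities then reduce to the single statement that this cross term does not leak into the $P_E^\alpha$-channel, namely
\[
\tr_{\E}\left[ P_E^\alpha\,(g\otimes 1_{\E})\,(1_{\S\otimes\E}-P_E^\alpha)\,|d\phi_E^\alpha\rrangle\llangle\phi_E^\alpha| \right] = 0 .
\]
I expect this to be the main obstacle. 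I would attack it with the properties of $P_E^\alpha$ and of $\langle\cdot|\cdot\rangle_*$ gathered in \ref{annexeA} (notably property~\ref{inner}): the essential point is that $\langle\phi_E^\alpha|P_E^\alpha(\cdot)\rangle_*$ is intrinsically $\mathfrak g_x$-valued and $G_x$-equivariant, so under a gauge transformation the $(1_{\S\otimes\E}-P_E^\alpha)$-channel must transform by pure conjugation by $g$ and cannot contaminate the $P_E^\alpha$-channel.

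Once the cross term is disposed of, the rest is routine bookkeeping: one obtains $\tilde A^\alpha\,g\rho_E^\alpha g^\dagger = (dg+g\eta+gA^\alpha)\rho_E^\alpha g^\dagger$ and $\tilde R^\alpha\,g\rho_E^\alpha g^\dagger = gR^\alpha\rho_E^\alpha g^\dagger$; cancelling the invertible right factor $g^\dagger$, and then $\rho_E^\alpha$ when it is invertible — and, when it is not, interpreting the identities modulo a $\mathfrak j_x$-gauge transformation, consistently with the non-uniqueness of $\mathcal A^\alpha$ already noted and recalled in \ref{annexeB} — gives the two claimed formulas. As consistency checks, summing them recovers Property~\ref{gaugetransf}, $\tilde{\mathcal A}^\alpha = g\mathcal A^\alpha g^{-1}+dgg^{-1}+g\eta g^{-1}$, and one should verify that $\tilde R^\alpha$ remains compatible with the characterisation $\tr_{\S}(\tilde\rho_E^\alpha\tilde R^\alpha)=0$ of the remainder ($\tilde\rho_E^\alpha = g\rho_E^\alpha g^\dagger$), using the residual $\mathfrak j_x$-gauge freedom if needed.
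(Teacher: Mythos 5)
Your overall route is the same as the paper's: rerun the computation of Property~\ref{gaugetransf} with $P_E^\alpha$ inserted, observe that $(dg)k\phi_E^\alpha$ and $g(dk)\phi_E^\alpha$ remain inside $\ker(H-E\otimes 1_{\E})$ so that they feed only the inhomogeneous part of $\tilde A^\alpha$, and use that the unitary $k$ commutes with $P_E^\alpha$ so that it drops out of the partial traces. All of that is correct and is exactly what the paper does. The problem is the step you yourself single out as the main obstacle: you correctly reduce both claimed identities to
\[
\tr_{\E}\left[ P_E^\alpha\,(g\otimes 1_{\E})\,(1_{\S\otimes\E}-P_E^\alpha)\,|d\phi_E^\alpha\rrangle \llangle \phi_E^\alpha| \right] = 0 ,
\]
and then you do not prove it. The appeal to Property~\ref{inner} cannot do the job: since $P_E^\alpha(g\otimes 1_{\E})(1_{\S\otimes\E}-P_E^\alpha)\,d\phi_E^\alpha$ lies in $\ker(H-E\otimes 1_{\E})$, Property~\ref{inner} only tells you that the leaked term is $\mathfrak g_x$-valued --- i.e.\ a perfectly admissible contamination of the $A$-channel rather than an excluded one --- and when $\rho_E^\alpha$ is invertible the solutions of the defining equations are unique, so there is no residual $\mathfrak j_x$-freedom left to absorb it. ``Equivariance'' alone cannot force the cross term to vanish.

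What actually closes the argument in the paper is the stronger input $gP_E^\alpha = P_E^\alpha g$ --- commutation of $g\otimes 1_{\E}$ with the orthogonal projector, not merely invariance of its range --- asserted there to hold ``by definition''. Granting it, $P_E^\alpha(g\otimes 1_{\E})(1_{\S\otimes\E}-P_E^\alpha)=0$ identically and your computation goes through verbatim. Your unease about this point is legitimate: for a non-normal $g\in G_x$, invariance of $\ker(H-E\otimes 1_{\E})$ under $g$ and $g^{-1}$ does not imply invariance of its orthogonal complement, hence does not imply commutation with $P_E^\alpha$. But to reproduce the proposition as stated you must either take that commutation as part of what $G_x$ is (as the paper does) or supply a genuinely new argument for the displayed identity; the proposal as written does neither, so the proof is incomplete at precisely the point on which everything hinges.
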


\begin{proof}
By definition, $gP_E = P_Eg$ and $kP_E = P_Ek$, and since $dgg^{-1} \in \Omega^1(U^\alpha,P^\alpha_{\mathfrak g})$ and $dkk^{-1} \in \Omega^1(U^\alpha,P^\alpha_{\mathfrak k})$ we have $(1_{\S \otimes \E}-P_E) dgg^{-1} = 0$ and $(1_{\S \otimes \E} - P_E)dkk^{-1}=0$. The rest of the proof consists to remake the proof of property \ref{gaugetransf} by considering these identities.
\end{proof}

We consider now the gauge transformations at the intersection of several charts. Let $g^{\alpha \beta} \in \mathcal C^\infty(U^\alpha \cap U^\beta) \otimes_M \bigsqcup_x G_x$ and $k^{\alpha \beta} \in \mathcal C^\infty(U^\alpha \cap U^\beta) \otimes_M \bigsqcup_x K_x$ be the 1-transition functions. We have clearly $\forall x \in U^\alpha \cap U^\beta$
\begin{equation}
A^\beta = (g^{\alpha \beta})^{-1} A^\alpha g^{\alpha \beta} - (g^{\alpha \beta})^{-1} dg^{\alpha \beta} + (g^{\alpha \beta})^{-1} \eta^{\alpha \beta} g^{\alpha \beta}
\end{equation}
where $\eta^{\alpha \beta} \in \Omega^1(U^\alpha \cap U^\beta) \otimes \bigsqcup_x \mathfrak j_x$ is a solution of the equation:
\begin{equation}
\eta^{\alpha \beta} \|\phi^\alpha_E\|^2_* = \langle \phi^\alpha_E|k^{\alpha \beta}d(k^{\alpha \beta})^{-1}|\phi^\alpha_E \rangle_*
\end{equation}
$\eta^{\alpha \beta}$ is called the potential-transformation (or the $J$-potential) and is associated with the decoherence induced by the environment (since this term vanishes for a closed quantum system).

\begin{prop}
$\forall x \in U^\alpha \cap U^\beta \cap U^\gamma$, the potential-transformation satisfies
\begin{eqnarray}
& & \eta^{\alpha \beta} + g^{\alpha \beta} \eta^{\beta \gamma} (g^{\alpha \beta})^{-1} - h^{\alpha \beta \gamma} \eta^{\alpha \gamma} (h^{\alpha \beta \gamma})^{-1} \nonumber \\
& & \quad = dh^{\alpha \beta \gamma}(h^{\alpha \beta \gamma})^{-1} - [A^\alpha,h^{\alpha \beta \gamma}](h^{\alpha \beta \gamma})^{-1}
\end{eqnarray}
where $h^{\alpha \beta \gamma} \in \mathcal C^\infty(U^\alpha \cap U^\beta \cap U^\gamma) \otimes_M \bigsqcup_x J_x$ is the 2-transition functions.
\end{prop}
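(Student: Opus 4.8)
\emph{(Proof proposal.)} The plan is to compute the reduced generator $A^\gamma$ in two ways starting from $A^\alpha$ --- once directly via the transition from $U^\alpha$ to $U^\gamma$, and once by passing through $U^\beta$ --- and then to equate the two results. The only ingredients are the single-step transition law established just above, $A^\beta = (g^{\alpha\beta})^{-1}A^\alpha g^{\alpha\beta} - (g^{\alpha\beta})^{-1}dg^{\alpha\beta} + (g^{\alpha\beta})^{-1}\eta^{\alpha\beta}g^{\alpha\beta}$, the definition $h^{\alpha\beta\gamma} = g^{\alpha\beta}g^{\beta\gamma}g^{\gamma\alpha}$ of the $2$-transition functions, and the relation $g^{\gamma\alpha} = (g^{\alpha\gamma})^{-1}$ already used in the proof of the previous proposition.

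First I would compose two transition transformations. Writing $u = g^{\alpha\beta}g^{\beta\gamma}$ and substituting the $\alpha\to\beta$ law into the $\beta\to\gamma$ law, the two Maurer--Cartan terms collapse into $-u^{-1}du$ by the Leibniz rule $d(g^{\alpha\beta}g^{\beta\gamma}) = (dg^{\alpha\beta})g^{\beta\gamma} + g^{\alpha\beta}dg^{\beta\gamma}$, while the two $\eta$-terms collect into $u^{-1}\Xi u$ with $\Xi = \eta^{\alpha\beta} + g^{\alpha\beta}\eta^{\beta\gamma}(g^{\alpha\beta})^{-1}$, so that $A^\gamma = u^{-1}A^\alpha u - u^{-1}du + u^{-1}\Xi u$. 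Separately, the direct law reads $A^\gamma = (g^{\alpha\gamma})^{-1}A^\alpha g^{\alpha\gamma} - (g^{\alpha\gamma})^{-1}dg^{\alpha\gamma} + (g^{\alpha\gamma})^{-1}\eta^{\alpha\gamma}g^{\alpha\gamma}$. Now the failure of the cocycle identity gives $u = g^{\alpha\beta}g^{\beta\gamma} = h^{\alpha\beta\gamma}(g^{\gamma\alpha})^{-1} = h^{\alpha\beta\gamma}g^{\alpha\gamma}$; writing $h = h^{\alpha\beta\gamma}$ and $v = g^{\alpha\gamma}$, so that $u = hv$, I expand $(hv)^{-1}d(hv) = v^{-1}h^{-1}(dh)v + v^{-1}dv$ in the first expression, conjugate both expressions for $A^\gamma$ by $v$, and observe that the $dv\,v^{-1}$ terms cancel, leaving $h^{-1}A^\alpha h - h^{-1}dh + h^{-1}\Xi h = A^\alpha + \eta^{\alpha\gamma}$. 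Conjugating once more by $h$ gives $\Xi - h\eta^{\alpha\gamma}h^{-1} = \Ad(h)A^\alpha - A^\alpha + dh\,h^{-1}$, and recognising $\Ad(h)A^\alpha - A^\alpha = (hA^\alpha - A^\alpha h)h^{-1} = -[A^\alpha,h]h^{-1}$ yields exactly the asserted identity.

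The computation itself is routine once the conventions are pinned down; the places demanding care are (i) keeping left- and right-multiplications and the signs of the Maurer--Cartan terms straight through the two conjugations, and (ii) verifying that both sides actually take values in $\Omega^1(U^\alpha\cap U^\beta\cap U^\gamma)\otimes_M \bigsqcup_x\mathfrak j_x$: since $\mathfrak j$ is an ideal of $\mathfrak g$ (being the Lie algebra of the normal subgroup $J = J^0\times J^1$ of $G$), the conjugates $g^{\alpha\beta}\eta^{\beta\gamma}(g^{\alpha\beta})^{-1}$ and $h\eta^{\alpha\gamma}h^{-1}$ stay $\mathfrak j$-valued, $dh\,h^{-1}$ is $\mathfrak j$-valued because $h$ takes values in $J$, and $[A^\alpha,h]h^{-1} = A^\alpha - \Ad(h)A^\alpha = -\sum_{k\geq 1}\tfrac{1}{k!}\ad_{\log h}^{\,k}A^\alpha$ lies in $\mathfrak j$ because $\log h\in\mathfrak j$ and $\mathfrak j$ is an ideal. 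Finally one should note the mild indeterminacy inherent in the defining relations of the $\eta$-forms (they are fixed only modulo a $\mathfrak j$-ambiguity), so that the identity is to be understood as holding for a coherent choice of these solutions; this is the higher-gauge counterpart of the cocycle condition that made the closed-case eigenvector single-valued.
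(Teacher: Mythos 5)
Your proof is correct and is essentially the paper's argument: both rest solely on iterating the one-step transition law $A^\beta=(g^{\alpha\beta})^{-1}A^\alpha g^{\alpha\beta}-(g^{\alpha\beta})^{-1}dg^{\alpha\beta}+(g^{\alpha\beta})^{-1}\eta^{\alpha\beta}g^{\alpha\beta}$ around the triple overlap together with $h^{\alpha\beta\gamma}=g^{\alpha\beta}g^{\beta\gamma}g^{\gamma\alpha}$ and $g^{\gamma\alpha}=(g^{\alpha\gamma})^{-1}$. The only (cosmetic) difference is organizational --- you equate two expressions for $A^\gamma$, whereas the paper expands $\eta^{\alpha\beta}$ by chaining substitutions through $\beta$ and $\gamma$ and uses the auxiliary identity $\eta^{\gamma\alpha}=-g^{\gamma\alpha}\eta^{\alpha\gamma}(g^{\gamma\alpha})^{-1}$; your closing remarks on $\mathfrak j$-valuedness and on the $\mathfrak j$-ambiguity of the $\eta$'s are sound additions not present in the paper.
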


\begin{proof}
\begin{eqnarray}
\eta^{\alpha \beta} & = & g^{\alpha \beta} A^\beta (g^{\alpha \beta})^{-1} - A^\alpha + dg^{\alpha \beta} (g^{\alpha \beta})^{-1} \nonumber \\
& = & g^{\alpha \beta} \left(g^{\beta \gamma} A^\gamma (g^{\beta \gamma})^{-1} +dg^{\beta \gamma}(g^{\beta \gamma})^{-1} - \eta^{\beta \gamma}\right)(g^{\alpha \beta})^{-1} \nonumber \\
& & \quad - A^\alpha + dg^{\alpha \beta} (g^{\alpha \beta})^{-1} \nonumber \\
& = & h^{\alpha \beta \gamma} A^\alpha (h^{\alpha \beta \gamma})^{-1} - A^\alpha - g^{\alpha \beta} \eta^{\beta \gamma} (g^{\alpha \beta})^{-1} \nonumber \\
& & \quad - g^{\alpha \beta}g^{\beta \gamma} \eta^{\gamma \alpha} (g^{\alpha \beta}g^{\beta \gamma})^{-1} \nonumber \\
& & \quad \left. \begin{array}{l} + g^{\alpha \beta}g^{\beta \gamma} dg^{\gamma \alpha} (g^{\alpha \beta}g^{\beta \gamma}g^{\gamma \alpha})^{-1} \\ +g^{\alpha \beta}dg^{\beta \gamma}(g^{\alpha \beta}g^{\beta \gamma})^{-1} \\ + dg^{\alpha \beta} (g^{\alpha \beta})^{-1} \end{array} \right\}dh^{\alpha \beta \gamma}(h^{\alpha \beta \gamma})^{-1}
\end{eqnarray} 
now
\begin{eqnarray}
\eta^{\gamma \alpha} & = & g^{\gamma \alpha} A^\alpha (g^{\gamma \alpha})^{-1} +dg^{\gamma \alpha}(g^{\gamma \alpha})^{-1} - A^\gamma \nonumber \\
& = & A^\gamma + g^{\gamma \alpha}d(g^{\gamma \alpha})^{-1} - g^{\gamma \alpha} \eta^{\alpha \gamma} (g^{\alpha \gamma})^{-1} +dg^{\gamma \alpha}(g^{\gamma \alpha})^{-1} - A^\gamma \nonumber \\
& = & - g^{\gamma \alpha} \eta^{\alpha \gamma} (g^{\alpha \gamma})^{-1}
\end{eqnarray}
then
\begin{eqnarray}
\eta^{\alpha \beta} & = & h^{\alpha \beta \gamma} A^\alpha (h^{\alpha \beta \gamma})^{-1} - A^\alpha - g^{\alpha \beta} \eta^{\beta \gamma} (g^{\alpha \beta})^{-1} \nonumber \\
& & \quad + g^{\alpha \beta} g^{\beta \gamma} g^{\gamma \alpha} \eta^{\alpha \gamma} (g^{\alpha \beta} g^{\beta \gamma} g^{\gamma \alpha})^{-1} +dh^{\alpha \beta \gamma}(h^{\alpha \beta \gamma})^{-1}
\end{eqnarray}
\end{proof}

\subsection{The higher gauge theory associated with the $C^*$-geometric phase }
The gauge transformations of the gauge potential $A^\alpha$ and of the potential-transformation $\eta^{\alpha \beta}$ are characteristic of a higher gauge theory, as described with several different notations in refs. \cite{breen,aschieri,kalkkinen,baez1,baez2,baez3}. In accordance with these previous works, we introduce here three types of curvature.\\
Let $B^\alpha = d\mathcal A^\alpha - \mathcal A^\alpha \wedge \mathcal A^\alpha \in \Omega^2(U^\alpha,\mathcal L(\S))$ be the curving and $F^\alpha = dR^\alpha - [A^\alpha,R^\alpha] - R^\alpha \wedge R^\alpha \in \Omega^2(U^\alpha,\mathcal L(\S))$ be the fake curvature. 
\begin{equation}
B^\alpha = dA^\alpha - A^\alpha \wedge A^\alpha + F^\alpha
\end{equation}
We can prove (see   \ref{annexeC}) that in fact $B^\alpha \in \Omega^2(U^\alpha,P^\alpha_{\mathfrak j})$ and $F^\alpha \in \Omega^2(U^\alpha,P^\alpha_{\mathfrak g})$. Finally we set $H^\alpha \in \Omega^3(U^\alpha,P^\alpha_{\mathfrak j})$ the true curvature defined by
\begin{eqnarray}
H^\alpha & = & dB^\alpha - [A^\alpha,B^\alpha] \nonumber \\
& = & dF^\alpha - [A^\alpha,F^\alpha]
\end{eqnarray}
These three types of curvature satisfy some gauge transformation formulae (studied in   \ref{annexeC}) which are also characteristic of a higher gauge theory \cite{breen,aschieri,kalkkinen,baez1,baez2,baez3}).\\
The $C^*$-geometric phases of an open quantum system are not related to a principal bundle over $M$ (as would occur for the geometric phases of closed quantum systems) because the family of local principal $G$-bundles $\{P^\alpha_G\}_{\alpha}$ cannot be lift to a single global principal bundle (this is due to the failure of the cocycle relation with $g^{\alpha \beta}$ measured by $h^{\alpha \beta \gamma}$). The geometric structure $\mathcal P$ associated with the family $\{P^\alpha_G\}_{\alpha}$ does not define a manifold but a category (see ref. \cite{maclane} for a presentation of category theory); this feature is specific to a higher gauge theory.\\
The geometric structure $\mathcal P$ is defined from the 1-transition $g^{\alpha \beta}$ and the 2-transition $h^{\alpha \beta \gamma}$ and is endowed with a 2-connection (see ref. \cite{baez1,baez2,baez2}) defined from $(A^\alpha,B^\alpha,\eta^{\alpha \beta})$. $\mathcal P$ is a non-abelian bundle gerbes \cite{breen,aschieri,kalkkinen,baez1,baez2,baez3})\footnote[8]{In some references as \cite{breen,aschieri,kalkkinen}, the authors consider only the special crossed module $(\mathrm{Aut}(G),G,\Ad,id_{\mathrm{Aut}(G)})$ for a Lie group $G$.} with the structure Lie crossed module $(G,J,t,\Ad)$ where $t:J \to G$ is the canonical injection ($J$ is a subgroup of $G$) and $\Ad : G \to \mathrm{Aut}(J)$ is the adjoint representation of $G$ on itself restricted to $J$ (in the homomorphisms domain).\\
The replacement of the single gauge group of  closed quantum systems by a gauge Lie crossed module for open quantum systems is explained by the need for a gauge group associated with the evolution of the quantum system ($G$) and also for a gauge group associated with the decoherence induced by the environment ($J$).\\ 

Equivalently, $\mathcal P$ can be viewed as a principal categorical bundle (a 2-bundle, see ref. \cite{baez1,baez2,baez3}) on the left with the structure groupo\"id $\mathcal G$ having $\Obj(\mathcal G) = G$ as set of objects and $\Morph(\mathcal G) = J \rtimes G$ as set of arrows, the semi-direct product (the arrows horizontal composition) being defined by $(h,g)(h',g') = (h\Ad(g)h',gg')$. The source map of $\mathcal G$ is defined by $s(h,g) = g$ and the target map is defined by $t(h,g) = t(h)g$ (with $t(h)$ the canonical injection of $h$ in $G$). The morphisms composition (the arrows vertical composition) is defined by $(h,g) \circ (h',t(h)g) = (hh',g)$.\\

The explicit constructions of $\mathcal P$, of its 2-connection and of its total category is somewhat technical but can be found in \ref{annexeC}. In the next section, we show that at the adiabatic limit, the time-dependent mixed states exhibit $C^*$-geometric phases which are related to horizontal lifts in $\mathcal P$.

\section{The adiabatic transport of mixed states}
The study of rigorous adiabatic theorems for mixed states is not the subject of this paper but   \ref{annexeD} presents an heuristic approach to the adiabatic approximation for the $*$-eigenvectors. This approach shows that the adiabatic condition can be expressed as being $R^\alpha_\mu(x(t)) \dot x^\mu(t) \simeq 0$ ($\forall t$) where $R^\alpha$ is the remainder of $C^*$-geometric phase generator. This property shows the interpretation of $R^\alpha$, it measures the non-adiabatic effects. The fake curvature $F^\alpha = dR^\alpha - R^\alpha \wedge R^\alpha - [A^\alpha,R^\alpha]$ then characterizes the influence of these non-adiabatic effects on the geometry of $\mathcal P$ and the curving-transformation $\chi^{\alpha \beta} = [R^\alpha,\eta^{\alpha \beta}]$ measures the intertwining between the non-adiabatic effects and the decoherence induced by the environment.\\

We now are able to introduce the adiabatic transport of a mixed state.
\begin{prop}
\label{propadiabtransp}
  Let $(\mathcal D(\S),\mathcal L_x,M)$ be an open quantum dynamical system associated with an universe described by the quantum dynamical system $(\S\otimes \E,H(x),M)$. Let $U^\alpha \ni x \mapsto \rho_E^\alpha(x) \in \mathcal D(\S)$ be a mixed eigenstate of the open quantum system and $M \ni x \mapsto E(x)$ be the associated eigenoperator supposed \textbf{non-degenerate}. Let $t \mapsto x(t) \in M$ be an evolution corresponding to a path $\mathcal C$ within $U^\alpha$. Let $t \mapsto \rho(t) \in \mathcal D(\S)$ be the solution of the Lindblad equation $\ihbar \frac{d\rho}{dt} = \mathcal L_{x(t)}(\rho(t))$ with $\rho(0) = \rho_E^\alpha(x(0))$. We assume the adiabatic condition (\ref{adiabcond}), i.e. $R^\alpha_\mu(x(t))\dot x^\mu(t) \simeq 0$. We can then write $\forall t>0$
\begin{equation}
\label{adiabtransp}
\rho(t) = g_{EA}(t) \rho_E(x(t)) g_{EA}(t)^\dagger
\end{equation}
with
\begin{equation}
g_{EA}(t) = \Te^{- \ihbar^{-1} \int_0^t E(x(t'))dt'} \Pe^{-\int_{x(0)}^{x(t)}(A^\alpha(x)+\eta(x))} \in G_{x(t)}
\end{equation}
where $A^\alpha \in \Omega^1(U^\alpha,P^\alpha_{\mathfrak g})$ is the gauge potential and $\eta \in \Omega^1(\mathcal C,P^\alpha_{\mathfrak j})$ is a particular $\mathfrak j_x$-gauge transformation.
\end{prop}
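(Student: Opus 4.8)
\emph{Proof strategy.} The plan is to lift the whole problem to the universe $\S\otimes\E$, where the evolution is an ordinary Schr\"odinger equation, run the usual adiabatic transport there, and then bring the result back to $\mathcal D(\S)$ with the partial trace. Concretely, I would first introduce the purification $t\mapsto\psi(t)\in\S\otimes\E$ solving $\ihbar\dot\psi=H(x(t))\psi$ with $\psi(0)=\phi_E^\alpha(x(0))$ (legitimate since $\llangle\phi_E^\alpha(x(0))|\phi_E^\alpha(x(0))\rrangle=1$). By the construction of Section 3, the curve $t\mapsto\|\psi(t)\|^2_*=\tr_{\E}(|\psi(t)\rrangle\llangle\psi(t)|)$ solves the Lindblad equation with initial value $\rho_E^\alpha(x(0))=\rho(0)$; the Lindblad equation being linear, its solution is unique, so $\rho(t)=\|\psi(t)\|^2_*$ for all $t$ and it only remains to determine $\psi(t)$ at the adiabatic limit.

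The core step is the adiabatic transport inside the eigenbundle $\bigsqcup_x\ker(H(x)-E(x)\otimes 1_{\E})$. Since $[E(x)\otimes 1_{\E},H(x)]=0$, the eigenoperator behaves exactly like an eigenvalue for the $H(x)$-decomposition of $\S\otimes\E$, so the adiabatic approximation for $*$-eigenvectors (the heuristic of \ref{annexeD}, whose working hypothesis is precisely the adiabatic condition (\ref{adiabcond}), $R^\alpha_\mu(x(t))\dot x^\mu(t)\simeq 0$) keeps $\psi(t)$, up to non-adiabatic corrections, in $\ker(H(x(t))-E(x(t))\otimes 1_{\E})$ and parallel-transports it there. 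As $E$ is non-degenerate, $(G_{x(t)}\times K_{x(t)})\phi_E^\alpha(x(t))=\ker(H(x(t))-E(x(t))\otimes 1_{\E})$, so I can write $\psi(t)=g_{EA}(t)k(t)\phi_E^\alpha(x(t))$ with $g_{EA}(t)\in G_{x(t)}$, $k(t)\in K_{x(t)}$ and $g_{EA}(0)k(0)=1$. Plugging this ansatz into the Schr\"odinger equation and using $H\phi_E^\alpha=E\phi_E^\alpha$, equation (\ref{geomphase}) and the splitting $\mathcal A^\alpha=A^\alpha+R^\alpha$, one finds that the $G$-component obeys $\dot g_{EA}g_{EA}^{-1}=-\ihbar^{-1}E(x(t))-(A^\alpha_\mu+R^\alpha_\mu+\eta_\mu)\dot x^\mu$, where $\eta\in\Omega^1(\mathcal C,P^\alpha_{\mathfrak j})$ is the $\mathfrak j_x$-valued term generated, via Property \ref{gaugetransf}, by the $K$-component $k(t)$ of the lift (any admissible $k(t)$ giving such a ``particular $\mathfrak j_x$-gauge transformation''). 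The adiabatic condition removes the $R^\alpha$ term, and integrating the remaining generator along $\mathcal C$ yields the claimed $g_{EA}(t)$, namely the time-ordered dynamical factor $\Te^{-\ihbar^{-1}\int_0^t E(x(t'))dt'}$ composed with the path-ordered geometric factor $\Pe^{-\int_{x(0)}^{x(t)}(A^\alpha+\eta)}$. Since $A^\alpha$ is $\mathfrak g_x$-valued, $\eta$ takes values in $\mathfrak j_x\subset\mathfrak g_x$, and $E\in\mathfrak g_x$ (because $E\otimes 1_{\E}$ commutes with $H$, hence with the projector $P_E^\alpha$, hence preserves $\ker(H-E\otimes 1_{\E})$), this $g_{EA}(t)$ is a genuine curve in $G_{x(t)}$: the horizontal lift of $\mathcal C$ in $P^\alpha_G$ dressed by the dynamical factor.

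To close, I would take the partial trace of $\rho(t)=\|\psi(t)\|^2_*=\|g_{EA}(t)k(t)\phi_E^\alpha(x(t))\|^2_*$. Repeating verbatim the computation that proves ``$\tilde\phi_E=gk\phi_E\Rightarrow\tilde\rho_E=g\rho_E g^\dagger$'' — the unitary $k(t)$ drops out of $\tr_{\E}$ by cyclicity, while $g_{EA}(t)$ factors out of it — yields $\rho(t)=g_{EA}(t)\rho_E^\alpha(x(t))g_{EA}(t)^\dagger$, which is (\ref{adiabtransp}). This also makes transparent that neither the choice of $k(t)$ nor the $\mathfrak j$-ambiguity in $\eta$ affects $\rho(t)$, consistently with $J_x$ fixing $\rho_E^\alpha$ up to its $J_x^1$-part supported on $\ker\rho_E^\alpha$.

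The only genuinely non-routine ingredient is the analytic input of the middle paragraph: a bona fide adiabatic theorem for $*$-eigenvectors guaranteeing both that $\psi(t)$ does not leak out of the eigenbundle and that the intra-eigenspace transport reduces to $A^\alpha+\eta$ once $R^\alpha_\mu\dot x^\mu\simeq 0$ (and, secondarily, the precise factorization of the combined dynamical and geometric ordered exponentials). The paper deliberately leaves the former to the heuristic treatment of \ref{annexeD}; everything else is direct bookkeeping with the Lindblad/Schr\"odinger correspondence of Section 3, equation (\ref{geomphase}), Property \ref{gaugetransf}, and the $G\times K$-equivariance of the mixed eigenstate established earlier.
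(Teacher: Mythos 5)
Your proposal follows essentially the same route as the paper's own proof: lift to the universe via the Schr\"odinger equation with initial condition $\phi_E^\alpha(x(0))$, invoke the adiabatic theorem of \ref{annexeD} to confine $\psi(t)$ to $\ker(H-E\otimes 1_{\E})$, use non-degeneracy to write $\psi(t)=k(t)g_{EA}(t)\phi_E^\alpha(x(t))$, insert into the Schr\"odinger equation, take the partial trace to identify the generator $-\ihbar^{-1}E-(\mathcal A_\mu+\eta_\mu)\dot x^\mu$, drop $R^\alpha$ by the adiabatic condition, and factor the ordered exponentials. The only cosmetic difference is that the paper keeps the transport equation multiplied by $\rho_E$ (so that $g_{EA}$ is \emph{chosen} among the solutions, the residual freedom being a $\mathfrak j_x$-ambiguity), a point you correctly acknowledge at the end.
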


\begin{proof}
Let $\psi \in \S \otimes \E$ be the solution of the Schr\"odinger equation $\ihbar \frac{d\psi}{dt} = H(x(t)) \psi(t)$ with $\psi(0) = \phi_E(x(0))$ where $\phi_E$ is the $*$-eigenvector associated with $E$ and $\rho_E$. By application of the theorem \ref{thadiab} we know that $\forall t$, $\psi(t) \in \ker(H(x(t))-E(x(t)))$, and since the action of $G_x \times K_x$ is transitive ($E$ is non-degenerate) we have
\begin{equation}
\forall t, \exists g_{EA}(t) \in G_{x(t)}, \exists k_\eta(t) \in K_{x(t)}, \psi(t) = k_\eta(t)g_{EA}(t) \phi_E(x(t))
\end{equation}
By inserting this expression in the Schr\"odinger equation of the universe, we find
\begin{eqnarray}
\ihbar k_{\eta} \dot g_{EA} \phi_E + \ihbar \dot k_{\eta} g_{EA} \phi_E + \ihbar k_{\eta} g_{EA} \frac{d \phi_E}{dt} & = & H k_{\eta} g_{EA} \phi_E \nonumber \\ & = & E k_{\eta} g_{EA} \phi_E 
\end{eqnarray}
\begin{equation}
g_{EA}^{-1} \dot g_{EA} \phi_E = - \frac{d \phi_E}{dt} - k_{\eta}^{-1} \dot k_{\eta} \phi_E  - \ihbar^{-1} g_{EA}^{-1} E g_{EA} \phi_E
\end{equation}
and then
\begin{eqnarray}
 g_{EA}^{-1} \dot g_{EA} |\phi_E \rrangle \llangle \phi_E| & = & - | \frac{d \phi_E}{dt} \rrangle \llangle \phi_E| - k_{\eta}^{-1} \dot k_{\eta}|\phi_E \rrangle \llangle \phi_E| \nonumber \\
& & \quad - \ihbar^{-1} g_{EA}^{-1} E g_{EA} |\phi_E \rrangle \llangle \phi_E |
\end{eqnarray}
By taking the partial trace on $\E$ of this expression, we find
\begin{equation}
 g_{EA}^{-1} \dot g_{EA} \rho_E = - (\mathcal A_\mu+ \eta_\mu)\dot x^\mu \rho_E - \ihbar^{-1} g_{EA}^{-1} E g_{EA} \rho_E
\end{equation}
where the potential-transformation $\eta$ is a solution of the following equation
\begin{eqnarray}
& & \eta(x(t)) \|\phi_E(x(t))\|^2_* \nonumber \\
& & \qquad = \langle \phi_E(x(t)) | k_\eta(x(t))^{-1} \frac{dk_{\eta}(x(t))}{dt} \phi_E(x(t)) \rangle_* dt 
\end{eqnarray}
Since $R_\mu \dot x^\mu \simeq 0$, $\mathcal A_\mu \dot x^\mu \simeq A_\mu \dot x^\mu$ we can then choose
\begin{equation}
g_{EA}^{-1} \dot g_{EA} = -(A_\mu+\eta_\mu)\dot x^\mu - \ihbar^{-1} g_{EA}^{-1} E g_{EA}
\end{equation}
We set
\begin{eqnarray}
& & g_{A}(t)  =  \Te^{\ihbar^{-1} \int_0^t E(t)dt} g_{EA}(t)  \nonumber \\
& \iff & g_{EA}(t) = \Te^{-\ihbar^{-1} \int_0^t E(t)dt} g_{A}(t)
\end{eqnarray}
By inserting this expression in the previous equation we find
\begin{equation}
 g_{A}^{-1} \dot g_{A} = - (A_\mu+\eta_\mu)\dot x^\mu
\end{equation}
and thus
\begin{equation}
g_A = \Pe^{- \int_{x(0)}^{x(t)} (A(x)+\eta(x))}
\end{equation}
\end{proof}
We note that even if the $C^*$-geometric phase $ \Pe^{\int_{x(0)}^{x(t)}(A^\alpha(x)+\eta(x))}$ is non-abelian since it belongs to $\mathcal{GL}(\S)$ it is the equivalent of an abelian geometric phase of a closed quantum system. The non-abelianity results from the substitution of the abelian ring $\mathbb C$ by the non-abelian $C^*$-algebra $\mathcal L(\S)$ (the paradigm of the present approach).\\
$A^\alpha$ being not antisefladjoint, $g_{EA}$ is not unitary, nevertheless the normalization of the density matrix trace is preserved by the adiabatic transport formula (see   \ref{annexeD}). If the path $\mathcal C$ crosses several charts, the $C^*$-geometric phase is obtained by a formula similar to equation (\ref{alvaformule}).

\section{Illustrative example: the control of a qubit}
In this section we present a very simple example in order to illustrate the concepts introduced in this paper with a physical toy model. For the sake of simplicity, we do not consider the question of the chart indices in this section.
\subsection{The model}
We consider a qubit described by a CAR algebra (a fermionic algebra) acting on $\S = \mathbb C^2$:
\begin{eqnarray}
c|0 \rangle = 0 & \qquad & c^+|0\rangle = |1 \rangle \nonumber \\
c|1 \rangle = |0 \rangle & \qquad & c^+|1 \rangle = 0
\end{eqnarray}
with $cc^+ + c^+c = 1_{\S}$. This qubit is subjected to a decoherence process described by a phase damping model (see \cite{nielsen}):
\begin{equation}
H_{I0} = \chi c^+c \otimes (b+b^+)
\end{equation}
where $\chi \in \mathbb R$ is a constant. Usually $\{b,b^+,1_{\E}\}$ constitutes a CCR algebra (a harmonic oscillator algebra), but in the present work for the sake of simplicity and to avoid some technical difficulties we consider that it is also a CAR algebra (the environment is then constituted by a single mode fermionic bath). We have then the following hamiltonian:
\begin{equation}
H_0 = \hbar \omega_c c^+c \otimes 1_{\E} + \hbar \omega_b 1_{\S} \otimes b^+b + \chi c^+c \otimes (b+b^+)
\end{equation}
where $\omega_c,\omega_b \in \mathbb R^{+*}$ are some constants. We control the qubit by operating on it with rotations. Let $U(x) = e^{\imath x^\mu \sigma_\mu} \in \mathcal U(\S)$ be a qubit rotation, where $x=(x^0,x^1,x^3) \in \mathbb R^3$ are the control parameters and $\{\sigma_\mu \}_{\mu=1,2,3}$ are the Pauli matrices. The Hamiltonian of the qubit driven by rotations is then
\begin{eqnarray}
H(x) & = & U(x) H_0 U(x)^{-1} \nonumber \\
& = & \hbar \omega_c U(x) c^+c U(x)^{-1} \otimes 1_{\E} + \hbar \omega_b 1_{\S} \otimes b^+b \nonumber \\
& & \quad + \chi U(x) c^+c U(x)^{-1} \otimes (b+b^+)
\end{eqnarray}
Let $t \mapsto x(t)$ be a path in $\mathbb R^3$ with $x(0)=(0,0,0)$ and let $t \mapsto \psi(t) \in \S \otimes \E$ be the associated wave function of the universe:
\begin{equation}
\ihbar \frac{d \psi}{dt} = U(x(t)) H_0 U(x(t))^{-1} \psi(t) \qquad \psi(0) = \phi_\chi
\end{equation}
with the initial condition $\phi_\chi \in \S \otimes \E$ such that $\rho_\chi = \tr_{\E}\left(|\phi_\chi \rrangle \llangle \phi_\chi|\right) = |1 \rangle \langle 1|$ (at time $t=0$ the qubit is in the pure state $|1\rangle$). Let $\phi(t) = U(x(t))^{-1} \psi(t)$ be the solution of
\begin{equation}
\ihbar \frac{d\phi}{dt} = \left(H_0 - \ihbar U(x(t))^{-1} \frac{dU(x(t))}{dt} \right) \phi(t) \qquad \phi(0)=\phi_\chi
\end{equation}
The density matrix representing the evolution of the qubit is then
\begin{equation}
\rho(t) = U(x(t)) \tr_{\E} \left( V(t) |\phi_\chi \rrangle \llangle \phi_\chi| V(t)^\dagger \right) U(x(t))^{-1}
\end{equation}
with
\begin{equation}
V(t) = \Te^{-\ihbar^{-1} \int_0^t (H_0 - \ihbar U(x(t'))^{-1} \partial_{t'} U(x(t'))) dt'} \in \mathcal U(\S \otimes \E)
\end{equation}
We see that the dynamics of the qubit seems to exhibit a kind of geometric phase generated by $U(x)^{-1} dU(x)$. Unfortunately it is not separated from the other matrix quantities and the expression does not involve the rotated state $U(x)|1\rangle \langle 1|U(x)^{-1}$. The formalism introduced in this paper solves these problems and clearly defines this geometric phase.

\subsection{The eigenoperator and the $*$-eigenvector}
Let $E_0 = \hbar \omega_c cc^+ \in \mathcal L(\S)$. We have the following usual eigenvector equation
\begin{equation}
(H_0 - E_0 \otimes 1_{\E}) \phi_\chi = \lambda \phi_\chi
\end{equation}
with $\lambda = \frac{1}{2} (\sqrt{4 \chi^2+\hbar^2 \omega_b^2}+2\hbar \omega_c+\hbar \omega_b)$ (one of the four eigenvalues) and
\begin{equation}
\phi_\chi = \frac{2 \chi |10 \rangle + (\sqrt{4 \chi^2+\hbar^2 \omega_b^2}+\hbar \omega_b) |11 \rangle }{\sqrt{4 \chi^2+(\sqrt{4 \chi^2+\hbar^2 \omega_b^2}+\hbar \omega_b)^2}} 
\end{equation}
We note that $\rho_\chi = |1 \rangle \langle 1|$.\\
We consider the following eigenoperator and $*$-eigenvector of $H(x)$ defined by
\begin{equation}
E(x) = U(x)(E_0+\lambda 1_{\S}) U(x)^{-1} \qquad \phi_E(x) = U(x) \phi_\chi
\end{equation}
It is easy to verify that
\begin{equation}
H(x) \phi_E(x) = E(x) \phi_E(x) \text{ and } [H(x),E(x)]=0
\end{equation}
The mixed eigenstate is then $\rho_E(x) = U(x) |1\rangle \langle 1| U(x)^{-1}$ which is as expected the rotated initial state. The associated gauge invariances are then described by the following Lie algebras
\begin{equation}
\mathfrak g_x = \left\{ U(x) \left( \begin{array}{cc} \alpha & 0 \\ \gamma & \beta \end{array} \right) U(x)^{-1}; \alpha,\gamma \in \mathbb C, \beta \in \mathbb C \right\}
\end{equation}
\begin{equation}
\mathfrak j_x^0 = \left\{ U(x) \left( \begin{array}{cc} \alpha & 0 \\ \gamma & \beta \end{array} \right) U(x)^{-1}; \alpha,\gamma \in \mathbb C, \beta \in \imath \mathbb R \right\}
\end{equation}
\begin{equation}
\mathfrak j_x^1 = \left\{ U(x) \left( \begin{array}{cc} \alpha & 0 \\ \gamma & 0 \end{array} \right) U(x)^{-1}; \alpha,\gamma \in \mathbb C \right\}
\end{equation}
and $\mathfrak k_x = \imath \mathbb R$.

\subsection{The $C^*$-geometric phase}
By definition the generator of the $C^*$-geometric phase $\mathcal A$ satisfies
\begin{eqnarray}
& & \mathcal A \rho_E = \langle \phi_E|d\phi_E \rangle_* \nonumber \\
& \iff & \mathcal A U(x) |1 \rangle \langle 1| U(x)^{-1} =  dU(x) |1 \rangle \langle 1| U(x)^{-1} \nonumber \\
& \iff & U(x)^{-1} \mathcal A U(x) |1 \rangle \langle 1| = U(x)^{-1} dU(x) |1 \rangle \langle 1|
\end{eqnarray}
This induces that
\begin{equation}
\mathcal A(x) = U(x) \left( \begin{array}{cc} 0 & \langle 0|U(x)^{-1}dU(x) |1 \rangle \\ 0 & \langle 1|U(x)^{-1}dU(x)|1 \rangle \end{array} \right) U(x)^{-1}
\end{equation}
modulo a $\mathfrak j_x^1$-gauge transformation. $\mathcal A$ exhibits the relevant components of the generator $U^{-1}(x)dU(x)$ postulated at the begining of this section. Moreover we have
\begin{eqnarray}
\langle \phi_E |P_E d\phi_E \rangle_* & = & \llangle \phi_\chi | U(x)^{-1}dU(x) |\phi_\chi \rrangle U(x)|1 \rangle \langle 1|U(x)^{-1} \nonumber \\
& = & \langle 1|U(x)^{-1}dU(x)|1 \rangle U(x)|1 \rangle \langle 1|U(x)^{-1}
\end{eqnarray}
where $P_E = |\phi_E \rrangle \llangle \phi_E|$. This shows that the reduced geometric phase generator is
\begin{equation}
A(x) = U(x) \left( \begin{array}{cc} 0 & 0 \\ 0 & \langle 1|U(x)^{-1}dU(x)|1 \rangle \end{array} \right) U(x)^{-1}
\end{equation}
The adiabatic condition $R_\mu(x(t))\dot x^\mu(t) \simeq 0$ is then equivalent to 
\begin{equation}
\langle 0|U(x)^{-1} \frac{\partial U(x)}{\partial x^\mu} |1 \rangle \dot x^\mu(t) \simeq 0
\end{equation}
The non-adiabatic coupling between $|1 \rangle$ and $|0 \rangle$ induced by the rotation must then be negligible; this is in accordance with the intuitive signifiance of the adiabatic approximation.\\
Finally, by applying the property \ref{propadiabtransp} we have
\begin{equation}
\rho(t) = g_{EA}(t) U(x(t)) |1 \rangle \langle 1| U(x(t))^{-1} g_{EA}(t)^{-1}
\end{equation}
with
\begin{eqnarray}
g_{EA}(t) & = & \Te^{-\ihbar^{-1} \int_0^t U(x(t'))(E_0+\lambda)U(x(t'))^{-1}dt'} \nonumber \\
& & \times  \Pe^{- \int_{x(0)}^{x(t)} \langle 1|U(x)^{-1}dU(x)|1\rangle U(x)|1 \rangle \langle 1|U(x)^{-1}}
\end{eqnarray}
The true geometric phase associated with the adiabatic rotation of the qubit subjected to the decoherence process explicitly appears in this formula.

\section{Conclusion}
The analogy between dissipative and open quantum systems, summarized by table \ref{analogy}, induces a new approach to extend the geometric phase concept to open quantum systems.
\begin{table}
\caption{\label{analogy} Analogy between dissipative and open quantum systems.}
\begin{center}
\begin{tabular}{|c|c|}
\hline
\textit{Dissipative quantum systems} & \textit{Open quantum systems} \\
\hline
ring $\mathbb C$ & $C^*$-algebra $\mathcal L(\S)$ \\
\hline
Hilbert space $\mathbb C \otimes \mathcal H = \mathcal H$ & $C^*$-module $\S \otimes \E$ \\
\hline
dissipation density $\|\psi\|^2 \in \mathbb R^{*+}$ & density matrix $\|\psi\|^2_* \in \mathcal D(\S)$ \\
\hline
$ \ihbar \frac{d\|\psi\|^2}{dt} = \mathcal L(\|\psi\|^2)$ & $\ihbar \frac{d\|\psi\|^2_*}{dt} = \mathcal L(\|\psi\|^2_*) $\\
$\mathcal L(\|\psi\|^2) = \langle \psi|(H-H^\dagger)\psi\rangle$ & $\mathcal L(\|\psi\|^2_*) = \langle \psi|H\psi\rangle_* - \langle H\psi|\psi\rangle_*$ \\
\hline
$H \phi_e = e \phi$ with $e\in \mathbb C$ & $H\phi_E=E\phi_E$ with $E\in \mathcal L(\S)$ \\
$\mathcal L(\|\phi_e\|^2) = 2 \imath \im(e)\|\phi_e\|^2$ & $\mathcal L(\|\phi_E\|^2_*) = E\|\phi_E\|^2_* - \|\phi_E\|^2_*E^\dagger$ \\
\hline
$A = \frac{\langle \phi_e|d\phi_e \rangle}{\|\phi_e\|^2}$ & $\mathcal A\|\phi_E\|^2_* = \langle \phi_E|d\phi_E \rangle_*$ \\
$\frac{d\|\phi_e\|^2}{\|\phi_e\|^2} = 2 \re(A)$ & $d\|\phi_E\|^2_* = \mathcal A\|\phi_E\|^2_* + \|\phi_E\|^2_* \mathcal A^\dagger$ \\
\hline
\end{tabular}
\end{center}
\end{table}
This new kind of geometric phase defines a higher gauge theory (and not the usual gauge theory) based on the Lie crossed module $(G,J,t,\Ad)$, where the group $G$ is the analogue of the usual gauge group of closed quantum systems whereas the group $J$ is related to the decoherence induced by the environment on the open quantum system ($J$ is reduced to $\{1_G\}$ for a closed system). The $C^*$-geometric phases then appear in a categorical generalization of a principal bundle defined by 1-transition functions $g^{\alpha \beta}$ associated with $G$ but for which the cocycle relation fails, the failure being measured by 2-transition functions $h^{\alpha \beta \gamma}$ associated with $J$. We therefore claim that there is an equivalence between the three following phenomena:
\begin{itemize}
\item \textbf{Dynamics:} decoherence induced by the environment.
\item \textbf{Geometry:} obstruction to lift the gauge structure onto a principal bundle.
\item \textbf{Topology:} failure of the cocycle relation.
\end{itemize}
The categorical bundle is endowed with a connection described by several data summarised in table \ref{connectiondata} including the generator of the $C^*$-geometric phases.
\begin{table}
\caption{\label{connectiondata} Connection data of the higher gauge theory associated with the $C^*$-geometric phases.}
\begin{center}
\begin{tabular}{|c|c|c|c|l|}
\hline
\textit{$n$-form} & \textit{Symbol} & \textit{Degree} & \textit{Values} & \textit{Interpretation} \\
              &                 & $n$       & \textit{algebra}& \\
\hline
gauge potential & $A^\alpha$ & 1 & $\mathfrak g_x$ & characterizes the adiabatic dynamics \\
\hline
remainder & $R^\alpha$ & 1 & $\mathcal L(\S)$ & measures the non-adiabatic effects \\
\hline
potential-transformation & $\eta^{\alpha \beta}$ & 1 & $\mathfrak j_x$ & manifestation of the decoherence \\
\hline
curving & $B^\alpha$ & 2 & $\mathfrak j_x$ & characterizes the decoherence geometry \\
\hline
fake curvature & $F^\alpha$ & 2 & $\mathfrak g_x$ & characterizes the non-adiabatic geometry \\
\hline
curving-transformation & $\chi^{\alpha \beta}$ & 2 & $\mathfrak j_x$ & nonadiabaticy-decoherence intertwining \\
\hline
true curvature & $H^\alpha$ & 3 & $\mathfrak j_x$ & characterizes the geometry \\
\hline
\end{tabular}
\end{center}
\end{table}

The relation between decoherence and a higher gauge theory seems to be important. In previous works we have already pointed out that a higher gauge theory arises for some geometric phases: in ref. \cite{viennot2} for a dissipative quantum system presenting a splitting resonance crossing (the decoherence is there induced by the interplay of the dissipative process and the non-adiabatic transitions induced by the crossing), and in ref. \cite{viennot3} for an atom or a molecule interacting with a modulated chirped laser field or with an irregular train of ultrashort pulses (the decoherence is there due to the photon exchanges between the quantum system and the laser field or to the chaotic behaviour of the kicked quantum system). In these previous works, the quantum systems are described by pure states, and the associated higher gauge theories are abelian. In the present work, the open quantum systems are described by mixed states and the higher gauge theory is non-abelian.\\

The usual gauge theories were initially introduced to describe particle interactions (electromagnetic, electroweak or chromodynamics), and were later used to describe the geometric phases of closed quantum systems. The higher gauge theories used in the present work to describe the geometric phases of open quantum systems were initially introduced to describe string interactions. We note the interesting epistemological remark that, from the viewpoint of the gauge theories, the reduction to a more fundamental theory in high energy physics (the passage from the quantum field theory to the string theory) seems to be equivalent in non-relativistic quantum dynamics to an increase of the complexity (the passage from a closed quantum system to a system submitted to a decoherence process).\\

The $C^*$-geometric phases presented in this paper are associated with horizontal lifts of curves in the categorical bundle. However we can note that the higher gauge theories also permit us to define horizontal lifts of surfaces. The role of such horizontal lifts of surfaces in the context of open quantum systems is at the present time unclear. Moreover for some cases, $J$ can be reduced to $U(1)$. In these cases the generalized cocycle relation $h^{\alpha \beta \gamma} h^{\alpha \gamma \beta} = h^{\alpha \delta \beta} h^{\beta \delta \gamma}$ induces the existence of a topological invariant, the Dixmier-Douady class which measures the non-triviality of the categorical bundle. The physical interpretation of this class is also at the present time unclear.

\ack
This work is supported by grants from \textit{Agence Nationale de la Recherche} (CoMoC project).\\
The authors thank Professor John P. Killingbeck for his help.

\appendix
\section{Mathematical properties of the eigenoperators}
\label{annexeA}
\begin{prop}
The mixed eigenstate $\rho_E^\alpha(x) = \|\phi_E^\alpha(x)\|^2_*$ satisfies the following equations
\begin{equation}
\mathcal L_x(\rho_E^\alpha(x)) = E(x) \rho_E^\alpha(x) - \rho_E^\alpha(x) E(x)^\dagger
\end{equation}
\begin{eqnarray}
& & \mathcal L_x\left(E(x) \rho_E^\alpha(x)-\rho_E^\alpha(x) E(x)^\dagger\right) \nonumber \\
& & \qquad = E(x) \mathcal L_x(\rho_E^\alpha(x)) - \mathcal L_x(\rho_E^\alpha(x)) E(x)^\dagger
\end{eqnarray}
\end{prop}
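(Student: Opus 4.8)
The plan is to work directly from the defining relations of the eigenoperator and the definition of the Lindbladian $\mathcal L_x$, without reference to the $C^*$-module structure once the first identity is established. Recall that $\rho_E^\alpha = \tr_{\E}(|\phi_E^\alpha\rrangle\llangle\phi_E^\alpha|)$ and that $\mathcal L_x(\rho) = \tr_{\E}([H(x),|\psi\rrangle\llangle\psi|])$ when $\rho = \tr_{\E}(|\psi\rrangle\llangle\psi|)$.

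First I would establish the first equation. Using the eigenoperator equation $H(x)\phi_E^\alpha = E(x)\phi_E^\alpha$ and its adjoint $\llangle\phi_E^\alpha| H(x) = \llangle\phi_E^\alpha| E(x)^\ddagger$ — here one must be careful: $H(x)$ is selfadjoint for $\llangle.|.\rrangle$, so $\llangle H(x)\phi_E^\alpha|\cdot\rrangle = \llangle\phi_E^\alpha|H(x)\cdot\rrangle$, giving $\llangle\phi_E^\alpha| H(x) = \llangle E(x)\phi_E^\alpha|$ acting on the left. Then
\begin{eqnarray}
\mathcal L_x(\rho_E^\alpha) & = & \tr_{\E}\left(H(x)|\phi_E^\alpha\rrangle\llangle\phi_E^\alpha| - |\phi_E^\alpha\rrangle\llangle\phi_E^\alpha| H(x)\right) \nonumber \\
& = & \tr_{\E}\left(E(x)|\phi_E^\alpha\rrangle\llangle\phi_E^\alpha| - |\phi_E^\alpha\rrangle\llangle\phi_E^\alpha| E(x)^\dagger\right) \nonumber
\end{eqnarray}
and since $E(x) \in \mathcal L(\S)$ commutes with the partial trace over $\E$, this equals $E(x)\rho_E^\alpha - \rho_E^\alpha E(x)^\dagger$. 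The one subtlety to nail down is that $E(x)$ acting on $\S\otimes\E$ really means $E(x)\otimes 1_{\E}$, so it pulls out of $\tr_{\E}$; the commutation hypothesis $[E(x)\otimes 1_{\E}, H(x)] = 0$ from the definition is what makes $E(x)\phi_E^\alpha$ again lie in the eigenspace, but for this first identity we only need that $E\otimes 1_{\E}$ commutes with $\tr_{\E}$, which is automatic.

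For the second equation I would simply iterate. Set $\sigma := \mathcal L_x(\rho_E^\alpha) = E\rho_E^\alpha - \rho_E^\alpha E^\dagger$ (suppressing the $x$-dependence). The claim is $\mathcal L_x(\sigma) = E\sigma - \sigma E^\dagger$. The cleanest route is to observe that the map $\rho \mapsto E\rho - \rho E^\dagger$ and the map $\mathcal L_x$ agree on $\rho_E^\alpha$ by step one; but what we actually need is that they agree on $\sigma$ too. I would get this by going back up to the universe: note $\sigma = \tr_{\E}(|\chi\rrangle\llangle\phi_E^\alpha| - |\phi_E^\alpha\rrangle\llangle\chi|)$ where $\chi := H(x)\phi_E^\alpha = E(x)\phi_E^\alpha$ — wait, that makes $\sigma$ a commutator-type object; more efficiently, $\sigma = \tr_{\E}([H(x), |\phi_E^\alpha\rrangle\llangle\phi_E^\alpha|])$, and applying $\mathcal L_x$ again gives $\tr_{\E}([H(x),[H(x),|\phi_E^\alpha\rrangle\llangle\phi_E^\alpha|]])$. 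Expanding the double commutator and repeatedly using $H(x)\phi_E^\alpha = E(x)\phi_E^\alpha$ on the kets and the adjoint relation on the bras reduces every term to something of the form $E^j \rho_E^\alpha (E^\dagger)^k$ with $j+k=2$, and collecting them yields exactly $E\sigma - \sigma E^\dagger$ — equivalently $E^2\rho_E^\alpha - 2E\rho_E^\alpha E^\dagger + \rho_E^\alpha (E^\dagger)^2$, which is visibly $E(E\rho_E^\alpha - \rho_E^\alpha E^\dagger) - (E\rho_E^\alpha - \rho_E^\alpha E^\dagger)E^\dagger$.

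The main obstacle is bookkeeping rather than conceptual: one must be scrupulous about which adjoint ($\dagger$ on $\S$ versus $\ddagger$ on $\S\otimes\E$) is in play at each step, and about the fact that $\mathcal L_x$ as defined here is the full $\tr_{\E}[H,\cdot]$, not merely the system part $[H_{\mathcal S},\cdot]$, so that the selfadjointness of the \emph{universe} Hamiltonian is what licenses moving $H(x)$ across the bracket $\llangle\cdot|\cdot\rrangle$. Once those conventions are fixed, both identities are a two-line computation each; I would present the second one either by the double-commutator expansion sketched above or, more slickly, by noting that $\rho \mapsto E\rho - \rho E^\dagger$ is a linear operator $\mathscr E$ on $\mathcal L(\S)$, that step one says $\mathcal L_x$ and $\mathscr E$ agree on $\rho_E^\alpha$, and that a short direct check shows $\mathcal L_x(\mathscr E(\rho)) = \mathscr E(\mathcal L_x(\rho))$ for any $\rho$ of the relevant purified form — but I would double-check that last intertwining claim carefully, since it is the only place where the commutation hypothesis $[E\otimes 1_{\E}, H] = 0$ is genuinely used, and if it should fail in general I would fall back on the explicit double-commutator computation, which needs only the eigenoperator equation itself.
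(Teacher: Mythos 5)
Your proof is correct and follows essentially the same route as the paper: the first identity is obtained exactly as there (eigenoperator equation on the ket, selfadjointness of $H$ for $\llangle .|. \rrangle$ on the bra, then pulling $E\otimes 1_{\E}$ through $\tr_{\E}$), and your double-commutator expansion of $\tr_{\E}\bigl([H,[H,|\phi_E^\alpha\rrangle\llangle\phi_E^\alpha|]]\bigr)$ is literally the paper's computation, since $[H,|\phi_E^\alpha\rrangle\llangle\phi_E^\alpha|]=E|\phi_E^\alpha\rrangle\llangle\phi_E^\alpha|-|\phi_E^\alpha\rrangle\llangle\phi_E^\alpha|E^\dagger$ is precisely the lift of $\mathcal L_x(\rho_E^\alpha)$ that both you and the authors implicitly choose. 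One correction to your closing remark, though: the double-commutator fallback does \emph{not} get by on the eigenoperator equation alone --- with $P=|\phi_E^\alpha\rrangle\llangle\phi_E^\alpha|$, the terms $H^2P$ and $PH^2$ require $H(E\phi_E^\alpha)=E(H\phi_E^\alpha)=E^2\phi_E^\alpha$, which is exactly where the hypothesis $[E\otimes 1_{\E},H]=0$ (together with its adjoint consequence $[E^\dagger\otimes 1_{\E},H]=0$) enters, just as it does in the paper's step $\tr_{\E}(EHP-EPH-HPE^\dagger+PHE^\dagger)$; so that hypothesis is used in \emph{both} of your proposed routes, not only in the intertwining one.
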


\begin{proof}
\begin{eqnarray}
\mathcal L(\rho_E) & = & \tr_{\E} \left([H,|\phi_E \rrangle \llangle \phi_E|] \right) \nonumber \\
& = & \tr_{\E} \left( |H\phi_E \rrangle \llangle \phi_E | - |\phi_E \rrangle \llangle H^\ddagger \phi_E | \right) \nonumber \\
& = & E \rho_E- \rho_E E^\dagger
\end{eqnarray}

\begin{eqnarray}
& & \mathcal L(E\rho_E - \rho_E E^\dagger) \nonumber \\
& & = \tr_{\E}\left([H,E|\phi_E \rrangle \llangle \phi_E|] - [H,|\phi_E \rrangle \llangle \phi_E|E^\dagger] \right) \nonumber \\
& & = \tr_{\E}\left(EH|\phi_E \rrangle \llangle \phi_E| - E|\phi_E \rrangle \llangle \phi_E|H \right. \nonumber \\
& & \quad \left. - H|\phi_E \rrangle \llangle \phi_E|E^\dagger + |\phi_E \rrangle \llangle \phi_E|HE^\dagger\right) \nonumber \\
& & = E \mathcal L(\rho_E) - \mathcal L(\rho_E)E^\dagger
\end{eqnarray}
\end{proof}
This property can be used to find the eigenoperator and the mixed eigenstate if the Lindbladian is known without the knowledge of the precise description of the universe.

\begin{prop}
$E$ is almost selfadjoint in the following sense
\begin{equation}
\forall \alpha,\forall x \in U^\alpha \quad \tr_{\S}(\rho_E^\alpha(x)(E(x)-E(x)^\dagger)) = 0
\end{equation}
\end{prop}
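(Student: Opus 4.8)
The plan is to turn the trace on $\S$ into a scalar product on the universe space $\S\otimes\E$ and then invoke the selfadjointness of $H(x)$ for $\llangle.|.\rrangle$. First I would write, using $\rho_E^\alpha(x)=\tr_{\E}(|\phi_E^\alpha(x)\rrangle\llangle\phi_E^\alpha(x)|)$, the cyclicity of the trace, and the fact that $E(x)\in\mathcal L(\S)$ acts only on the first tensor factor,
\begin{equation}
\tr_{\S}\!\left(\rho_E^\alpha(x)E(x)\right)=\tr_{\S\otimes\E}\!\left((E(x)\otimes 1_{\E})\,|\phi_E^\alpha(x)\rrangle\llangle\phi_E^\alpha(x)|\right)=\llangle\phi_E^\alpha(x)|(E(x)\otimes 1_{\E})|\phi_E^\alpha(x)\rrangle .
\end{equation}
The eigenoperator equation (\ref{eoe}) then lets me replace $(E(x)\otimes 1_{\E})\phi_E^\alpha(x)$ by $H(x)\phi_E^\alpha(x)$, so that $\tr_{\S}(\rho_E^\alpha E)=\llangle\phi_E^\alpha|H|\phi_E^\alpha\rrangle$.

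Second, since $H(x)^\ddagger=H(x)$ the number $\llangle\phi_E^\alpha|H|\phi_E^\alpha\rrangle$ is real; on the other hand $\rho_E^\alpha$ is selfadjoint on $\S$, so $(\rho_E^\alpha E)^\dagger=E^\dagger\rho_E^\alpha$ and hence, by $\tr_{\S}(A^\dagger)=\overline{\tr_{\S}(A)}$ and cyclicity, $\tr_{\S}(\rho_E^\alpha E^\dagger)=\overline{\tr_{\S}(\rho_E^\alpha E)}$. Subtracting the two expressions gives
\begin{equation}
\tr_{\S}\!\left(\rho_E^\alpha(x)(E(x)-E(x)^\dagger)\right)=\tr_{\S}(\rho_E^\alpha E)-\overline{\tr_{\S}(\rho_E^\alpha E)}=2\imath\,\im\llangle\phi_E^\alpha(x)|H(x)|\phi_E^\alpha(x)\rrangle=0 .
\end{equation}
A shorter alternative, more in the spirit of the preceding property, is to start from its first identity $\mathcal L_x(\rho_E^\alpha)=E\rho_E^\alpha-\rho_E^\alpha E^\dagger$, take $\tr_{\S}$ of both sides, and use that the Lindbladian preserves the trace, $\tr_{\S}\mathcal L_x(\rho)=0$ (which follows from $\tr_{\S}[H_{\mathcal S},\rho]=0$ and $\tr_{\S}\mathcal D_x(\rho)=\tr_{\S\otimes\E}([H_{\mathcal E}+H_I,|\psi\rrangle\llangle\psi|])=0$); this yields $0=\tr_{\S}(\rho_E^\alpha E)-\tr_{\S}(\rho_E^\alpha E^\dagger)$ at once.

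There is no genuine obstacle here; it is essentially a one-line computation whichever route one takes. The only point requiring a little care is the bookkeeping between $E(x)\in\mathcal L(\S)$ and its ampliation $E(x)\otimes 1_{\E}\in\mathcal L(\S\otimes\E)$, together with the two distinct adjoints $\dagger$ (with respect to $\langle.|.\rangle_{\S}$) and $\ddagger$ (with respect to $\llangle.|.\rrangle$); since $(E\otimes 1_{\E})^\ddagger=E^\dagger\otimes 1_{\E}$, moving between them is harmless. I would also note that the normalization $\llangle\phi_E^\alpha|\phi_E^\alpha\rrangle=1$ plays no role in this particular statement — it is only needed to ensure $\tr_{\S}\rho_E^\alpha=1$.
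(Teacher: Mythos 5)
Your proposal is correct. Your primary route is slightly different from the paper's: you reduce $\tr_{\S}(\rho_E^\alpha E)$ to the universe expectation value $\llangle\phi_E^\alpha|(E\otimes 1_{\E})|\phi_E^\alpha\rrangle=\llangle\phi_E^\alpha|H|\phi_E^\alpha\rrangle$ via the eigenoperator equation, and then invoke the reality of this number ($H^\ddagger=H$), together with $\tr_{\S}(\rho_E^\alpha E^\dagger)=\overline{\tr_{\S}(\rho_E^\alpha E)}$. The paper instead takes the trace of the identity $\mathcal L_x(\rho_E^\alpha)=E\rho_E^\alpha-\rho_E^\alpha E^\dagger$ established in the preceding property and uses $\tr_{\S}\tr_{\E}([H,|\phi_E\rrangle\llangle\phi_E|])=0$ --- which is precisely the ``shorter alternative'' you sketch at the end, so you have in effect reproduced the paper's argument as well. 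The two routes rest on the same two facts (selfadjointness of $H$ for $\llangle.|.\rrangle$ and the trace-annihilating property of commutators/partial traces); your first route has the minor merit of not relying on the previous proposition, making the statement self-contained, while the paper's route makes transparent that the property is just trace preservation of the Lindblad flow evaluated on the eigenstate. Your closing remarks on the $\dagger$ versus $\ddagger$ bookkeeping and on the irrelevance of the normalization are accurate.
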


\begin{proof}
\begin{eqnarray}
\tr_{\S}(\rho_E(E-E^\dagger)) 
& = & \tr_{\S}(\rho_E E - E^\dagger \rho_E) \nonumber \\
& = & \tr_{\S}(\mathcal L(\rho_E)) \nonumber \\
& = & \tr_{\S}\tr_{\E}([H,|\phi_E\rrangle\llangle \phi_E|]) \nonumber \\
& = & 0
\end{eqnarray}
\end{proof}

The following proposition permits us to find easily the eigenoperators and $*$-eigenvectors.
\begin{propo}
\label{spectre}
For all $E_0(x) \in \mathcal L(\S)$ such that $[E_0(x) \otimes 1_{\E},H(x)] = 0$, we have
\begin{equation}
H(x) \phi_{E_0,\lambda}^\alpha(x) = (E_0(x)+\lambda(x)1_{\S}) \phi^\alpha_{E_0,\lambda}
\end{equation}
where $\lambda \in \mathbb C$ and $\phi_{E_0,\lambda}^\alpha \in \S\otimes \E$ are solutions of the usual eigenvalue equation
\begin{equation}
(H(x)-E_0(x) \otimes 1_{\E}) \phi_{E_0,\lambda}^\alpha(x) = \lambda(x) \phi_{E_0,\lambda}^\alpha(x)
\end{equation}
\end{propo}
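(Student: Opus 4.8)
The statement to prove is Proposition \ref{spectre}: given $E_0(x) \in \mathcal L(\S)$ with $[E_0(x) \otimes 1_{\E}, H(x)] = 0$, every solution of the shifted ordinary eigenvalue equation $(H(x) - E_0(x) \otimes 1_{\E})\phi = \lambda \phi$ (with $\lambda \in \mathbb C$) gives an eigenoperator $E_0(x) + \lambda(x) 1_{\S}$ with $*$-eigenvector $\phi$. The plan is to verify the two defining conditions of an eigenoperator from Definition 1: the operator eigenvalue equation (\ref{eoe}) and the commutation $[E(x) \otimes 1_{\E}, H(x)] = 0$.

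\medskip

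\noindent First I would establish (\ref{eoe}). Set $E(x) = E_0(x) + \lambda(x) 1_{\S}$. Starting from the ordinary eigenvalue equation $(H(x) - E_0(x) \otimes 1_{\E})\phi_{E_0,\lambda}^\alpha(x) = \lambda(x) \phi_{E_0,\lambda}^\alpha(x)$, I simply move the $E_0 \otimes 1_{\E}$ term to the right-hand side:
\begin{equation}
H(x) \phi_{E_0,\lambda}^\alpha(x) = (E_0(x) \otimes 1_{\E}) \phi_{E_0,\lambda}^\alpha(x) + \lambda(x) \phi_{E_0,\lambda}^\alpha(x) = (E_0(x) + \lambda(x) 1_{\S}) \phi_{E_0,\lambda}^\alpha(x),
\end{equation}
where in the last step I use the natural identification of the action of $E_0(x) \in \mathcal L(\S)$ on $\S \otimes \E$ with that of $E_0(x) \otimes 1_{\E}$, and likewise $\lambda(x) 1_{\S}$ with $\lambda(x) 1_{\S \otimes \E}$. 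This is exactly equation (\ref{eoe}) with $E(x) = E_0(x) + \lambda(x) 1_{\S}$.

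\medskip

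\noindent Next I would check the commutation condition $[E(x) \otimes 1_{\E}, H(x)] = 0$. We have $E(x) \otimes 1_{\E} = E_0(x) \otimes 1_{\E} + \lambda(x) 1_{\S \otimes \E}$. The second summand is a scalar multiple of the identity and commutes with everything, so $[E(x) \otimes 1_{\E}, H(x)] = [E_0(x) \otimes 1_{\E}, H(x)]$, which vanishes by hypothesis. Continuity of $x \mapsto E(x)$ on $M$ follows from continuity of $E_0$ and $\lambda$, and continuity of $x \mapsto \phi_{E_0,\lambda}^\alpha(x)$ on $U^\alpha$ is assumed. Hence all requirements of Definition 1 are met.

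\medskip

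\noindent I do not anticipate a genuine obstacle here — the proposition is essentially a reformulation that shifts the spectral problem by a commuting operator $E_0$, and the whole content is the trivial algebraic identity above together with the observation that adding $\lambda 1_{\S}$ never spoils the commutation requirement. The only point requiring a word of care is the identification of $\mathcal L(\S)$-actions on $\S \otimes \E$ with $\mathcal L(\S \otimes \E)$-operators of the form $(\cdot) \otimes 1_{\E}$, which is implicit throughout the paper's $C^*$-module formalism; once that is granted the argument is immediate.
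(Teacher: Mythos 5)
Your proof is correct and is exactly the immediate verification the paper has in mind — indeed the paper states Proposition \ref{spectre} without any proof, treating it as the trivial rearrangement $(H - E_0\otimes 1_{\E})\phi = \lambda\phi \Rightarrow H\phi = (E_0 + \lambda 1_{\S})\phi$ together with the observation that $\lambda 1_{\S}\otimes 1_{\E}$ is central so the commutation hypothesis on $E_0$ carries over to $E = E_0 + \lambda 1_{\S}$. Nothing further is needed.
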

We can solve the eigenoperator equation by finding the subalgebra of $\mathcal L(\S)$ of the operators commuting with $H$, and for each element $E_0$, by diagonalizing $H-E_0 \otimes 1_{\E}$. 

\begin{prop}
$J^0_x$ (the orbit of the isotropy subgroup of $\rho_E^\alpha$ by $G_x$) is a normal subgroup of $G_x$.
\end{prop}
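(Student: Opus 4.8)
The plan is to build on the three equivalent descriptions of $J^0_x$ already recorded, above all on $J^0_x=\bigcup_{g\in G_x}gG_{x,\rho}g^{-1}=\bigcup_{g\in G_x}G_{x,\,g\rho g^\dagger}$, where I abbreviate $\rho:=\rho_E^\alpha(x)$ and write $G_{x,\sigma}:=\{j\in G_x:j\sigma j^\dagger=\sigma\}$ for the isotropy subgroup of a positive operator $\sigma$ under the left action $g\cdot\sigma:=g\sigma g^\dagger$ of $G_x$ on $\mathcal L(\S)$; the identity $gG_{x,\rho}g^{-1}=G_{x,g\rho g^\dagger}$ is the one-line check that $j\rho j^\dagger=\rho$ holds iff $(gjg^{-1})(g\rho g^\dagger)(gjg^{-1})^\dagger=g\rho g^\dagger$. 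In this form $J^0_x$ is the union of the isotropy subgroups of every point of the $G_x$-orbit of the mixed eigenstate. Note that all but one of the requirements for a normal subgroup are then for free: $1\in G_{x,\rho}\subseteq J^0_x$; $J^0_x$ is closed under inversion because each summand $gG_{x,\rho}g^{-1}$ is itself a group; and $J^0_x$ is manifestly stable under conjugation, since $h\,(gG_{x,\rho}g^{-1})\,h^{-1}=(hg)\,G_{x,\rho}\,(hg)^{-1}$ is again a summand — i.e. $J^0_x$ is a union of conjugacy classes of $G_x$. So the whole substance of the statement is that \emph{$J^0_x$ is closed under multiplication}; normality follows automatically.

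Closure under products cannot come from general group theory — for an arbitrary group the union of all conjugates of a proper subgroup is essentially never a subgroup — so one has to use the precise way $G_x$ sits relative to $\rho$. The cleanest route I would attempt is to prove the stronger assertion that $G_{x,\rho}$ is \emph{already} normal in $G_x$: this forces $gG_{x,\rho}g^{-1}=G_{x,\rho}$ for all $g$, hence $J^0_x=G_{x,\rho}$, a normal subgroup, and we are done. Equivalently one must show that all the isotropy subgroups $G_{x,g\rho g^\dagger}$ of the orbit of $\rho$ coincide. To make this concrete I would use the structural input — which ought to follow from the fact that $G_x$ preserves $\ker(H(x)-E(x)\otimes 1_\E)$ together with $\rho_E^\alpha$ being the reduced state of a vector therein, and which is already visible in the block shape used to define $\mathfrak j^1_x$ — that $G_x$ leaves $\setim\rho$ invariant. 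Then, splitting $\S=\setim\rho\oplus\ker\rho$ orthogonally, every $g\in G_x$ is block upper-triangular with some $\setim\rho$-block $d=:\pi(g)$, the map $\pi:G_x\to GL(\setim\rho)$ is a group homomorphism, a direct computation gives that $g\rho g^\dagger$ is supported on $\setim\rho$ and equals $d\rho_1 d^\dagger$ there (with $\rho_1:=\rho|_{\setim\rho}>0$), and hence $G_{x,g\rho g^\dagger}=\pi^{-1}(d(D\cap U_{\rho_1})d^{-1})$ where $D:=\pi(G_x)$ and $U_{\rho_1}:=\{u:u\rho_1 u^\dagger=\rho_1\}$. It then suffices to show that $D$ normalises $D\cap U_{\rho_1}$ in $GL(\setim\rho)$.

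That last step is the real obstacle, and it is exactly where the precise definition of $G_x$ — the \emph{maximal} stabiliser in $\mathcal{GL}(\S)$ of $\ker(H(x)-E(x)\otimes 1_\E)$ bound to the $*$-eigenvector $\phi_E^\alpha(x)$ — must enter: it should force the $D$-orbit of $\rho_1$ to be rigid enough that $D\cap U_{d\rho_1 d^\dagger}=D\cap U_{\rho_1}$ for every $d\in D$. This is the $C^*$-counterpart of the elementary fact underlying the dissipative case, that a gauge transformation $\tilde\phi=\lambda\phi$ is a mere norm change and so only rescales $\|\phi\|^2$ by the positive scalar $|\lambda|^2$ without altering which eigenstate one has; here $G_x$ plays the role of the scalar group $\mathbb C^*$ and the $\mathbb R^{*+}$-rescalings become the cosets of $G_{x,\rho}$. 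In the qubit example of Section 6 all of this is transparent — $\setim\rho$ is one-dimensional, $D\subseteq\mathbb C^*$ is abelian, $U_\sigma=U(1)$ for every $\sigma>0$, and $J^0_x=G_{x,\rho}$ on the nose — so I expect the general proof to consist of first extracting the appropriate rigidity of the $D$-action on $\rho_1$ from the definition of $G_x$, and then checking normality of $D\cap U_{\rho_1}$ in $D$; everything up to that point is bookkeeping with the block decomposition.
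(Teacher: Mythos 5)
Your proposal is not a complete proof, and you say so yourself: everything reduces to showing that $J^0_x$ is closed under multiplication, and your strategy for that step --- proving that the isotropy subgroup $G_{x,\rho}$ (with $\rho=\rho_E^\alpha(x)$) is already normal in $G_x$, so that the union of its conjugates collapses to $G_{x,\rho}$ itself --- is left as a programme ("that last step is the real obstacle") rather than carried out. That said, your diagnosis of where the content lies is exactly right, and it is sharper than the paper's own argument: the proof given in \ref{annexeA} consists solely of the computation $gjg^{-1}\bigl(gg'\rho (gg')^\dagger\bigr)(gjg^{-1})^\dagger = gg'\rho(gg')^\dagger$, i.e.\ precisely the stability under conjugation that you correctly observe is automatic for any union of conjugates of a subgroup; the paper never addresses closure under products at all. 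So the gap you have isolated is a gap in the published proof as well, not only in your attempt.

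The missing step moreover cannot follow from the definition of $J^0_x$ alone, so some structural input on $G_x$ (or a reinterpretation of the definition) is unavoidable. In a general group the union of all conjugates of a subgroup is not closed under multiplication: for instance, if $G_x$ were all of $\mathcal{GL}(\S)$ with $\dim\S=2$ and $\rho$ invertible, then $G_{x,\rho}=\rho^{1/2}\,\mathcal U(\S)\,\rho^{-1/2}$ and $J^0_x$ as defined would be the set of operators similar to a unitary; but
\begin{equation}
\left(\begin{array}{cc} 1 & 1 \\ 0 & -1 \end{array}\right)
\left(\begin{array}{cc} 1 & 0 \\ 1 & -1 \end{array}\right)
=
\left(\begin{array}{cc} 2 & -1 \\ -1 & 1 \end{array}\right)
\end{equation}
exhibits two involutions, each similar to the unitary $\mathrm{diag}(1,-1)$, whose product has eigenvalues $(3\pm\sqrt5)/2$ and hence is similar to no unitary. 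Consequently any correct proof must exploit exactly the kind of rigidity you point to --- the block-triangular action of $G_x$ relative to $\setim\rho$ that is visible in the definition of $\mathfrak j^1_x$ and in the qubit example, where indeed $G_{x,\rho}$ is normal and $J^0_x=G_{x,\rho}$ --- or else one must read $J^0_x$ not as the union but as the subgroup it generates (the normal closure of $G_{x,\rho}$), in which case normality is immediate and no computation is needed. Either way, the final rigidity statement you defer (normality of $D\cap U_{\rho_1}$ in $D=\pi(G_x)$) is the genuine mathematical content, and neither your proposal nor the paper supplies a proof of it.
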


\begin{proof}
Let $j \in J^0_x$ and $g\in G_x$. By definition of $J^0_x$, $\exists g' \in G_x$ such that $jg'\rho_E(g')^\dagger j^\dagger = g'\rho_E(g')^\dagger$. We have then
\begin{equation}
 gjg^{-1} \left(gg'\rho_E(g')^\dagger g^\dagger \right) (g^\dagger)^{-1} j^\dagger g^\dagger = gg'\rho_E(g')^\dagger g^\dagger 
\end{equation}
Then $gjg^{-1} \in gg'G_{x,\rho_E} (g')^\dagger g^\dagger \subset J_x^0$. We have then $gJ_x^0g^{-1} \subset J_x^0$, $J_x^0$ is normal.
\end{proof}

\begin{prop}
$\mathfrak j^1_x$ is a solvable subalgebra of $\mathfrak g_x$.
\end{prop}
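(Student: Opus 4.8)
The plan is to work in an orthonormal basis of $\S$ diagonalising $\rho_E^\alpha(x)$, ordered so that $\setim\rho_E^\alpha(x)$ is spanned by the first $p$ vectors and $\ker\rho_E^\alpha(x)$ by the last $n-p$ (so $p=\mathrm{rank}\,\rho_E^\alpha(x)$). In this basis the condition defining $\mathfrak j^1_x$ is precisely that $O\in\mathfrak g_x$ annihilates $\setim\rho_E^\alpha(x)$, i.e. $\hat O=\left(\begin{array}{cc}0 & B_O\\ 0 & D_O\end{array}\right)$ with $B_O$ of type $p\times(n-p)$ and $D_O$ of type $(n-p)\times(n-p)$. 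First I would verify that $\mathfrak j^1_x$ is a Lie subalgebra: the product of two matrices of this shape is $\left(\begin{array}{cc}0 & B_OD_{O'}\\ 0 & D_OD_{O'}\end{array}\right)$, again of the shape, so the set of all such matrices is an associative — hence Lie — subalgebra of $\mathcal L(\S)$, and $\mathfrak j^1_x$ is its intersection with the Lie algebra $\mathfrak g_x$; equivalently, if $O$ and $O'$ both kill $\setim\rho_E^\alpha(x)$ then so does $[O,O']$.

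Next I would prove solvability by factoring out an abelian ideal. Let $\mathfrak n_x=\{O\in\mathfrak j^1_x : D_O=0\}$, the elements of $\mathfrak j^1_x$ that moreover map $\S$ into $\setim\rho_E^\alpha(x)$. A one-line block computation shows that the product of any two elements of $\mathfrak n_x$ vanishes, so $\mathfrak n_x$ is abelian, and that it is an ideal of $\mathfrak j^1_x$ (the product, in either order, of an element of $\mathfrak j^1_x$ with an element of $\mathfrak n_x$ lies in $\mathfrak n_x$). The assignment $O\mapsto D_O$ is then a Lie-algebra homomorphism $\mathfrak j^1_x\to\mathfrak{gl}(n-p,\mathbb C)$ with kernel $\mathfrak n_x$; its image $\mathfrak d_x$ is the subalgebra describing the induced action of $\mathfrak j^1_x$ on $\S/\setim\rho_E^\alpha(x)$, so $\mathfrak j^1_x/\mathfrak n_x\simeq\mathfrak d_x$. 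Since an extension of a solvable Lie algebra by a solvable (here abelian) ideal is solvable, the whole statement reduces to the solvability of $\mathfrak d_x$.

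I expect that last point to be the main obstacle, because it is the only place where one must really use the definition of $\mathfrak g_x$ — that $O\otimes 1_\E$ leaves $V:=\ker(H(x)-E(x)\otimes 1_\E)$ invariant — together with the fact that $\rho_E^\alpha=\tr_\E(|\phi_E^\alpha\rrangle\llangle\phi_E^\alpha|)$ for some $\phi_E^\alpha\in V$. My plan is to use $[E(x)\otimes 1_\E,H(x)]=0$ and the self-adjointness of $H(x)$ to split $\S\otimes\E$ into the joint (generalised) eigenspaces of the commuting pair $(H,E\otimes 1_\E)$, to note that $\setim\rho_E^\alpha(x)$ is exactly the $\S$-support of $\phi_E^\alpha$ and is therefore carved out by these joint blocks, and then to argue that any $O\in\mathfrak j^1_x$ — which kills $\setim\rho_E^\alpha(x)$ and, being in $\mathfrak g_x$, permutes the $G_x$-orbit $\{g\,\setim\rho_E^\alpha(x):g\in G_x\}$ of $p$-planes whose span is the $\S$-support of $V$ — must stabilise a flag in $\S/\setim\rho_E^\alpha(x)$. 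Then $\mathfrak d_x$ lies in a triangular (Borel) subalgebra of $\mathfrak{gl}(n-p,\mathbb C)$, hence is solvable, and with the previous paragraph $\mathfrak j^1_x$ is solvable. If the flag does not by itself account for all of $\mathfrak d_x$, the fallback is an induction on $n-p$: exhibit a nonzero $\mathfrak d_x$-invariant line in $\S/\setim\rho_E^\alpha(x)$ coming from the structure of $\phi_E^\alpha$ and iterate on the quotient. The delicate bookkeeping linking $V$, $\phi_E^\alpha$ and the orbit of $\setim\rho_E^\alpha(x)$ — and, if it is needed, the non-degeneracy of $E$ (transitivity of $G_x\times K_x$ on $V$) to identify $V$ with $\overline{G_xK_x\phi_E^\alpha}$ — is the technical heart.
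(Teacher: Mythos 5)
Your reduction is sound and in fact more careful than the paper's own argument. Writing $\hat O$ in the diagonalization basis of $\rho_E^\alpha(x)$ with top-right block $B_O$ and bottom-right block $D_O$, you correctly check that $\mathfrak j^1_x$ is closed under the bracket, that $\mathfrak n_x=\{O\in\mathfrak j^1_x : D_O=0\}$ is an abelian ideal, and that $O\mapsto D_O$ is a Lie homomorphism with kernel $\mathfrak n_x$, so that solvability of $\mathfrak j^1_x$ is equivalent to solvability of the image $\mathfrak d_x\subset\mathfrak{gl}(n-p,\mathbb C)$. The genuine gap is that you never prove this last statement. What you offer for it is a programme (split $\S\otimes\E$ into joint blocks of the commuting pair $(H,E\otimes 1_{\E})$, extract a $\mathfrak d_x$-invariant flag in the quotient by $\setim\rho_E^\alpha(x)$, fall back on induction), but no invariant line or flag is actually exhibited; nothing in the definition of $\mathfrak j^1_x$ --- all of $\mathfrak g_x$ annihilating $\setim\rho_E^\alpha(x)$ --- visibly forces the induced action on $\ker\rho_E^\alpha(x)$ to be triangularizable, and it is not evident that the sketched argument can be carried through. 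As a proof the proposal is therefore incomplete at exactly the point you yourself flag as the technical heart.

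For comparison, the paper's proof is a two-step computation of the derived series: for $O_1,O_2\in\mathfrak j^1_x$ it writes $\widehat{[O_1,O_2]}$ with vanishing bottom-right block, concludes that $(\mathfrak j^1_x)'$ lies in your $\mathfrak n_x$, and then that $(\mathfrak j^1_x)''=0$ because the product of any two elements of $\mathfrak n_x$ vanishes. But the bottom-right block of $\widehat{[O_1,O_2]}$ is $[D_{O_1},D_{O_2}]$, which is not zero for arbitrary matrices of the stated shape; the paper's computation is therefore valid exactly when $\mathfrak d_x$ is abelian, an unargued and even stronger form of the claim your reduction isolates. So you have correctly located the one nontrivial point of the proposition, but you have not closed it, and you cannot simply import the paper's block computation to close it without first justifying why the $D$-blocks of elements of $\mathfrak g_x$ commute (or at least generate a solvable algebra).
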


\begin{proof}
$\forall O_1,O_2 \in \mathfrak j_x^1$
\begin{eqnarray}
\widehat{[O_1,O_2]} & = & \left(\begin{array}{cc} 0 & *_{11} \\ 0 & *_{12} \end{array} \right)\left(\begin{array}{cc} 0 & *_{21} \\ 0 & *_{22} \end{array} \right) - \left(\begin{array}{cc} 0 & *_{21} \\ 0 & *_{22} \end{array} \right)\left(\begin{array}{cc} 0 & *_{11} \\ 0 & *_{12} \end{array} \right) \nonumber \\
& = & \left(\begin{array}{cc} 0 & *_{11}*_{22} - *_{21}*_{12} \\ 0 & 0 \end{array} \right)
\end{eqnarray}
then
\begin{equation}
(\mathfrak j_x^1)' = \left\{O \in \mathfrak g_x, \hat O = \left(\begin{array}{cc} 0 & * \\ 0 & 0 \end{array} \right) \right\}
\end{equation}
$\forall O_1,O_2\in (\mathfrak j_x^1)'$
\begin{eqnarray}
\widehat{[O_1,O_2]} & = & \left(\begin{array}{cc} 0 & *_{1} \\ 0 & 0 \end{array} \right) \left(\begin{array}{cc} 0 & *_2 \\ 0 & 0 \end{array} \right) - \left(\begin{array}{cc} 0 & *_2 \\ 0 & 0 \end{array} \right)  \left(\begin{array}{cc} 0 & *_{1} \\ 0 & 0 \end{array} \right) \nonumber \\
& = & \left(\begin{array}{cc} 0 & 0 \\ 0 & 0 \end{array} \right)
\end{eqnarray}
then $(\mathfrak j_x^1)'' = \{0\}$.
\end{proof}

\begin{prop}
$E(x) \in \mathfrak g_x$
\end{prop}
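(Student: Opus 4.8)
The plan is to prove the invariance property that defines membership in $\mathfrak g_x$ directly from the commutation relation built into the notion of an eigenoperator. Recall that $\mathfrak g_x$ was introduced as the maximal subalgebra of $\mathcal L(\S)$ leaving the subspace $\ker(H(x)-E(x)\otimes 1_{\E})\subset\SE$ invariant, where an operator $X\in\mathcal L(\S)$ acts through $X\otimes 1_{\E}$. Thus it suffices to check that $(E(x)\otimes 1_{\E})$ maps $\ker(H(x)-E(x)\otimes 1_{\E})$ into itself.

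First I would pick an arbitrary $\phi\in\ker(H(x)-E(x)\otimes 1_{\E})$, so that $H(x)\phi=(E(x)\otimes 1_{\E})\phi$, and compute $(H(x)-E(x)\otimes 1_{\E})(E(x)\otimes 1_{\E})\phi$. Using the defining condition $[E(x)\otimes 1_{\E},H(x)]=0$ I can commute $E(x)\otimes 1_{\E}$ past $H(x)$, obtaining
\begin{equation}
(H(x)-E(x)\otimes 1_{\E})(E(x)\otimes 1_{\E})\phi = (E(x)\otimes 1_{\E})(H(x)-E(x)\otimes 1_{\E})\phi = 0 .
\end{equation}
Hence $(E(x)\otimes 1_{\E})\phi$ again lies in $\ker(H(x)-E(x)\otimes 1_{\E})$, which is exactly the invariance required, so $E(x)\in\mathfrak g_x$.

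The only subtlety worth a remark is that $\mathfrak g_x$ is the Lie algebra of the group $G_x\subset\mathcal{GL}(\S)$ of invertible operators, whereas $E(x)$ need not be invertible; but since $\mathfrak g_x$ is characterised as the \emph{maximal} subalgebra of $\mathcal L(\S)$ stabilising the kernel, and the collection of all operators stabilising a fixed subspace is automatically a subalgebra, the invariance just shown places $E(x)$ inside it. I do not anticipate any genuine obstacle: the statement is essentially immediate once the commutation relation is invoked, and it is precisely what is needed so that the time-ordered exponential $\Te^{-\ihbar^{-1}\int_0^t E(x(t'))dt'}$ appearing in Property~\ref{propadiabtransp} takes values in $G_{x(t)}$.
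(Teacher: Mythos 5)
Your proof is correct and uses exactly the same idea as the paper: the commutation relation $[E(x)\otimes 1_{\E},H(x)]=0$ shows that $E(x)\otimes 1_{\E}$ maps $\ker(H(x)-E(x)\otimes 1_{\E})$ into itself, hence $E(x)\in\mathfrak g_x$. Your version is in fact marginally more complete, since you check the invariance on an arbitrary element of the kernel rather than only on $\phi_E$ as the paper does.
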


\begin{proof}
\begin{equation}
HE\phi_E = EH\phi_E = EE\phi_E \Rightarrow E\phi_E \in \ker(H-E) \Rightarrow E \in \mathfrak g_x
\end{equation}
\end{proof}

\begin{prop}
\label{inner}
We assume that $E$ is non-degenerate. Let $U^\alpha \ni x \mapsto \psi^\alpha(x) \in \ker(H(x)-E(x)\otimes 1_{\E})$ and $U^\alpha \ni x \mapsto \phi^\alpha(x) \in \ker(H(x)-E(x) \otimes 1_{\E})$ be two continuous $*$-eigenvectors, then $\langle \phi^\alpha|\psi^\alpha \rangle_* \in \Gamma(U^\alpha,P^\alpha_{\mathfrak g})$ ($\Gamma(U^\alpha,P^\alpha_{\mathfrak g})$ denotes the set of the sections of $P^\alpha_{\mathfrak g}$).
\end{prop}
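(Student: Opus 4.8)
The plan is to strip the statement down to one purely algebraic fact about $\ker(H(x)-E(x)\otimes 1_{\E})$ and then to establish that fact using the non-degeneracy hypothesis. First I would note that over the single chart $U^\alpha$ the principal bundle $P^\alpha_G$ is trivialised by $\zeta^\alpha$, so the associated adjoint bundle $P^\alpha_{\mathfrak g}$ has fibre over $x$ identified with $\mathfrak g_x\subset\mathcal L(\S)$ and a section of $P^\alpha_{\mathfrak g}$ over $U^\alpha$ is exactly a continuous assignment $x\mapsto X(x)$ with $X(x)\in\mathfrak g_x$ (the same identification already used implicitly, e.g. for $dgg^{-1}$). Continuity of $x\mapsto\langle\phi^\alpha(x)|\psi^\alpha(x)\rangle_*$ is immediate from the continuity of $\phi^\alpha,\psi^\alpha$ and the $\S$-bilinearity of $\langle\cdot|\cdot\rangle_*$, so the entire content is the pointwise assertion $\langle\phi^\alpha(x)|\psi^\alpha(x)\rangle_*\in\mathfrak g_x$ for each $x$, which by the definition of $\mathfrak g_x$ means: the operator $\langle\phi^\alpha(x)|\psi^\alpha(x)\rangle_*\otimes 1_{\E}$ leaves $\ker(H(x)-E(x)\otimes 1_{\E})$ invariant.

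Fix $x$ and write $\mathcal K=\ker(H(x)-E(x)\otimes 1_{\E})$, $\phi_E=\phi^\alpha_E(x)$. Since $E$ is non-degenerate, the gauge orbit $(G_x\times K_x)\phi_E$ fills $\mathcal K$, so I would put $\psi^\alpha(x)=(g_\psi\otimes 1_{\E})(1_{\S}\otimes k_\psi)\phi_E$ and $\phi^\alpha(x)=(g_\phi\otimes 1_{\E})(1_{\S}\otimes k_\phi)\phi_E$ with $g_\psi,g_\phi\in G_x$, $k_\psi,k_\phi\in K_x$, and represent an arbitrary test vector of $\mathcal K$ likewise. Expanding $\langle\phi^\alpha(x)|\psi^\alpha(x)\rangle_*=\tr_{\E}(|\psi^\alpha(x)\rrangle\llangle\phi^\alpha(x)|)$, pulling the $\S$-factors $g_\psi$ (left) and $g_\phi^\dagger$ (right) out of the partial trace and using the one cyclicity the partial trace enjoys — with respect to operators of the form $1_{\S}\otimes(\cdot)$ — gives $\langle\phi^\alpha(x)|\psi^\alpha(x)\rangle_*=g_\psi\,\langle\phi_E|(1_{\S}\otimes k_\phi^{-1}k_\psi)\phi_E\rangle_*\,g_\phi^\dagger$ with $k_\phi^{-1}k_\psi\in K_x$. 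Re-absorbing $g_\psi$ and $g_\phi^\dagger$ through the left linearity and left antilinearity rules $\langle\chi_1|(g\otimes 1_{\E})\chi_2\rangle_*=g\langle\chi_1|\chi_2\rangle_*$ and $\langle\chi_1|\chi_2\rangle_*\,g^\dagger=\langle(g\otimes 1_{\E})\chi_1|\chi_2\rangle_*$ turns this into $\langle(g_\phi\otimes 1_{\E})\phi_E|(g_\psi\otimes 1_{\E})(1_{\S}\otimes k_\phi^{-1}k_\psi)\phi_E\rangle_*$, i.e. again $\langle\phi'|\psi'\rangle_*$ with $\phi',\psi'\in\mathcal K$ (here $g_\phi,g_\psi\in G_x$ preserve $\mathcal K$ and $k_\phi^{-1}k_\psi\in K_x$ commutes with $H(x)-E(x)\otimes 1_{\E}$). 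So the parametrisation by itself does not crack the problem, and one is left with the genuine core.

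It remains to show that $\langle\phi'|\psi'\rangle_*\otimes 1_{\E}$ preserves $\mathcal K$ whenever $\phi',\psi'\in\mathcal K$. Writing out $(\langle\phi'|\psi'\rangle_*\otimes 1_{\E})(g\otimes 1_{\E})(1_{\S}\otimes k)\phi_E$ and moving the $\S$-factor $g$ and the $\E$-unitary $k$ back through the partial trace, this reduces to checking that, for the distinguished eigenvector $\phi_E$ and arbitrary $g,g'\in G_x$, $e\in K_x$, the operator $\big(g'\,\langle\phi_E|(1_{\S}\otimes e)\phi_E\rangle_*\,g^{\dagger}\big)\otimes 1_{\E}$ sends $(g\otimes 1_{\E})\phi_E$ into $\mathcal K$. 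This is the step I expect to be the main obstacle, and the reason is precisely that $g^{\dagger}\notin G_x$ in general: $G_x$ is not closed under the $\S$-adjoint (in the qubit example $\mathfrak g_x$ is the lower triangular algebra while $\mathfrak g_x^{\dagger}$ is the upper triangular one), so one cannot run a factor-by-factor argument saying each of $g'\otimes 1_{\E}$, $\langle\phi_E|(1_{\S}\otimes e)\phi_E\rangle_*\otimes 1_{\E}$ and $g^{\dagger}\otimes 1_{\E}$ preserves $\mathcal K$. Instead I would exploit that $\langle\phi_E|(1_{\S}\otimes e)\phi_E\rangle_*$ is the partial trace of $|(1_{\S}\otimes e)\phi_E\rrangle\llangle\phi_E|$ with both $(1_{\S}\otimes e)\phi_E$ and $\phi_E$ in $\mathcal K$: unwinding this partial trace in an orthonormal basis of $\E$ and inserting the orthogonal projector $P^\alpha_E(x)$ onto $\mathcal K$ (which is the identity on $\mathcal K$ and commutes with every $1_{\S}\otimes K_x$, while $G_x\otimes 1_{\E}$ preserves $\setim P^\alpha_E(x)$), one sees that the components of $\phi_E$ orthogonal to $\mathcal K$ cannot survive into the assembled output vector, so the part of $g^{\dagger}\otimes 1_{\E}$ that would push out of $\mathcal K$ is annihilated before the output is formed and the result lies in $\mathcal K$. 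Combined with the continuity observed in the first step, this yields $\langle\phi^\alpha|\psi^\alpha\rangle_*\in\Gamma(U^\alpha,P^\alpha_{\mathfrak g})$.
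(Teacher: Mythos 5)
Your proposal correctly isolates the crux --- showing that $\langle\phi'|\psi'\rangle_*\otimes 1_{\E}$ preserves $\mathcal K=\ker(H(x)-E(x)\otimes 1_{\E})$ --- and correctly observes that a factor-by-factor argument breaks down because $G_x$ is not stable under $g\mapsto g^\dagger$. But the step you offer in its place is not a proof: ``inserting $P^\alpha_E(x)$ \dots\ one sees that the components of $\phi_E$ orthogonal to $\mathcal K$ cannot survive'' names no actual mechanism ($\phi_E$ lies in $\mathcal K$, so it has no such components), and there is no identity letting $P_E$ pass through a partial trace, which couples the $\E$ tensor factors in a way $P_E$ does not control. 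More decisively, the ``genuine core'' as you state it --- $\phi',\psi'\in\mathcal K$ implies $\tr_{\E}(|\psi'\rrangle\llangle\phi'|)\in\mathfrak g_x$ --- is false without the non-degeneracy hypothesis, and nothing in your final paragraph invokes that hypothesis (commutation of $P_E$ with $1_{\S}\otimes K_x$ and invariance of $\setim P_E$ under $G_x\otimes 1_{\E}$ hold for any $\mathcal K$ of this form). Concretely, take $\S=\E=\mathbb C^2$, $u_1,u_2$ orthonormal in $\S$, $v_1,v_2$ unit vectors in $\E$ with $\langle v_1|v_2\rangle_{\E}=c\neq 0$, and $\mathcal K=\mathrm{span}\{u_1\otimes v_1,u_2\otimes v_2\}$ (realized, e.g., as $\ker(1-P_{\mathcal K})$ with $E$ scalar). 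Then $\mathfrak g_x$ consists exactly of the operators diagonal in the basis $(u_1,u_2)$, while for $\phi'=u_1\otimes v_1$ and $\psi'=u_2\otimes v_2$ one finds $\langle\phi'|\psi'\rangle_*=c\,|u_2\rangle\langle u_1|\notin\mathfrak g_x$. Here the $G_x\times K_x$-orbit of $\phi_E$ is dense in $\mathcal K$ but not all of it, so $E$ is (barely) degenerate; your final step, which never distinguishes this situation, would ``prove'' a false statement.

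The paper puts the transitivity to work exactly where your sketch does not: it is used to identify $\mathcal L(\mathcal K)$ with the enveloping algebra of $\mathfrak g_x\oplus\mathfrak k_x$, so that the rank-one operator $|\psi'\rrangle\llangle\phi'|$ can be written as $\sum_i X_i\otimes Y_i$ with $X_i\in\mathfrak g_x$ and $Y_i\in\mathfrak k_x$; its partial trace is then $\sum_i\tr_{\E}(Y_i)\,X_i$, manifestly in $\mathfrak g_x$. To repair your argument you would need a step of this kind (or, alternatively, a proof that strict transitivity forces $\mathcal K$ into special forms such as $\S_0\otimes w$, $u\otimes\E_0$ and orthogonal sums thereof, for which the partial trace visibly lands in $\mathfrak g_x$). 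The reduction to a pointwise statement and the parametrisation by $G_x\times K_x$ are correct but, as you yourself note, carry none of the load.
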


\begin{proof}
Since the action of $G_x \times K_x$ is transitive, we have $\mathcal{GL}(\ker(H(x)-E(x)\otimes 1_{\E})) = G_x \times K_x$. This induces that $\mathcal{L}(\ker(H(x)-E(x)\otimes 1_{\E})) = \mathrm{Env}\left(\mathfrak g_x \oplus \mathfrak k_x\right)$ ($\mathrm{Env}(\mathfrak a)$ denotes the universal enveloping algebra of $\mathfrak a$). $\mathfrak g_x$ and $\mathfrak k_x$ are operators Lie algebras which are stable by operator composition, we then have $\mathrm{Env}\left(\mathfrak g_x \oplus \mathfrak k_x \right)= \mathbb C \oplus \mathfrak g_x \oplus \mathfrak k_x \oplus \mathfrak g_x \otimes \mathfrak k_x$.\\
$\exists X_i \in \mathfrak g_x$ and $\exists Y_i \in \mathfrak k_x$ such that
\begin{eqnarray}
|\psi \rrangle \llangle \phi | & = & \sum_i X_i \otimes Y_i \nonumber \\
 \Rightarrow \langle \phi|\psi \rangle_* & = & \sum_i \tr_{\E}(Y_i)X_i \in \mathfrak g_x
\end{eqnarray}
\end{proof}

\section{Mathematical properties of the $C^*$-geometric phase generator}
\label{annexeB}
\begin{prop}
The $C^*$-geometric phase generator is almost antiselfadjoint in the following sense
\begin{equation}
\forall \alpha, \forall x \in U^\alpha \quad \tr_{\S}\left(\rho_E^\alpha(x)(\mathcal A(x)+\mathcal A(x)^\dagger) \right) = 0
\end{equation}
\end{prop}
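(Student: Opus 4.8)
The plan is to read the identity off directly from equation (\ref{geomphasematdens}) together with the fact that a density matrix has unit trace.

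First I would apply the partial trace $\tr_{\S}$ to both sides of the relation $d\rho_E^\alpha = \mathcal A^\alpha \rho_E^\alpha + \rho_E^\alpha (\mathcal A^\alpha)^\dagger$ established in equation (\ref{geomphasematdens}). Since $\tr_{\S} : \mathcal L(\S) \to \mathbb C$ is a fixed linear map, carrying no dependence on the base point $x \in U^\alpha$, it commutes with the exterior derivative $d$ of $M$; hence the left-hand side becomes $\tr_{\S}(d\rho_E^\alpha) = d\,(\tr_{\S} \rho_E^\alpha)$. Now $\rho_E^\alpha = \|\phi_E^\alpha\|^2_*$ is a mixed eigenstate --- this is precisely why the normalization $\llangle \phi_E^\alpha | \phi_E^\alpha \rrangle = 1$ was imposed --- so $\tr_{\S} \rho_E^\alpha = 1$ identically on $U^\alpha$ and $d\,(\tr_{\S}\rho_E^\alpha) = 0$.

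It then only remains to rewrite the trace of the right-hand side in the form that appears in the statement. Using the cyclicity of $\tr_{\S}$ we have $\tr_{\S}(\mathcal A^\alpha \rho_E^\alpha) = \tr_{\S}(\rho_E^\alpha \mathcal A^\alpha)$, so that $\tr_{\S}(\mathcal A^\alpha \rho_E^\alpha + \rho_E^\alpha (\mathcal A^\alpha)^\dagger) = \tr_{\S}(\rho_E^\alpha (\mathcal A^\alpha + (\mathcal A^\alpha)^\dagger))$, which therefore vanishes for every $\alpha$ and every $x \in U^\alpha$.

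There is no genuine obstacle here: the statement is a one-line corollary of the dynamical relation between $d\rho_E^\alpha$ and $\mathcal A^\alpha$ and of the trace-preservation property $\tr_{\S}\rho_E^\alpha \equiv 1$, and it is exactly the $C^*$-module transcription of the dissipative fact that $2\,\re A^\alpha = d\ln\|\phi_e^\alpha\|^2$ is $d$-exact. The only point that deserves a word of justification is the commutation of $\tr_{\S}$ with $d$, which holds because $\tr_{\S}$ is independent of $x$.
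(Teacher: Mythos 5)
Your proof is correct. Where the paper computes directly from the definition of $\mathcal A^\alpha$ --- namely $\tr_{\S}(\rho_E\mathcal A)=\tr_{\S}\langle\phi_E|d\phi_E\rangle_* = \llangle\phi_E|d\phi_E\rrangle = -\llangle d\phi_E|\phi_E\rrangle = -\tr_{\S}(\rho_E\mathcal A^\dagger)$, using the identity $\tr_{\S}\langle\psi|\phi\rangle_* = \llangle\psi|\phi\rrangle$ and the antisymmetry of $\llangle\phi_E|d\phi_E\rrangle$ under conjugation --- you instead take the $\tr_{\S}$ of the previously established relation $d\rho_E^\alpha = \mathcal A^\alpha\rho_E^\alpha + \rho_E^\alpha(\mathcal A^\alpha)^\dagger$ and invoke $\tr_{\S}\rho_E^\alpha \equiv 1$. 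The two arguments are logically equivalent: $\tr_{\S}\rho_E^\alpha = \llangle\phi_E^\alpha|\phi_E^\alpha\rrangle$ and $\tr_{\S}(\mathcal A^\alpha\rho_E^\alpha)=\llangle\phi_E^\alpha|d\phi_E^\alpha\rrangle$, so both reduce to the single fact $d\llangle\phi_E^\alpha|\phi_E^\alpha\rrangle=0$ forced by the normalization imposed on the $*$-eigenvector. Your packaging as a corollary of the earlier proposition is slightly more economical and makes the analogy with the dissipative case ($2\re A^\alpha = d\ln\|\phi_e^\alpha\|^2$) more transparent; the paper's direct computation has the minor advantage of not presupposing the earlier proposition and of working for any solution $\mathcal A^\alpha$ of equation (\ref{geomphase}) without passing through (\ref{geomphasematdens}) --- though, as you implicitly note, every such solution does satisfy (\ref{geomphasematdens}), so nothing is lost. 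Your justification of the commutation of $\tr_{\S}$ with $d$ is the right point to flag and is handled correctly.
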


\begin{proof}
\begin{eqnarray}
\tr_{\S}(\rho_E \mathcal A) & = & \tr_{\S}(\mathcal A \rho_E) \nonumber\\
& = & \tr_{\S}(\langle \phi_E|d\phi_E\rangle_*) \nonumber \\
& = & \llangle \phi_E|d\phi_E \rrangle \nonumber \\
& = & - \llangle d\phi_E | \phi_E \rrangle \nonumber \\
& = & - \tr_{\S}(\langle \phi_E|d\phi_E \rangle_*^\dagger) \nonumber \\
& = & - \tr_{\S}(\rho_E \mathcal A^\dagger)
\end{eqnarray}
\end{proof}

\begin{prop}
Assuming that $\rho_E^\alpha$ is invertible, then the $C^*$-geometric phase generator is 
\begin{equation}
\breve {\mathcal A}^\alpha = \frac{1}{2} d\rho_E^\alpha (\rho_E^\alpha)^{-1}
\end{equation}
modulo a $\mathfrak j_x$-gauge transformation. 
\end{prop}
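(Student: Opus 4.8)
The plan is to read the statement off the already-established relation~(\ref{geomphasematdens}) by inverting $\rho_E^\alpha$, and then to recognise the leftover term as a $\mathfrak j_x$-gauge potential using the criterion isolated in the proof of Property~\ref{gaugetransf}.

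First, let $\mathcal A^\alpha$ be the (unique) solution of the defining equation~(\ref{geomphase}). By~(\ref{geomphasematdens}) one has $d\rho_E^\alpha = \mathcal A^\alpha\rho_E^\alpha + \rho_E^\alpha(\mathcal A^\alpha)^\dagger$; right-multiplying by $\frac12(\rho_E^\alpha)^{-1}$, which is legitimate since $\rho_E^\alpha$ is invertible, yields
\begin{equation}
\breve{\mathcal A}^\alpha = \frac12 d\rho_E^\alpha(\rho_E^\alpha)^{-1} = \mathcal A^\alpha + \eta, \qquad \eta := \frac12\left(\rho_E^\alpha(\mathcal A^\alpha)^\dagger(\rho_E^\alpha)^{-1} - \mathcal A^\alpha\right).
\end{equation}
Thus $\breve{\mathcal A}^\alpha$ differs from the genuine $C^*$-geometric phase generator $\mathcal A^\alpha$ exactly by the one-form $\eta\in\Omega^1(U^\alpha,\mathcal L(\S))$, and the whole matter reduces to showing that $\eta$ is the potential of a $\mathfrak j_x$-gauge transformation. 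By the computation closing the proof of Property~\ref{gaugetransf}, such a term lies in $\Omega^1(U^\alpha,P^\alpha_{\mathfrak j})$ precisely when $\eta\rho_E^\alpha + \rho_E^\alpha\eta^\dagger = 0$.

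Second, I would verify that relation by a direct computation. Using $\rho_E^\alpha = (\rho_E^\alpha)^\dagger$, and hence $\left((\rho_E^\alpha)^{-1}\right)^\dagger = (\rho_E^\alpha)^{-1}$, one finds $\eta\rho_E^\alpha = \frac12\left(\rho_E^\alpha(\mathcal A^\alpha)^\dagger - \mathcal A^\alpha\rho_E^\alpha\right)$ and $\rho_E^\alpha\eta^\dagger = \frac12\left(\mathcal A^\alpha\rho_E^\alpha - \rho_E^\alpha(\mathcal A^\alpha)^\dagger\right)$, so their sum is zero. Since $\rho_E^\alpha$ is invertible, $J_x = J^0_x$ (as remarked after the definition of $\mathfrak j_x$), so the vanishing of $\eta\rho_E^\alpha + \rho_E^\alpha\eta^\dagger$ is exactly what makes $\eta$ a $\mathfrak j_x$-valued one-form; hence $\breve{\mathcal A}^\alpha = \mathcal A^\alpha + \eta$ is the $C^*$-geometric phase generator modulo the $\mathfrak j_x$-gauge transformation $\eta$, which is the assertion.

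The one step that I expect to need genuine care is arguing that $\eta$ takes values not merely in the set of operators $O$ with $O\rho_E^\alpha + \rho_E^\alpha O^\dagger = 0$, but in $\mathfrak j_x\subset\mathfrak g_x$ itself. A careful route splits $\mathcal A^\alpha = A^\alpha + R^\alpha$ into gauge potential and remainder, observes that $\rho_E^\alpha\in\mathfrak g_x$ and, $\rho_E^\alpha$ being invertible, $(\rho_E^\alpha)^{-1}\in\mathfrak g_x$ as well (the inverse of an invertible element of a finite-dimensional unital associative algebra lies in that algebra, by Cayley--Hamilton), and then uses Property~\ref{inner} together with $\tr_{\S}(\rho_E^\alpha R^\alpha)=0$ and the $\llangle\cdot|\cdot\rrangle$-orthogonality of $(1-P_E^\alpha)\phi_E^\alpha$ to $\ker(H(x)-E(x)\otimes 1_{\E})$ to treat the $A^\alpha$- and $R^\alpha$-contributions to $\eta$ separately. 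The more economical alternative, effectively the one used in the paper, is to quote verbatim the implication already recorded in the proof of Property~\ref{gaugetransf}, namely $\eta\rho_E^\alpha + \rho_E^\alpha\eta^\dagger = 0 \Rightarrow \eta\in\Omega^1(U^\alpha,P^\alpha_{\mathfrak j})$, after which the argument is just the two-line computation above.
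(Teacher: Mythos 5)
Your proposal is correct and follows essentially the same route as the paper: both identify $\eta=\mathcal A^\alpha-\breve{\mathcal A}^\alpha$ (the paper by checking that $\breve{\mathcal A}^\alpha$ also solves equation~(\ref{geomphasematdens}), you by right-multiplying that equation by $\tfrac12(\rho_E^\alpha)^{-1}$), and both conclude from $\eta\rho_E^\alpha+\rho_E^\alpha\eta^\dagger=0$ that $\eta\in\Omega^1(U^\alpha,P^\alpha_{\mathfrak j})$. Your closing caveat---that one should also check $\eta$ is $\mathfrak g_x$-valued, not merely annihilating in the sense $\eta\rho_E^\alpha+\rho_E^\alpha\eta^\dagger=0$---is a point the paper itself passes over silently, so flagging it is a small improvement rather than a deviation.
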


\begin{proof}
\begin{eqnarray}
d\rho_E & = & \frac{1}{2}(d\rho_E \rho_E^{-1} \rho_E + \rho_E \rho_E^{-1} d\rho_E) \nonumber\\
& = & \frac{1}{2}(d\rho_E \rho_E^{-1} \rho_E + \rho_E (d\rho_E \rho_E^{-1})^\dagger)
\end{eqnarray}
$\breve {\mathcal A}$ is then a solution of the equation (\ref{geomphasematdens}) like $\mathcal A = \langle \phi_E|d\phi_E\rangle_* \|\phi_E\|^{-2}_*$. We have then
\begin{equation}
(\mathcal A - \breve{\mathcal A}) \rho_E + \rho_E (\mathcal A - \breve{\mathcal A})^\dagger = 0
\end{equation}
and then $(\mathcal A - \breve{\mathcal A}) \in \Omega^1(U^\alpha, P^\alpha_{\mathfrak j_x})$.
\end{proof}
If $\rho_E^\alpha$ is not invertible, equation (\ref{geomphase}) can have several solutions, but we have the following result
\begin{prop}
Assuming that $\ker \rho^{\alpha}_E \not= \{0\}$, if equation (\ref{geomphase}) has some solutions then all these solutions are equals modulo $\mathfrak j_x$-gauge transformations.
\end{prop}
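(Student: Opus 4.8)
The plan is to argue as in the invertible case, replacing the full inverse of $\rho_E^\alpha$ by an appropriate partial (pseudo-)inverse and checking that the difference of two solutions lands in $\mathfrak j_x$. Fix a chart $U^\alpha$ and let $\mathcal A_1^\alpha$ and $\mathcal A_2^\alpha$ be two $\Omega^1(U^\alpha,\mathcal L(\S))$-solutions of (\ref{geomphase}), so that $\mathcal A_1^\alpha \|\phi_E^\alpha\|^2_* = \langle \phi_E^\alpha|d\phi_E^\alpha\rangle_* = \mathcal A_2^\alpha \|\phi_E^\alpha\|^2_*$. Setting $\delta = \mathcal A_1^\alpha - \mathcal A_2^\alpha \in \Omega^1(U^\alpha,\mathcal L(\S))$ we immediately get
\begin{equation}
\delta\, \rho_E^\alpha = 0 .
\end{equation}
By the proposition preceding this one (the $d\rho_E^\alpha = \mathcal A^\alpha\rho_E^\alpha + \rho_E^\alpha(\mathcal A^\alpha)^\dagger$ relation), each $\mathcal A_i^\alpha$ satisfies (\ref{geomphasematdens}); subtracting the two instances gives $\delta\,\rho_E^\alpha + \rho_E^\alpha\,\delta^\dagger = 0$, which is automatic here since the first term already vanishes and hence so does its adjoint. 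So the content to extract is just: $\delta\,\rho_E^\alpha = 0$ implies $\delta$ is (pointwise) a $\mathfrak j_x$-valued one-form, i.e. $\delta \in \Omega^1(U^\alpha,P^\alpha_{\mathfrak j})$.

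For that, work in the diagonalization basis of $\rho_E^\alpha(x)$ used to define $\mathfrak j^1_x$: with $\dim\ker\rho_E^\alpha(x) = n-p$, write $\hat\rho_E^\alpha = \mathrm{diag}(D,0)$ with $D$ invertible of size $p\times p$. Writing $\hat\delta = \left(\begin{smallmatrix} a & b \\ c & d\end{smallmatrix}\right)$ in this block decomposition, $\delta\rho_E^\alpha = 0$ forces $aD = 0$ and $cD = 0$, hence $a = 0$ and $c = 0$, leaving $\hat\delta = \left(\begin{smallmatrix} 0 & * \\ 0 & *\end{smallmatrix}\right)$. That is exactly the block shape defining $\mathfrak j^1_x$. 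It remains to observe that $\delta$ lies in $\mathfrak g_x$ — this is where the hypotheses on the $\mathcal A_i^\alpha$ must be used, not merely the algebraic relation $\delta\rho_E^\alpha = 0$: one should argue that the admissible geometric-phase generators all take values in the span of $\mathfrak g_x$ (as in the non-degenerate decomposition $\mathcal A^\alpha = A^\alpha + R^\alpha$, where $A^\alpha \in \Omega^1(U^\alpha,P^\alpha_{\mathfrak g})$ and $R^\alpha$ is governed by the same ambient structure), so that $\delta \in \mathfrak g_x$ and therefore $\delta \in \mathfrak g_x$ with block form $\left(\begin{smallmatrix}0&*\\0&*\end{smallmatrix}\right)$, i.e. $\delta \in \mathfrak j^1_x \subset \mathfrak j_x$ pointwise. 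Combined with the $K_x$/$\mathfrak k_x$ freedom that was already shown (in Proposition \ref{gaugetransf}) to produce precisely $\mathfrak j_x$-shifts $\eta$ of the generator, this shows $\mathcal A_1^\alpha$ and $\mathcal A_2^\alpha$ differ by a $\mathfrak j_x$-gauge transformation.

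The main obstacle is the last point: from $\delta\rho_E^\alpha = 0$ alone one only gets the block shape $\left(\begin{smallmatrix}0&*\\0&*\end{smallmatrix}\right)$ within $\mathcal L(\S)$, and one genuinely needs that $\delta$ also respects the invariance of $\ker(H-E\otimes 1_\E)$ — i.e. lies in $\mathfrak g_x$ — before one can identify it with an element of $\mathfrak j_x = \mathfrak j^0_x \oplus \mathfrak j^1_x$. I expect the cleanest route is to write $\mathcal A_i^\alpha\|\phi_E^\alpha\|^2_* = \langle\phi_E^\alpha|d\phi_E^\alpha\rangle_*$ and use Property \ref{inner}-type reasoning (both sides live in $\Gamma(U^\alpha,P^\alpha_{\mathfrak g})$ up to the ambiguity, since $\langle\phi_E^\alpha|d\phi_E^\alpha\rangle_*$ is built from $*$-eigenvectors) to conclude the difference is forced into $\mathfrak g_x$, at which point the block computation finishes the argument; an alternative is to note that any solution of (\ref{geomphase}) can be corrected by an $\Omega^1(U^\alpha,P^\alpha_{\mathfrak j})$ term to the canonical $\breve{\mathcal A}^\alpha$-type representative built from a pseudo-inverse of $\rho_E^\alpha$ on $\mathrm{Ran}(\rho_E^\alpha)$, reducing the non-invertible case to the invertible one on the support.
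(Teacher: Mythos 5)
Your argument is essentially the paper's own proof. The paper also works in the diagonalization basis of $\rho_E^\alpha$, writes $\widehat{\|\phi_E^\alpha\|^2_*}=\mathrm{diag}(\varphi_{p\times p},0)$, notes that solvability of (\ref{geomphase}) forces the last $n-p$ columns of $\widehat{\langle\phi_E^\alpha|d\phi_E^\alpha\rangle_*}$ to vanish, and concludes that two solutions differ by an arbitrary matrix $C$ whose first $p$ columns vanish, which it then declares to lie in $\mathfrak j_x$ (in fact in $\mathfrak j^1_x$) ``by construction''. The ``main obstacle'' you flag at the end --- that membership in $\mathfrak j^1_x$ formally also requires membership in $\mathfrak g_x$, which does not follow from $\delta\,\rho_E^\alpha=0$ alone --- is a fair observation, but it is not a step of the paper's proof that you failed to reproduce: the paper simply does not address it, since $\mathfrak j^1_x$ is for all practical purposes characterized there by the block condition alone. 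The natural repairs are the ones you suggest (restrict admissible solutions of (\ref{geomphase}) to forms valued in the algebra generated by $\mathfrak g_x$, as Property \ref{inner} makes natural in the non-degenerate case, or read $\mathfrak j^1_x$ as defined by the block shape within $\mathcal L(\S)$); either one closes the argument, and with that proviso your proof is complete and coincides with the paper's. One small remark: your detour through (\ref{geomphasematdens}) is, as you yourself note, vacuous here and could be dropped.
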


\begin{proof}
The matrix representation of $\|\phi_E\|^2_*$ in its diagonalization basis is
\begin{equation}
 \widehat{\|\phi_E\|^2_*} = \left(\begin{array}{cc} \varphi_{p\times p} & 0_{p\times (n-p)} \\ 0_{(n-p)\times p} & 0_{(n-p) \times (n-p)} \end{array}\right)
\end{equation}
where $\varphi_{p\times p}$ is a diagonal matrix of order $p$. Thus, we have
\begin{equation}
\hat {\mathcal A} \left(\begin{array}{cc} \varphi_{p\times p} & 0_{p\times (n-p)} \\ 0_{(n-p)\times p} & 0_{(n-p) \times (n-p)} \end{array}\right) = \widehat{\langle \phi_E|d\phi_E\rangle_*}
\end{equation}
Equation (\ref{geomphase}) has some solutions if and only if in the diagonalization basis of $\rho_E$ the operator $\langle \phi_E|d\phi_E\rangle_*$ is represented by a matrix of the form $\left(\begin{array}{cc} X_{p \times p} & 0_{p \times (n-p)} \\ Y_{(n-p)\times p} & 0_{(n-p) \times (n-p)} \end{array} \right)$. The solutions are then
\begin{equation}
\hat {\mathcal A} = \left(\begin{array}{cc} X_{p \times p} \varphi_{p \times p}^{-1} & C_{p \times (n-p)} \\ Y_{(n-p)\times p} \varphi_{p \times p}^{-1} & C_{(n-p) \times (n-p)} \end{array} \right)
\end{equation}
where $\hat C = \left(\begin{array}{cc}  0_{p \times p} & C_{p \times (n-p)} \\ 0_{(n-p)\times p} & C_{(n-p) \times (n-p)} \end{array} \right)$ is an arbitrary matrix. By construction the associated operator $C$ belongs to $\mathfrak j_x$.
\end{proof}
We note an important difference between the two previous properties; in the first one the gauge transformation belongs to $\mathfrak j^0_x$ whereas in the second one it belongs to $\mathfrak j^1_x$.

\begin{prop}
Assuming that $\rho^\alpha_E$ is invertible, and let $\breve {\mathcal A}^\alpha = \frac{1}{2} d\rho^\alpha_E (\rho^\alpha_E)^{-1}$, then under a gauge transformation $\tilde \rho^\alpha_E = g \rho^\alpha_E g^\dagger$ with $g \in \Gamma(U^\alpha,P^\alpha_G)$ we have
\begin{equation}
\tilde{\breve{\mathcal A}^\alpha} = g \breve{\mathcal A}^\alpha g^{-1}+ dgg^{-1} + g\breve \eta g^{-1}
\end{equation}
with
\begin{equation}
\breve \eta = \frac{1}{2} \left(\rho_E^\alpha dg^\dagger (g^\dagger)^{-1} (\rho_E^\alpha)^{-1} - g^{-1}dg \right) \in \Omega^1(U^\alpha,P^\alpha_{\mathfrak j})
\end{equation}
\end{prop}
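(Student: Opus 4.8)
The plan is to compute $\tilde{\breve{\mathcal A}^\alpha}=\frac12\,d\tilde\rho^\alpha_E\,(\tilde\rho^\alpha_E)^{-1}$ directly from $\tilde\rho^\alpha_E=g\rho^\alpha_E g^\dagger$ and read off the claimed decomposition. First I would apply the Leibniz rule,
\[
d\tilde\rho^\alpha_E = dg\,\rho^\alpha_E\,g^\dagger + g\,d\rho^\alpha_E\,g^\dagger + g\,\rho^\alpha_E\,dg^\dagger ,
\]
and use $(\tilde\rho^\alpha_E)^{-1}=(g^\dagger)^{-1}(\rho^\alpha_E)^{-1}g^{-1}$, which exists because $g$ is invertible and $\rho^\alpha_E$ is assumed invertible. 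Multiplying the three terms on the right by $(\tilde\rho^\alpha_E)^{-1}$ and collapsing the factors $g^\dagger(g^\dagger)^{-1}$ and $\rho^\alpha_E(\rho^\alpha_E)^{-1}$, the first term becomes $\tfrac12\,dg\,g^{-1}$, the second becomes $g\bigl(\tfrac12\,d\rho^\alpha_E(\rho^\alpha_E)^{-1}\bigr)g^{-1}=g\breve{\mathcal A}^\alpha g^{-1}$, and the third becomes $\tfrac12\,g\,\rho^\alpha_E\,dg^\dagger(g^\dagger)^{-1}(\rho^\alpha_E)^{-1}g^{-1}$, which cannot be simplified further since $g^\dagger$ need not commute with $\rho^\alpha_E$.

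Next I would recombine the two inhomogeneous pieces. Writing $\tfrac12\,dg\,g^{-1}=dg\,g^{-1}-\tfrac12\,g(g^{-1}dg)g^{-1}$, the sum of the first and third terms is exactly $dg\,g^{-1}+g\breve\eta g^{-1}$ with
\[
\breve\eta=\tfrac12\Bigl(\rho^\alpha_E\,dg^\dagger(g^\dagger)^{-1}(\rho^\alpha_E)^{-1}-g^{-1}dg\Bigr),
\]
which is the stated formula; this step is pure bookkeeping once the three terms are in hand.

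The one point that needs a little care is verifying $\breve\eta\in\Omega^1(U^\alpha,P^\alpha_{\mathfrak j})$. Following the criterion used in the proof of Property \ref{gaugetransf}, and noting that $\mathfrak j_x=\mathfrak j^0_x$ when $\rho^\alpha_E$ is invertible, it is enough to check $\breve\eta\rho^\alpha_E+\rho^\alpha_E\breve\eta^\dagger=0$. I would compute $\breve\eta\rho^\alpha_E=\tfrac12\bigl(\rho^\alpha_E\,dg^\dagger(g^\dagger)^{-1}-g^{-1}dg\,\rho^\alpha_E\bigr)$, then take its adjoint using that $\rho^\alpha_E$ is selfadjoint, that $(dg^\dagger)^\dagger=dg$ and $((g^\dagger)^{-1})^\dagger=g^{-1}$ (wedge products play no role here, everything being a $1$-form times operator-valued functions), to obtain $\rho^\alpha_E\breve\eta^\dagger=\tfrac12\bigl(g^{-1}dg\,\rho^\alpha_E-\rho^\alpha_E\,dg^\dagger(g^\dagger)^{-1}\bigr)$; the two add up to zero. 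Membership of $\breve\eta$ in $\mathfrak g_x$, hence in $\mathfrak j_x$, follows from the earlier fact that $\breve{\mathcal A}^\alpha$ equals $\mathcal A^\alpha$ modulo a $\mathfrak j_x$-gauge transformation together with Property \ref{gaugetransf}.

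I expect no genuinely hard point here: the main obstacle is purely notational, namely keeping the adjoints of $\mathcal L(\S)$-valued $1$-forms straight and resisting the temptation to commute $g^\dagger$ past $\rho^\alpha_E$ in the third term.
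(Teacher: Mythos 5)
Your computation is exactly the paper's: Leibniz rule on $d(g\rho^\alpha_E g^\dagger)$, right-multiplication by $(g^\dagger)^{-1}(\rho^\alpha_E)^{-1}g^{-1}$, the rewriting $\tfrac12 dg\,g^{-1}=dg\,g^{-1}-\tfrac12 g(g^{-1}dg)g^{-1}$ to isolate $\breve\eta$, and the verification $\breve\eta\rho^\alpha_E+\rho^\alpha_E\breve\eta^\dagger=0$ as the criterion for membership in $\Omega^1(U^\alpha,P^\alpha_{\mathfrak j})$. The proposal is correct and follows essentially the same route as the paper, with the minor (harmless) addition of an explicit remark on membership in $\mathfrak g_x$, which the paper leaves implicit.
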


\begin{proof}
\begin{equation}
d\tilde \rho_E = dg \rho_E g^\dagger + g d\rho_E g^\dagger + g\rho_E dg^\dagger 
\end{equation}
\begin{eqnarray}
\tilde {\breve {\mathcal A}} & = & \frac{1}{2} d\tilde \rho_E \tilde \rho_E^{-1} \nonumber \\
& = & \frac{1}{2}(dgg^{-1} + gd\rho_E \rho_E^{-1}g^{-1} + g \rho_E dg^{\dagger}(g^\dagger)^{-1} \rho_E^{-1} g^{-1}) \nonumber \\
& = & dgg^{-1} + g\breve{\mathcal A} g^{-1} + \frac{1}{2}g(\rho_E dg^{\dagger}(g^\dagger)^{-1} \rho_E^{-1} - g^{-1}dg)g^{-1}
\end{eqnarray}
\begin{eqnarray}
& & \breve \eta \rho_E + \rho_E \breve \eta^\dagger \nonumber \\
& & \quad =  \frac{1}{2}(\rho_E dg^\dagger (g^\dagger)^{-1}-g^{-1}dg \rho_E +g^{-1}dg \rho_E - \rho_E dg^\dagger (g^\dagger)^{-1}) \nonumber \\
& & \quad = 0
\end{eqnarray}
then $\breve \eta \in \Omega^1(U^\alpha,P^\alpha_{\mathfrak j_x})$.
\end{proof}

\section{The categorical bundle and its connective structure}
\label{annexeC}
\subsection{Curving, fake curvature and true curvature}
Let $B^\alpha = d\mathcal A^\alpha - \mathcal A^\alpha \wedge \mathcal A^\alpha \in \Omega^2(U^\alpha,\mathcal L(\S))$ be the curving and $F^\alpha = dR^\alpha - [A^\alpha,R^\alpha] - R^\alpha \wedge R^\alpha \in \Omega^2(U^\alpha,\mathcal L(\S))$ be the fake curvature.
\begin{equation}
B^\alpha = dA^\alpha - A^\alpha \wedge A^\alpha + F^\alpha
\end{equation}

\begin{prop}
$B^\alpha \in \Omega^2(U^\alpha,P^\alpha_{\mathfrak j})$ and $F^\alpha \in \Omega^2(U^\alpha,P^\alpha_{\mathfrak g})$.
\end{prop}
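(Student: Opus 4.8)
The plan is to squeeze a single algebraic identity out of the relation (\ref{geomphasematdens}), namely $d\rho_E^\alpha = \mathcal A^\alpha\rho_E^\alpha + \rho_E^\alpha(\mathcal A^\alpha)^\dagger$, by differentiating it once more and using $d^2=0$. Applying $d$ and the graded Leibniz rule, then re-inserting (\ref{geomphasematdens}) for each $d\rho_E^\alpha$ that reappears, the terms proportional to $\mathcal A^\alpha\rho_E^\alpha\wedge(\mathcal A^\alpha)^\dagger$ cancel; using $(\mathcal A^\alpha)^\dagger\wedge(\mathcal A^\alpha)^\dagger = -(\mathcal A^\alpha\wedge\mathcal A^\alpha)^\dagger$ and the fact that $d$ commutes with the adjoint, what survives is
\begin{equation}
(d\mathcal A^\alpha - \mathcal A^\alpha\wedge\mathcal A^\alpha)\,\rho_E^\alpha + \rho_E^\alpha\,(d\mathcal A^\alpha - \mathcal A^\alpha\wedge\mathcal A^\alpha)^\dagger = 0 ,
\end{equation}
that is, $B^\alpha\rho_E^\alpha + \rho_E^\alpha(B^\alpha)^\dagger = 0$. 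This is precisely the $\rho_E^\alpha$-skew condition which, in the proof of Property \ref{gaugetransf}, detects membership in $P^\alpha_{\mathfrak j}$ for a form already valued in $\mathfrak g_x$. So the task reduces to showing that $B^\alpha$ and $F^\alpha$ take values in the ($x$-dependent) subalgebra $\mathfrak g_x$.

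For $F^\alpha\in\Omega^2(U^\alpha,P^\alpha_{\mathfrak g})$ I would work directly from $\mathcal A^\alpha = A^\alpha + R^\alpha$ and the defining relations $A^\alpha\|\phi_E^\alpha\|^2_* = \langle\phi_E^\alpha|P_E^\alpha d\phi_E^\alpha\rangle_*$ and $R^\alpha\|\phi_E^\alpha\|^2_* = \langle\phi_E^\alpha|(1_{\SE}-P_E^\alpha)d\phi_E^\alpha\rangle_*$. Membership $F^\alpha(x)\in\mathfrak g_x$ is equivalent to $F^\alpha$ preserving $\ker(H(x)-E(x)\otimes 1_{\E})$, i.e. $(1_{\SE}-P_E^\alpha)(F^\alpha\otimes 1_{\E})P_E^\alpha = 0$. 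The natural route is to re-express $F^\alpha = dR^\alpha - [A^\alpha,R^\alpha] - R^\alpha\wedge R^\alpha$ through $P_E^\alpha$ and $dP_E^\alpha$ and then work through the block identities coming from $(P_E^\alpha)^2 = P_E^\alpha$ — notably $P_E^\alpha\,dP_E^\alpha\,P_E^\alpha = 0$ and $(1_{\SE}-P_E^\alpha)dP_E^\alpha(1_{\SE}-P_E^\alpha)=0$ — exactly in the spirit of the standard computation showing that a non-abelian Berry curvature attached to a spectral subbundle is block-diagonal. Property \ref{inner} (that $\langle\phi_E^\alpha|P_E^\alpha d\phi_E^\alpha\rangle_*$ is a section of $P^\alpha_{\mathfrak g}$) and the normalization $\tr_{\S}(\rho_E^\alpha R^\alpha)=0$ are the inputs that make this accounting close.

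With $F^\alpha$ in hand, $B^\alpha\in\Omega^2(U^\alpha,P^\alpha_{\mathfrak j})$ drops out of $B^\alpha = dA^\alpha - A^\alpha\wedge A^\alpha + F^\alpha$: the term $A^\alpha\wedge A^\alpha$ is $\mathfrak g_x$-valued, since for a $\mathfrak g_x$-valued $1$-form its components are commutators $[A^\alpha_\mu,A^\alpha_\nu]\in\mathfrak g_x$; $F^\alpha$ is $\mathfrak g_x$-valued by the previous step; hence $B^\alpha$ is $\mathfrak g_x$-valued, and the skew constraint of the first step then places it in $\mathfrak j_x$. As a byproduct one even learns that $dA^\alpha$ itself is $\mathfrak g_x$-valued, which is not obvious a priori.

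The main obstacle is exactly this last point writ large: exterior differentiation does not respect the $x$-dependence of $\mathfrak g_x$ and $\mathfrak j_x$, so none of the ``valued in $\mathfrak g$'' or ``valued in $\mathfrak j$'' statements is formal — each is a genuine cancellation that has to be extracted from the $P_E^\alpha$-calculus. One must also verify that the skew relation $O\rho_E^\alpha + \rho_E^\alpha O^\dagger = 0$ (for $O\in\mathfrak g_x$) really lands inside $\mathfrak j_x = \mathfrak j_x^0\oplus\mathfrak j_x^1$, with the $\mathfrak j_x^1$ part absorbing the $\ker\rho_E^\alpha$ directions and the $\mathfrak j_x^0$ part the remaining $\rho_E^\alpha$-antiselfadjoint directions; this is where the structural facts about $J_x$ from \ref{annexeA} come in.
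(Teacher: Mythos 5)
Your first step is sound and is, in substance, the paper's own argument for the curving: applying $d$ to (\ref{geomphasematdens}), using $d^2\rho_E^\alpha=0$, re-inserting (\ref{geomphasematdens}) and the identity $(\mathcal A^\alpha\wedge\mathcal A^\alpha)^\dagger=-(\mathcal A^\alpha)^\dagger\wedge(\mathcal A^\alpha)^\dagger$ does yield $B^\alpha\rho_E^\alpha+\rho_E^\alpha(B^\alpha)^\dagger=0$. The paper reaches the same identity by differentiating (\ref{geomphase}) directly and cancelling the term $\tr_{\E}(|d\phi_E^\alpha\rrangle\wedge\llangle d\phi_E^\alpha|)$ against its adjoint, so your packaging is, if anything, cleaner. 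The paper then stops there for $B^\alpha$: exactly as in the proof of Property \ref{gaugetransf}, the skew relation with respect to $\rho_E^\alpha$ is taken as the operative criterion for membership in $P^\alpha_{\mathfrak j}$, and $F^\alpha\in\Omega^2(U^\alpha,P^\alpha_{\mathfrak g})$ is then read off from $F^\alpha=B^\alpha-dA^\alpha+A^\alpha\wedge A^\alpha$, since $B^\alpha$ is $\mathfrak j_x$-valued, $A^\alpha\wedge A^\alpha$ is $\mathfrak g_x$-valued, and $dA^\alpha$ is treated as $P^\alpha_{\mathfrak g}$-valued (the derivative on $\Omega^*(U^\alpha,P^\alpha_{\mathfrak g})$ being understood through the trivialization as $d\otimes 1_{P^\alpha_{\mathfrak g}}$, as \ref{annexeC} makes explicit).

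Where your proposal breaks is in the second half. You reverse the paper's logic --- prove $F^\alpha\in\mathfrak g_x$ first, then deduce $B^\alpha\in\mathfrak g_x$ --- but the deduction $B^\alpha=dA^\alpha-A^\alpha\wedge A^\alpha+F^\alpha\in\mathfrak g_x$ needs $dA^\alpha\in\mathfrak g_x$ as an \emph{input}, and you offer $dA^\alpha\in\mathfrak g_x$ only as a ``byproduct'' of the conclusion; one of the two must be established independently, and as written the argument is circular. Moreover the independent proof of $F^\alpha\in\mathfrak g_x$ is only announced, not performed: $R^\alpha$ is precisely the off-block piece relative to $P_E^\alpha$ and is \emph{not} $\mathfrak g_x$-valued, so none of $dR^\alpha$, $[A^\alpha,R^\alpha]$, $R^\alpha\wedge R^\alpha$ lies in $\mathfrak g_x$ separately, and the asserted cancellation via the identities $P_E^\alpha\,dP_E^\alpha\,P_E^\alpha=0$, etc., is the entire content of that half of the statement --- Property \ref{inner} and $\tr_{\S}(\rho_E^\alpha R^\alpha)=0$ do not by themselves close it. The economical repair is to follow the paper's ordering: accept the skew identity $O\rho_E^\alpha+\rho_E^\alpha O^\dagger=0$ as placing $O$ in $\mathfrak j_x$ (this is how the paper uses $\mathfrak j_x$ throughout, and your own first step already delivers it for $B^\alpha$), and then obtain $F^\alpha\in\Omega^2(U^\alpha,P^\alpha_{\mathfrak g})$ from the decomposition $F^\alpha=B^\alpha-dA^\alpha+A^\alpha\wedge A^\alpha$ rather than the other way around.
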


\begin{proof}
By differentiating equation (\ref{geomphase}) we find
\begin{equation}
d\mathcal A^\alpha \|\phi^\alpha_E\|^2_* - \mathcal A^\alpha \wedge d\|\phi^\alpha_E\|^2_* = -\tr_{\E}(|d\phi^\alpha_E\rrangle\wedge \llangle d\phi^\alpha_E|) 
\end{equation}
\begin{eqnarray}
& & d\mathcal A^\alpha \|\phi^\alpha_E\|^2_* - \mathcal A^\alpha \wedge \mathcal A^\alpha \|\phi^\alpha_E\|^2_* - \mathcal A^\alpha \wedge \|\phi^\alpha_E\|^2_* (\mathcal A^\alpha)^\dagger \nonumber \\
& & \qquad = -\tr_{\E}(|d\phi^\alpha_E\rrangle\wedge \llangle d\phi^\alpha_E|)
\end{eqnarray}
\begin{equation}
B^\alpha \|\phi^\alpha_E\|^2_* =  \mathcal A^\alpha \wedge \|\phi^\alpha_E\|_*^2 (\mathcal A^\alpha)^\dagger -\tr_{\E}(|d\phi^\alpha_E\rrangle\wedge \llangle d\phi^\alpha_E|)
\end{equation}
Taking the transconjugation of this last expression we find
\begin{eqnarray}
\|\phi^\alpha_E\|^2_* (B^\alpha)^\dagger & = & (\mathcal A^\alpha_\mu \|\phi^\alpha_E\|^2_* (\mathcal A^\alpha_\nu)^\dagger)^\dagger dx^\mu \wedge dx^\nu \nonumber \\
& & \quad - \tr_{\E}(|\partial_\mu \phi^\alpha_E \rrangle \llangle \partial_\nu \phi^\alpha_E|)^\dagger dx^\mu \wedge dx^\nu
\end{eqnarray}
\begin{eqnarray}
\|\phi^\alpha_E\|^2_* (B^\alpha)^\dagger & = & \mathcal A^\alpha_\nu \|\phi^\alpha_E\|^2_* (\mathcal A^\alpha_\mu)^\dagger dx^\mu \wedge dx^\nu \nonumber \\
& & \quad - \tr_{\E}(|\partial_\nu \phi^\alpha_E \rrangle \llangle \partial_\mu \phi^\alpha_E|) dx^\mu \wedge dx^\nu
\end{eqnarray}
\begin{equation}
\|\phi^\alpha_E\|^2_* (B^\alpha)^\dagger = -\mathcal A^\alpha \wedge \|\phi^\alpha_E \|^2_*\mathcal A^\alpha + \tr_{\E}(|d\phi^\alpha_E\rrangle\wedge \llangle d\phi^\alpha_E|)
\end{equation}
Then, we have
\begin{equation}
B^\alpha \|\phi^\alpha_E\|^2_* + \|\phi^\alpha_E\|^2_* (B^\alpha)^\dagger = 0
\end{equation}
\begin{equation}
\Rightarrow B^\alpha \in \Omega^2(U^\alpha,P^\alpha_{\mathfrak j})
\end{equation}
By definition $F^\alpha = B^\alpha - dA^\alpha + A^\alpha \wedge A^\alpha$, $A^\alpha \in \Omega^1(U^\alpha,P^\alpha_{\mathfrak g})$ and $B^\alpha \in \Omega^2(U^\alpha,P^\alpha_{\mathfrak j}) \subset \Omega^2(U^\alpha,P^\alpha_{\mathfrak g})$, then $F^\alpha \in \Omega^2(U^\alpha,P^\alpha_{\mathfrak g})$.
\end{proof}
If $\rho_E^\alpha$ is invertible, the curving associated with $\breve A^\alpha$ is $\breve B^\alpha = \frac{1}{4} d\rho_E^\alpha (\rho_E^\alpha)^{-1} \wedge d\rho_E^\alpha (\rho_E^\alpha)^{-1}$.\\
Finally we set $H^\alpha \in \Omega^3(U^\alpha,P^\alpha_{\mathfrak j})$ the true curvature defined by
\begin{eqnarray}
H^\alpha & = & dB^\alpha - [A^\alpha,B^\alpha] \nonumber \\
& = & dF^\alpha - [A^\alpha,F^\alpha]
\end{eqnarray}

Let us now consider the gauge transformations at the intersection of several charts.
\begin{prop}
$\forall x \in U^\alpha \cap U^\beta$, the curving satisfies
\begin{eqnarray}
B^\beta & = & (g^{\alpha \beta})^{-1} B^\alpha g^{\alpha \beta} \nonumber \\
& & \quad + (g^{\alpha \beta})^{-1} \left(d\eta^{\alpha \beta} - \eta^{\alpha \beta} \wedge \eta^{\alpha \beta} - [A^\alpha,\eta^{\alpha \beta}] + \chi^{\alpha \beta} \right)g^{\alpha \beta}
\end{eqnarray}
with $\chi^{\alpha \beta} = [R^\alpha,\eta^{\alpha \beta}] \in \Omega^2(U^\alpha \cap U^\beta) \otimes_M \bigsqcup_x \mathfrak j_x$ called the curving-transformation.
\end{prop}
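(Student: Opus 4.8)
The plan is to transport the transition law for the full generator $\mathcal A$ through the definition $B^\alpha = d\mathcal A^\alpha - \mathcal A^\alpha\wedge\mathcal A^\alpha$ of the curving, and then to extract the $\mathfrak j_x$-valued correction by splitting $\mathcal A^\alpha = A^\alpha + R^\alpha$. First I would assemble the transition law for $\mathcal A$: adding the transition relation already established for the $G$-potential, $A^\beta = (g^{\alpha\beta})^{-1}A^\alpha g^{\alpha\beta} - (g^{\alpha\beta})^{-1}dg^{\alpha\beta} + (g^{\alpha\beta})^{-1}\eta^{\alpha\beta}g^{\alpha\beta}$, to the tensorial relation $R^\beta = (g^{\alpha\beta})^{-1}R^\alpha g^{\alpha\beta}$ (the cross-chart form of the law $\tilde R^\alpha = gR^\alpha g^{-1}$, since the remainder transforms by conjugation alone), one obtains
\begin{equation}
\mathcal A^\beta = (g^{\alpha\beta})^{-1}\mathcal A^\alpha g^{\alpha\beta} - (g^{\alpha\beta})^{-1}dg^{\alpha\beta} + (g^{\alpha\beta})^{-1}\eta^{\alpha\beta}g^{\alpha\beta}.
\end{equation}
Thus $\mathcal A$ transforms like a connection $1$-form modulo the extra inhomogeneous term $\eta^{\alpha\beta}\in\Omega^1(U^\alpha\cap U^\beta)\otimes_M\bigsqcup_x\mathfrak j_x$.

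Next I would substitute this into $B^\beta = d\mathcal A^\beta - \mathcal A^\beta\wedge\mathcal A^\beta$. Writing $g=g^{\alpha\beta}$, $\eta=\eta^{\alpha\beta}$ and decomposing $\mathcal A^\beta = \bar{\mathcal A}^\beta + g^{-1}\eta g$ with $\bar{\mathcal A}^\beta = g^{-1}\mathcal A^\alpha g - g^{-1}dg$, the part built from $\bar{\mathcal A}^\beta$ alone reproduces the standard identity that a curvature $2$-form is covariant under a gauge change (with the sign conventions of this paper), $d\bar{\mathcal A}^\beta - \bar{\mathcal A}^\beta\wedge\bar{\mathcal A}^\beta = g^{-1}(d\mathcal A^\alpha - \mathcal A^\alpha\wedge\mathcal A^\alpha)g = g^{-1}B^\alpha g$. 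The leftover terms are $d(g^{-1}\eta g)$, $-(g^{-1}\eta g)\wedge(g^{-1}\eta g)$ and the two cross terms with $\bar{\mathcal A}^\beta$; using the Maurer-Cartan expansion $d(g^{-1}\eta g) = g^{-1}(d\eta)g - g^{-1}dg\wedge g^{-1}\eta g - g^{-1}\eta g\wedge g^{-1}dg$ one sees that the Maurer-Cartan contributions cancel against those coming from the $-g^{-1}dg$ piece of $\bar{\mathcal A}^\beta$ inside the cross terms, and what remains conjugates cleanly to
\begin{equation}
B^\beta = (g^{\alpha\beta})^{-1}\left(B^\alpha + d\eta^{\alpha\beta} - \eta^{\alpha\beta}\wedge\eta^{\alpha\beta} - [\mathcal A^\alpha,\eta^{\alpha\beta}]\right)g^{\alpha\beta},
\end{equation}
where $[\mu,\nu]=\mu\wedge\nu+\nu\wedge\mu$ is the graded bracket of $\mathcal L(\S)$-valued $1$-forms.

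Finally I would split $[\mathcal A^\alpha,\eta^{\alpha\beta}] = [A^\alpha,\eta^{\alpha\beta}] + [R^\alpha,\eta^{\alpha\beta}]$: the first summand gives the $-[A^\alpha,\eta^{\alpha\beta}]$ of the statement, and the second is, up to the overall sign fixed by the paper's convention for the curving-transformation, the term $\chi^{\alpha\beta}=[R^\alpha,\eta^{\alpha\beta}]$, which yields the announced formula. It then remains to verify the valuedness assertions. That $[A^\alpha,\eta^{\alpha\beta}]$ and $\eta^{\alpha\beta}\wedge\eta^{\alpha\beta}$ are $\mathfrak j_x$-valued follows because $\mathfrak j_x$ is an ideal of $\mathfrak g_x$ --- it is the Lie algebra of the normal subgroup $J_x\triangleleft G_x$ --- and is stable under the wedge-bracket; that $\chi^{\alpha\beta}$, and hence $B^\beta$, is again valued in $\bigsqcup_x\mathfrak j_x$ (resp. $P^\beta_{\mathfrak j}$) can be deduced from the transformation law itself, since $B^\alpha$ and $B^\beta$ were already shown to be $\mathfrak j$-valued and $\Ad(G)$ preserves $\mathfrak j$.

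I expect the main obstacle to be bookkeeping rather than anything conceptual: keeping the graded brackets consistent, carrying out the Maurer-Cartan cancellations, fixing the orientation of the transition functions $g^{\alpha\beta}$, and --- relatedly --- pinning down the sign with which $\chi^{\alpha\beta}$ enters, since the split of $[\mathcal A^\alpha,\eta^{\alpha\beta}]$ naturally produces $[R^\alpha,\eta^{\alpha\beta}]$ which must be reconciled with the sign convention adopted for $\chi^{\alpha\beta}$ in the body of the paper. As a cross-check one can instead start from $B^\alpha = dA^\alpha - A^\alpha\wedge A^\alpha + F^\alpha$ and transform each piece; the computation then isolates the fact that the fake curvature is \emph{not} tensorial, $F^\beta = (g^{\alpha\beta})^{-1}(F^\alpha + \chi^{\alpha\beta})g^{\alpha\beta}$, and recombining this with the transformation of $dA^\alpha - A^\alpha\wedge A^\alpha$ returns the same law for $B^\beta$ while making the origin of the curving-transformation $\chi^{\alpha\beta}$ transparent.
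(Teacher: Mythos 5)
Your proposal is correct and follows essentially the same route as the paper: the paper likewise starts from the cross-chart law $\eta^{\alpha\beta} = g^{\alpha\beta}\mathcal A^\beta(g^{\alpha\beta})^{-1} - \mathcal A^\alpha + dg^{\alpha\beta}(g^{\alpha\beta})^{-1}$, carries out the same Maurer--Cartan bookkeeping to reach $d\eta^{\alpha\beta}-\eta^{\alpha\beta}\wedge\eta^{\alpha\beta} = g^{\alpha\beta}B^\beta(g^{\alpha\beta})^{-1} - B^\alpha + \mathcal A^\alpha\wedge\eta^{\alpha\beta}+\eta^{\alpha\beta}\wedge\mathcal A^\alpha$, and then splits $[\mathcal A^\alpha,\eta^{\alpha\beta}]=[A^\alpha,\eta^{\alpha\beta}]+[R^\alpha,\eta^{\alpha\beta}]$ exactly as you do. Your remark on the sign of $\chi^{\alpha\beta}$ is apt: the computation naturally produces $-[R^\alpha,\eta^{\alpha\beta}]$, so the $+\chi^{\alpha\beta}$ in the displayed formula matches the definition $\chi^{\alpha\beta}=[R^\alpha,\eta^{\alpha\beta}]$ only up to that overall sign, a point the paper's own proof also leaves implicit.
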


\begin{proof}
From the gauge transformation formula of the $C^*$-geometric phase generator we have
\begin{equation}
\eta^{\alpha \beta} = g^{\alpha \beta} \mathcal A^\beta (g^{\alpha \beta})^{-1} - \mathcal A^\alpha + dg^{\alpha \beta}(g^{\alpha \beta})^{-1} 
\end{equation}
then
\begin{eqnarray}
d\eta^{\alpha \beta} - \eta^{\alpha \beta} \wedge \eta^{\alpha \beta} 
& = & g^{\alpha \beta} B^\beta (g^{\alpha \beta})^{-1} - B^\alpha - 2 \mathcal A^\alpha \wedge \mathcal A^\alpha \nonumber\\
& & + \mathcal A^\alpha \wedge g^{\alpha \beta} \mathcal A^\beta (g^{\alpha \beta})^{-1} + \mathcal A^\alpha \wedge dg^{\alpha \beta}(g^{\alpha \beta})^{-1} \nonumber\\
& & + g^{\alpha \beta}(g^{\alpha \beta})^{-1} \wedge \mathcal A^\alpha + g^{\alpha \beta} \mathcal A^\beta (g^{\alpha \beta})^{-1} \wedge \mathcal A^\alpha \nonumber \\
& = & g^{\alpha \beta} B^\beta (g^{\alpha \beta})^{-1} - B^\alpha \nonumber \\
& & + \mathcal A^\alpha \wedge \eta^{\alpha \beta} + \eta^{\alpha \beta} \wedge \mathcal A^\alpha
\end{eqnarray}
\end{proof}

\begin{prop}
$\forall x \in U^\alpha \cap U^\beta$, the fake curvature satisfies
\begin{equation}
F^\beta = (g^{\alpha \beta})^{-1} F^\alpha g^{\alpha \beta} - (g^{\alpha \beta})^{-1} \chi^{\alpha \beta} g^{\alpha \beta}
\end{equation}
\end{prop}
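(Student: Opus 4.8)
The plan is to work from the characterization $F^\alpha = dR^\alpha - [A^\alpha,R^\alpha] - R^\alpha\wedge R^\alpha$ rather than from $F^\alpha = B^\alpha - dA^\alpha + A^\alpha\wedge A^\alpha$, because the remainder $R^\alpha$ transforms purely tensorially and this avoids juggling the inhomogeneous pieces twice. First I would record the two transition laws needed. On one hand, the remainder transforms by conjugation alone, $R^\beta = (g^{\alpha\beta})^{-1}R^\alpha g^{\alpha\beta}$: this is the chart-intersection instance of the identity $\tilde R^\alpha = gR^\alpha g^{-1}$ established above, applied to the gluing $\phi^\alpha_E = k^{\alpha\beta}g^{\alpha\beta}\phi^\beta_E$, using that the $K_x$-factor $k^{\alpha\beta}$ has no effect on $R^\alpha$ (it only feeds the $\eta$-piece of $\mathcal A^\alpha$). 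On the other hand, the $G$-potential transforms by $A^\beta = (g^{\alpha\beta})^{-1}A^\alpha g^{\alpha\beta} - (g^{\alpha\beta})^{-1}dg^{\alpha\beta} + (g^{\alpha\beta})^{-1}\eta^{\alpha\beta}g^{\alpha\beta}$, that is $A^\beta = (g^{\alpha\beta})^{-1}(A^\alpha+\eta^{\alpha\beta})g^{\alpha\beta} - (g^{\alpha\beta})^{-1}dg^{\alpha\beta}$, which is an ordinary (left) gauge transformation with respect to the \emph{shifted} connection $A^\alpha+\eta^{\alpha\beta}$.

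Writing $g = g^{\alpha\beta}$, $a = g^{-1}dg$ and $\hat A = A^\alpha+\eta^{\alpha\beta}$, the computation then proceeds in three short steps. First, expand $dR^\beta = d(g^{-1}R^\alpha g)$ using $d(g^{-1}) = -g^{-1}(dg)g^{-1}$; this gives $g^{-1}(dR^\alpha)g$ plus the inhomogeneous terms $-a\wedge R^\beta - R^\beta\wedge a$. Second, expand $[A^\beta,R^\beta]$ as a graded commutator of $1$-forms: since the homogeneous part of $A^\beta$ is $g^{-1}\hat A g$ and the inhomogeneous part is $-a$, one gets $g^{-1}[\hat A,R^\alpha]g - a\wedge R^\beta - R^\beta\wedge a$. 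Subtracting, the $a$-terms cancel identically — this is precisely the covariance of the operator $d(\cdot)-[A,\cdot]$ — leaving $dR^\beta - [A^\beta,R^\beta] = g^{-1}(dR^\alpha - [A^\alpha,R^\alpha] - [\eta^{\alpha\beta},R^\alpha])g$. Third, add $-R^\beta\wedge R^\beta = -g^{-1}(R^\alpha\wedge R^\alpha)g$ and reorganize to obtain $F^\beta = g^{-1}(F^\alpha - [\eta^{\alpha\beta},R^\alpha])g$. Finally, since the graded bracket of two odd forms is symmetric, $[\eta^{\alpha\beta},R^\alpha] = [R^\alpha,\eta^{\alpha\beta}] = \chi^{\alpha\beta}$, which yields the stated formula $F^\beta = (g^{\alpha\beta})^{-1}F^\alpha g^{\alpha\beta} - (g^{\alpha\beta})^{-1}\chi^{\alpha\beta}g^{\alpha\beta}$.

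The only genuinely delicate point is sign bookkeeping with the graded wedge and commutator of matrix-valued differential forms, in particular securing the cancellation of the $a = g^{-1}dg$ contributions and using $[R^\alpha,\eta^{\alpha\beta}] = [\eta^{\alpha\beta},R^\alpha]$. As a consistency check one can also reach the result from the alternative route $F^\beta = B^\beta - dA^\beta + A^\beta\wedge A^\beta$, feeding in the curving transformation law from the preceding proposition together with the $A$-transformation law; there every $\eta$- and $dg$-dependent term must conspire to cancel except for the single $-(g^{\alpha\beta})^{-1}\chi^{\alpha\beta}g^{\alpha\beta}$, but this requires tracking inhomogeneous terms through two distinct transformation formulae, so I would present the $R$-based computation as the primary proof.
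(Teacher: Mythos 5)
Your proposal is correct and follows essentially the same route as the paper: the paper's proof likewise substitutes $R^\beta=(g^{\alpha\beta})^{-1}R^\alpha g^{\alpha\beta}$ and the transformation law of $A^\beta$ directly into $F^\beta=dR^\beta-[A^\beta,R^\beta]-R^\beta\wedge R^\beta$ and lets the $dg^{\alpha\beta}$-terms cancel, leaving the $[\eta^{\alpha\beta},R^\alpha]=\chi^{\alpha\beta}$ defect. Your write-up merely makes explicit the intermediate cancellations and the symmetry of the graded bracket that the paper's one-line computation leaves implicit.
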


\begin{proof}
\begin{eqnarray}
F^\beta & = & d((g^{\alpha \beta})^{-1} R^\alpha g^{\alpha \beta}) \nonumber \\
& & - \left[(g^{\alpha \beta})^{-1}A^\alpha g^{\alpha \beta}+d(g^{\alpha \beta})^{-1}g^{\alpha \beta}+(g^{\alpha \beta})^{-1}\eta^{\alpha \beta} g^{\alpha \beta}, \right. \nonumber \\
& & \qquad \left. (g^{\alpha \beta})^{-1}R^\alpha g^{\alpha \beta} \right] -(g^{\alpha \beta})^{-1} R^\alpha \wedge R^\alpha g^{\alpha \beta} \nonumber \\
& = & (g^{\alpha \beta})^{-1} F^\alpha g^{\alpha \beta} - (g^{\alpha \beta})^{-1} \chi^{\alpha \beta} g^{\alpha \beta}
\end{eqnarray}
\end{proof}

\begin{prop}
$\forall x \in U^\alpha \cap U^\beta \cap U^\gamma$, the curving-transformation satisfies
\begin{equation}
 \chi^{\alpha \beta} + g^{\alpha \beta} \chi^{\beta \gamma} (g^{\alpha \beta})^{-1} - h^{\alpha \beta \gamma} \chi^{\alpha \gamma} (h^{\alpha \beta \gamma})^{-1} = - h^{\alpha \beta \gamma} [F^\alpha,(h^{\alpha \beta \gamma})^{-1}]
\end{equation}
\end{prop}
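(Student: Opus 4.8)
The plan is to bypass the direct expansion of $\chi^{\alpha\beta}=[R^\alpha,\eta^{\alpha\beta}]$ — which would force us to track the conjugation of the remainder $R^\alpha$ by the $J$-valued $2$-transition $h^{\alpha\beta\gamma}$, the genuinely awkward step — and instead to feed in the transformation law of the fake curvature already established above. Conjugating $F^\beta=(g^{\alpha\beta})^{-1}F^\alpha g^{\alpha\beta}-(g^{\alpha\beta})^{-1}\chi^{\alpha\beta}g^{\alpha\beta}$ back to the chart $\alpha$ gives, on every double overlap, the clean identity $\chi^{\alpha\beta}=F^\alpha-g^{\alpha\beta}F^\beta(g^{\alpha\beta})^{-1}$ (and likewise $\chi^{\beta\gamma}=F^\beta-g^{\beta\gamma}F^\gamma(g^{\beta\gamma})^{-1}$ and $\chi^{\alpha\gamma}=F^\alpha-g^{\alpha\gamma}F^\gamma(g^{\alpha\gamma})^{-1}$ after relabelling the indices).

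First I would substitute these three expressions into the left-hand side $\chi^{\alpha\beta}+g^{\alpha\beta}\chi^{\beta\gamma}(g^{\alpha\beta})^{-1}-h^{\alpha\beta\gamma}\chi^{\alpha\gamma}(h^{\alpha\beta\gamma})^{-1}$ and collect terms. The two occurrences of $g^{\alpha\beta}F^\beta(g^{\alpha\beta})^{-1}$ (one from $\chi^{\alpha\beta}$, one from $g^{\alpha\beta}\chi^{\beta\gamma}(g^{\alpha\beta})^{-1}$) cancel, leaving
\begin{equation}
F^\alpha-g^{\alpha\beta}g^{\beta\gamma}F^\gamma(g^{\alpha\beta}g^{\beta\gamma})^{-1}-h^{\alpha\beta\gamma}F^\alpha(h^{\alpha\beta\gamma})^{-1}+h^{\alpha\beta\gamma}g^{\alpha\gamma}F^\gamma(g^{\alpha\gamma})^{-1}(h^{\alpha\beta\gamma})^{-1}.
\end{equation}
The decisive algebraic input is the defining cocycle relation $h^{\alpha\beta\gamma}=g^{\alpha\beta}g^{\beta\gamma}g^{\gamma\alpha}$ together with $g^{\gamma\alpha}=(g^{\alpha\gamma})^{-1}$, which yields $g^{\alpha\beta}g^{\beta\gamma}=h^{\alpha\beta\gamma}g^{\alpha\gamma}$; inserting this makes the two $F^\gamma$-terms cancel as well.

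What is left is exactly $F^\alpha-h^{\alpha\beta\gamma}F^\alpha(h^{\alpha\beta\gamma})^{-1}$, and writing $F^\alpha=h^{\alpha\beta\gamma}(h^{\alpha\beta\gamma})^{-1}F^\alpha$ this equals $-h^{\alpha\beta\gamma}\big(F^\alpha(h^{\alpha\beta\gamma})^{-1}-(h^{\alpha\beta\gamma})^{-1}F^\alpha\big)=-h^{\alpha\beta\gamma}[F^\alpha,(h^{\alpha\beta\gamma})^{-1}]$, which is the asserted relation. I do not anticipate a real obstacle once one routes through $F$: every ingredient — the fake-curvature transformation formula and the cocycle definition of $h^{\alpha\beta\gamma}$ — is already in hand, and the argument reduces to two cancellations followed by rewriting a difference of conjugates as a commutator. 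The only points demanding care are the bookkeeping of which gauge potential ($A^\alpha$ versus $A^\beta$, through the definition of $F^\bullet=dR^\bullet-[A^\bullet,R^\bullet]-R^\bullet\wedge R^\bullet$) and which overlap each symbol is evaluated on, plus the verification that the output indeed lies in $\Omega^2$ with values in $\bigsqcup_x\mathfrak j_x$, which is consistent with $\chi^{\alpha\beta}=[R^\alpha,\eta^{\alpha\beta}]$ since $\eta^{\alpha\beta}$ is already $\mathfrak j$-valued and $\mathfrak j_x$ is an ideal of $\mathfrak g_x$.
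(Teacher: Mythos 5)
Your proposal is correct and follows essentially the same route as the paper: the paper's own proof likewise begins from the rearranged fake-curvature transformation $\chi^{\alpha\beta}=F^\alpha-g^{\alpha\beta}F^\beta(g^{\alpha\beta})^{-1}$ and uses the cocycle definition of $h^{\alpha\beta\gamma}$ (together with $g^{\gamma\alpha}=(g^{\alpha\gamma})^{-1}$) to reduce everything to $F^\alpha-h^{\alpha\beta\gamma}F^\alpha(h^{\alpha\beta\gamma})^{-1}$. The only difference is organizational — you substitute all three $\chi$'s at once and cancel, where the paper chains the substitutions through $F^\beta$ and $F^\gamma$ — which changes nothing of substance.
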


\begin{proof}
\begin{eqnarray}
\chi^{\alpha \beta} & = & F^\alpha - g^{\alpha \beta} F^\beta (g^{\alpha \beta})^{-1} \nonumber \\
& = & F^\alpha - g^{\alpha \beta} \left(\chi^{\beta \gamma}+g^{\beta \gamma}F^\gamma(g^{\beta \gamma})^{-1} \right) (g^{\alpha \beta})^{-1} \nonumber \\
& = & F^\alpha - g^{\alpha \beta} \chi^{\beta \gamma} (g^{\alpha \beta})^{-1} - g^{\alpha \beta} g^{\beta \gamma} \chi^{\gamma \alpha} (g^{\beta \gamma})^{-1} (g^{\alpha \beta})^{-1} \nonumber \\
& & \quad - h^{\alpha \beta \gamma} F^\alpha (h^{\alpha \beta \gamma})^{-1}
\end{eqnarray}
but
\begin{eqnarray}
\chi^{\gamma \alpha} & = & F^\gamma - g^{\gamma \alpha} F^\alpha (g^{\gamma \alpha})^{-1} \nonumber \\
& = & g^{\gamma \alpha}( g^{\alpha \gamma} F^\gamma (g^{\alpha \gamma})^{-1} - F^\alpha) (g^{\gamma \alpha})^{-1} \nonumber\\
& = & - g^{\gamma \alpha} \chi^{\alpha \gamma} (g^{\gamma \alpha})^{-1}
\end{eqnarray}
then
\begin{equation}
 \chi^{\alpha \beta} = F^\alpha - g^{\alpha \beta} \chi^{\beta \gamma} (g^{\alpha \beta})^{-1} + h^{\alpha \beta \gamma} \chi^{\alpha \gamma} (h^{\alpha \beta \gamma})^{-1} - h^{\alpha \beta \gamma} F^\alpha (h^{\alpha \beta \gamma})^{-1}
\end{equation}
\end{proof}

\subsection{The fibred structure}
\subsubsection{Construction of $\mathcal P$ as a non-abelian bundle gerbes:}
Let $\xi^{\alpha \beta} : P^\beta_{G|U^\alpha \cap U^\beta} \to P^\alpha_{G|U^\alpha \cap U^\beta}$ be the bundle isomorphism defined by
\begin{equation}
\forall p \in P^\beta_{G|U^\alpha \cap U^\beta}, \quad \xi^{\alpha \beta}(p) = g^{\alpha \beta}(\pi_{P^\beta_G}(p))p
\end{equation}
where $\pi_{P^\beta_G} : P^\beta_G \to U^\beta$ is the canonical projection of the bundle $P^\beta_G$, the product between $g^{\alpha \beta}(\pi_{P^\beta_G}(p))$ and $p$ is the group law of $G_{\pi_{P^\beta_G}(p)} \subset \mathcal{GL}(\S)$.\\
Let $\mathring g^{\alpha \beta} \in \mathcal C^\infty(U^\alpha \cap U^\beta,G)$ be the trivialization of $g^{\alpha \beta}$, i.e.
\begin{equation}
\mathring g^{\alpha \beta}(x) = (\zeta^\alpha_x)^{-1} \circ \xi^{\alpha \beta}_x \circ \zeta^\beta_x(1_G) = (\zeta^\alpha_x)^{-1} \left( g^{\alpha \beta}(x) \zeta^\beta_x(1_G)  \right)
\end{equation}
in other words, we have the following commutative diagram:
$$ \begin{CD}
G @>{\mathring g^{\alpha \beta}(x) \times}>> G \\
@V{\zeta^\beta_x}VV @VV{\zeta^\alpha_x}V \\
\pi^{-1}_{P^\beta_G}(x) @>>{\xi^{\alpha \beta}_x}> \pi^{-1}_{P^\alpha_G}(x) 
\end{CD} $$
We define $\mathring h^{\alpha \beta \gamma} \in \mathcal C^\infty(U^\alpha \cap U^\beta \cap U^\gamma,J)$ by
\begin{equation}
\forall p \in \pi^{-1}_{P^\alpha_G}(x), \quad \xi^{\alpha \beta}_x \circ \xi^{\beta \gamma}_x \circ \xi^{\gamma \alpha}_x (p) = \mathring h^{\alpha \beta \gamma}(x) p
\end{equation}
in other words
\begin{equation}
\mathring h^{\alpha \beta \gamma}(x) = (\zeta^\alpha_x)^{-1}(h^{\alpha \beta \gamma}(x) \zeta^{\alpha}_x(1_G))
\end{equation}
Let $\zeta^\alpha_{x,Lie} : \mathfrak g \to \mathfrak g_x$ be the map induced in the Lie algebras by $\zeta^\alpha_x$ (the local trivialization of $P^\alpha_G$). We extend $\zeta^\alpha$ on the $\mathfrak g$-valued differential forms of $U^\alpha$ by $ \zeta^\alpha : \Omega^*(U^\alpha, \mathfrak g) \to \Omega^*(U^\alpha,P^\alpha_{\mathfrak g})$
\begin{eqnarray}
& & \zeta^\alpha\left(a^i_{\mu_1...\mu_p}(x)X_i dx^{\mu_1} \wedge ... \wedge dx^{\mu_p}\right) \nonumber \\
& & \qquad =  a^i_{\mu_1...\mu_p}(x) \zeta^\alpha_{x,Lie}(X_i) dx^{\mu_1} \wedge ... \wedge dx^{\mu_p}
\end{eqnarray}
where $\{X_i\}_i$ is the set of the generators of $\mathfrak g$.

\begin{prop}
$\zeta^\alpha$ is a chain map and an algebra homomorphism : $\forall a,b \in \Omega^*(U^\alpha,\mathfrak g)$
\begin{equation}
d\zeta^\alpha(a) = \zeta^\alpha(da)
\end{equation}
\begin{equation}
\zeta^\alpha(a\wedge b) = \zeta^\alpha(a) \wedge \zeta^\alpha(b)
\end{equation}
the derivative on $\Omega^*(U^\alpha,P^\alpha_{\mathfrak g}) = \Omega^*(U^\alpha) \otimes_M \Gamma(U^\alpha,P^\alpha_{\mathfrak g})$ being defined as $d \otimes 1_{P^\alpha_{\mathfrak g}}$.
\end{prop}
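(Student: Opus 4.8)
The plan is to notice that both identities are pointwise in $x \in U^\alpha$ and $\Omega^*(U^\alpha)$-multilinear, so the proof reduces entirely to how $\zeta^\alpha$ acts on the $\mathfrak g$-valued coefficients. The key preliminary observation, which I would record first, is that $\zeta^\alpha$ extended to forms is \emph{by construction} the coefficient-wise application of the continuous family of Lie-algebra isomorphisms $\zeta^\alpha_{x,Lie} : \mathfrak g \to \mathfrak g_x$; equivalently it is the local trivialization of the associated bundle $P^\alpha_{\mathfrak g} = P^\alpha_G \times_G \mathfrak g$ induced by the local trivialization $\zeta^\alpha$ of $P^\alpha_G$, extended over $\Omega^*(U^\alpha)$. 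In particular the frame $\{\zeta^\alpha(X_i)\}_i$ of $\Gamma(U^\alpha,P^\alpha_{\mathfrak g})$ (where $\zeta^\alpha(X_i)$ denotes the section $x \mapsto \zeta^\alpha_{x,Lie}(X_i)$) is precisely the frame that defines the splitting $\Omega^*(U^\alpha,P^\alpha_{\mathfrak g}) = \Omega^*(U^\alpha) \otimes_M \Gamma(U^\alpha,P^\alpha_{\mathfrak g})$ and hence the operator $d \otimes 1_{P^\alpha_{\mathfrak g}}$; that $\zeta^\alpha$ is well defined, i.e. independent of the choice of generators $\{X_i\}$, follows from the linearity of $\zeta^\alpha_{x,Lie}$.

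Granting this, the chain-map identity is essentially a tautology. Writing $a = \omega^i \otimes X_i$ with $\omega^i \in \Omega^*(U^\alpha)$ and the $X_i$ the constant generators of $\mathfrak g$, one has $da = d\omega^i \otimes X_i$, hence $\zeta^\alpha(da) = d\omega^i \otimes \zeta^\alpha(X_i)$; on the other hand $\zeta^\alpha(a) = \omega^i \otimes \zeta^\alpha(X_i)$ is already expanded in the distinguished frame, so by definition $(d \otimes 1_{P^\alpha_{\mathfrak g}})\,\zeta^\alpha(a) = d\omega^i \otimes \zeta^\alpha(X_i)$ as well. Put differently, $\zeta^\alpha$ carries the constant frame $\{X_i\}$ of $U^\alpha \times \mathfrak g$ to the frame relative to which the target exterior derivative is defined, so it intertwines the two derivatives with nothing to prove beyond the bookkeeping.

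For the homomorphism identity I would first fix the convention (the one used implicitly throughout the paper, where expressions such as $A^\alpha \wedge A^\alpha$ are operator products) that the wedge on $\mathfrak g$- and $P^\alpha_{\mathfrak g}$-valued forms is the wedge of the form parts tensored with the product of the coefficients inherited from $\mathcal L(\S)$ — legitimate because the fibres $\mathfrak g_x$ are stable under operator composition (Property \ref{inner}). Then on homogeneous elements $a = \omega \otimes X$ and $b = \sigma \otimes Y$ one computes $\zeta^\alpha(a \wedge b) = (\omega \wedge \sigma) \otimes \zeta^\alpha(XY)$ and $\zeta^\alpha(a) \wedge \zeta^\alpha(b) = (\omega \wedge \sigma) \otimes \big( \zeta^\alpha(X)\,\zeta^\alpha(Y) \big)$, with no extra Koszul signs since $X$ and $Y$ sit in degree $0$; the identity therefore reduces to $\zeta^\alpha_{x,Lie}(XY) = \zeta^\alpha_{x,Lie}(X)\,\zeta^\alpha_{x,Lie}(Y)$ for every $x$, and the general case follows by $\Omega^*(U^\alpha)$-bilinearity.

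The genuine obstacle is precisely this last point: $\zeta^\alpha_x : G \to G_x$ is \emph{a priori} only a group isomorphism, and differentiating $\zeta^\alpha_x(gg') = \zeta^\alpha_x(g)\zeta^\alpha_x(g')$ at $1_G$ yields only compatibility with the Lie brackets, not with the associative product of $\mathcal L(\S)$. I would close this gap by using that the $G_x$ are realised as subgroups of $\mathcal{GL}(\S)$ and the family $\{\zeta^\alpha_x\}$ is taken compatible with these ambient algebra structures; in all the concrete situations of interest, and in the qubit example of section~6, $\zeta^\alpha_x$ is the inner automorphism $g \mapsto U(x) g U(x)^{-1}$, for which $\zeta^\alpha_{x,Lie}(XY) = U(x) X Y U(x)^{-1} = \zeta^\alpha_{x,Lie}(X)\,\zeta^\alpha_{x,Lie}(Y)$ holds directly. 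Alternatively, if one reads the wedge on $\mathfrak g$-valued forms as the graded bracket, the homomorphism property is automatic from $\zeta^\alpha_{x,Lie}$ being a Lie-algebra morphism and no such issue arises; and under either reading the chain-map half of the proposition is untouched by this discussion.
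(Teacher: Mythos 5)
Your proof is correct and, for the chain-map identity, coincides with the paper's own argument: the paper's entire proof is the single sentence that the proposition is ``a direct consequence of the definition of the extension of $\zeta^\alpha$'', which is precisely your observation that $d\otimes 1_{P^\alpha_{\mathfrak g}}$ is defined relative to the frame $\{x\mapsto\zeta^\alpha_{x,Lie}(X_i)\}_i$, so that $\zeta^\alpha$ tautologically intertwines the two exterior derivatives. Where you go beyond the paper is on the homomorphism identity: the paper nowhere acknowledges that $\zeta^\alpha_{x,Lie}$, being the derivative at $1_G$ of a group isomorphism $\zeta^\alpha_x : G \to G_x$, is a priori only a morphism of Lie algebras, whereas the wedge of $\mathcal L(\S)$-valued forms used throughout (e.g.\ in $A^\alpha\wedge A^\alpha$) multiplies coefficients with the associative operator product. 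Your two patches are the right ones: either read the wedge on $\Omega^*(U^\alpha,\mathfrak g)$ as the graded bracket --- the only intrinsic choice on the abstract $\mathfrak g$, under which the identity is automatic from $\zeta^\alpha_{x,Lie}$ preserving brackets --- or require $\zeta^\alpha_x$ to be the restriction of an automorphism of the associative algebra $\mathcal L(\S)$, as for the conjugation maps $g\mapsto U(x)gU(x)^{-1}$ of the qubit example. Be aware only that the second patch, as you state it, is verified on examples and amounts to an extra hypothesis on the trivialization rather than a consequence of the paper's definitions; the proposition as printed holds in full generality only under the bracket reading, and your write-up would be strengthened by saying explicitly which convention you adopt before invoking the identity in the curvature formulas.
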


\begin{proof}
This is a direct consequence of the definition of the extension of $\zeta^\alpha$.
\end{proof}
We are then able to define the trivializations of the different forms : $\mathring A^\alpha = (\zeta^\alpha)^{-1}(A^\alpha) \in \Omega^1(U^\alpha,\mathfrak g)$, $\mathring B^{\alpha} = (\zeta^\alpha)^{-1}(B^\alpha) \in \Omega^2(U^\alpha,\mathfrak j)$, $\mathring F^\alpha = (\zeta^\alpha)^{-1}(F^\alpha) \in \Omega^2(U^\alpha,\mathfrak g)$, $\mathring \eta^{\alpha \beta} = (\zeta^{\alpha})^{-1}(\eta^{\alpha \beta}) \in \Omega^1(U^\alpha \cap U^\beta,\mathfrak j)$, $\mathring \chi^{\alpha \beta} = (\zeta^\alpha)^{-1}(\chi^{\alpha \beta}) \in \Omega^2(U^\alpha \cap U^\beta,\mathfrak j)$, and $\mathring H^\alpha = (\zeta^\alpha)^{-1}(H^\alpha) \in \Omega^3(U^\alpha,\mathfrak j)$.\\

The geometric structure $\mathcal P$ is defined by the 1-transition $\mathring g^{\alpha \beta}$ and the 2-transition $\mathring h^{\alpha \beta \gamma}$. It is endowed with a 2-connection (see ref. \cite{baez1,baez2,baez2}) defined by the gauge potential $\mathring A^\alpha$, the curving $\mathring B^\alpha$ and the potential-transformation $\mathring \eta^{\alpha \beta}$. Its geometry is characterized by the fake curvature $\mathring F^\alpha$, the curving-transformation $\mathring \chi^{\alpha \beta}$ and the true curvature $\mathring H^\alpha$. All these data define $\mathcal P$ as a non-abelian bundle gerbes \cite{breen,aschieri,kalkkinen,baez1,baez2,baez3}) with the structure Lie crossed module $(G,J,t,\Ad)$ where $t:J \to G$ is the canonical injection ($J$ is a subgroup of $G$) and $\Ad : G \to \mathrm{Aut}(J)$ is the adjoint representation of $G$ on itself restricted to $J$ (in the homomorphisms domain).

\subsubsection{Construction of $\mathcal P$ as a categorical bundle:}
$\mathcal P$ can be viewed as a principal categorical bundle (a 2-bundle, see ref. \cite{baez1,baez2,baez3}) on the left with the structure groupo\"id $\mathcal G$ having $\Obj(\mathcal G) = G$ as set of objects and $\Morph(\mathcal G) = J \rtimes G$ as set of arrows, the semi-direct product (the arrows horizontal composition) being defined by $(h,g)(h',g') = (h\Ad(g)h',gg')$. The source map of $\mathcal G$ is defined by $s(h,g) = g$ and the target map is defined by $t(h,g) = t(h)g$ (with $t(h)$ the canonical injection of $h$ in $G$). The morphisms composition (the arrows vertical composition) is defined by $(h,g) \circ (h',t(h)g) = (hh',g)$.\\
The total category $\mathcal P$ is then defined by $\Obj(\mathcal P) = \bigsqcup_\alpha P^\alpha_G$ and $\Morph(\mathcal P) = \bigsqcup_\alpha P^\alpha_J \times_M P^\alpha_G$ with $\forall (q,p) \in P^\alpha_J \times_M P^\alpha_G$ the source map $s(q,p) = p$ and the target map $t(q,p) = (\zeta^\alpha_{\pi_{P^\alpha_J}(q)})^{-1}(q)p$. The arrows composition is $(q,p) \circ (q', \zeta^\alpha_{\pi_{P^\alpha_J}(q)})^{-1}(q)p) = \left(qq',p\right)$. Let $\mathcal M$ be the representation of $M$ as a category, i.e. $\Obj(\mathcal M) = M$ and $\Morph(\mathcal M) = \{id_x : x \to x \}_{x \in M}$. The projection functor $\pi_{\mathcal P} : \mathcal P \to \mathcal M$ is defined by $\pi_{\mathcal P}^{\Obj} = \bigsqcup_\alpha \pi_{P^\alpha_G}$ and $\pi_{\mathcal P}^{\Morph}(q,p) = id_{\pi_{P^\alpha_G}(p)}$. The local trivialization functor $\vartheta^\alpha : \mathcal U^\alpha \times \mathcal G \to \mathcal P_{|U^\alpha}$ is defined by $\vartheta^{\alpha,\Obj} = \zeta^\alpha$ and $\vartheta^{\alpha,\Morph} = \zeta^\alpha_{|U^\alpha \times J} \times \zeta^\alpha$ ($\mathcal U^\alpha$ is the representation of $U^\alpha$ as a category). Finally the left action functor $L : \mathcal G \to \mathrm{Funct}(\mathcal P,\mathcal P)$ of $\mathcal G$ on $\mathcal P$ ($\mathrm{Funct}(\mathcal P,\mathcal P)$ is the category of the endofunctors of $\mathcal P$) is defined by: $\forall g \in \Obj(\mathcal G) = G$
\begin{equation}
L(g) : \begin{array}{rcl} \Obj(\mathcal P) & \to & \Obj(\mathcal P) \\ P^\alpha_G \ni p & \mapsto & L_{G}^\alpha(g) p \end{array}
\end{equation}
and $\forall (h,g) \in \Morph(\mathcal G) = J \rtimes G$
\begin{equation}
L(h,g) : \begin{array}{rcl} \Morph(\mathcal P) & \to & \Morph(\mathcal P) \\ P^\alpha_J \times_M P^\alpha_G \ni (q,p) & \mapsto & (L_{J}^\alpha(h) \Ad(L_G^\alpha(g))q, L_G^\alpha(g)p) \end{array}
\end{equation}
where $L^\alpha_J$ and $L^\alpha_G$ are the canonical left actions of $J$ and $G$ on $P^\alpha_J$ and $P^\alpha_G$.\\

Finally, we can note that because $G/J$ is a group since $J$ is normal, $\mathcal P$ is also a non-abelian twisted bundle \cite{mackaay} associated with the extension of groups:
$$ 1 \to J \to G \to G/J \to 1 $$

\section{About the adiabaticity of mixed states}
\label{annexeD}
Even if this is not strictly necessary, we consider in this section that the eigenoperator $E$ is selfadjoint, in order to avoid unnecessary complications involving biorthogonal basis.
\subsection{The adiabatic approximation}
The following theorem shows that there exists an adiabatic approximation for the $*$-eigenvectors.
\begin{theo}[Adiabatic approximation]
\label{thadiab}
Let $M \ni x \mapsto E(x) = E_0(x) + \lambda(x) 1_{\S} \in \mathcal L(\S)$ be a continuous selfadjoint eigenoperator decomposed following proposition \ref{spectre}, and $U^\alpha \ni x \mapsto \phi^\alpha_{E_0,\lambda}(x) \in \S \otimes \E$ an associated $*$-eigenvector. Let $t \mapsto x(t) \in U^\alpha$ be an evolution. We assume the following adiabatic condition
\begin{eqnarray}
& & \forall \mu \in \Sp(H-E_0 \otimes 1_{\E}) \setminus \{\lambda\}, \nonumber \\
& & \forall \phi_{E_0,\mu} \in \ker(H-E_0 \otimes 1_{\E} - \mu 1_{\S \otimes \E}) \nonumber \\
& & \quad  \frac{\llangle \phi_{E_0,\mu}(x(t)) | \left(\frac{dH(x(t))}{dt} - \frac{dE_0(x(t))}{dt} \otimes 1_{\E}\right) \phi_{E_0,\lambda}(x(t)) \rrangle}{\lambda(x(t))-\mu(x(t))} \simeq 0 \label{adiabcond}
\end{eqnarray}
Then the wave function of the universe, $\psi(t)$ solution of $\ihbar \frac{d\psi}{dt} = H(x(t)) \psi(t)$ with $\psi(0) = \phi^\alpha_{E_0,\lambda}(x(0))$, rests within the $*$-eigenspace associated with $E(x)$, i.e.
\begin{equation}
\forall t; \psi(t) \in \ker\left(H(x(t))-E_0(x(t)) \otimes 1_{\E} - \lambda(x(t)) 1_{\S \otimes \E}\right)
\end{equation}
\end{theo}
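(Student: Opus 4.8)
The plan is to reduce the statement to the standard heuristic adiabatic theorem, applied not to $H(x)$ but to the shifted operator $\tilde H(x):=H(x)-E_0(x)\otimes 1_\E$. First I would record three structural facts. (i) Since $E_0$ is assumed selfadjoint, $\tilde H(x)$ is selfadjoint for $\llangle\cdot|\cdot\rrangle$ and hence, at frozen $x$, admits an orthonormal eigenbasis; by Proposition~\ref{spectre}, $\phi^\alpha_{E_0,\lambda}(x)$ is its eigenvector for the eigenvalue $\lambda(x)$, and the remaining eigenvectors are the $\phi_{E_0,\mu}(x)$, $\mu\in\Sp(\tilde H(x))\setminus\{\lambda\}$, occurring in $(\ref{adiabcond})$. (ii) The commutation $[E_0(x)\otimes 1_\E,H(x)]=0$ built into the definition of an eigenoperator gives $[\tilde H(x),H(x)]=0$; hence the spectral projector $Q(x)$ of $\tilde H(x)$ onto $\lambda(x)$ commutes with $H(x)$, the target space $\ker(H(x)-E_0(x)\otimes 1_\E-\lambda(x)1_{\SE})$ equals $\mathrm{Ran}\,Q(x)$, and this subspace is $H(x)$-invariant. (iii) Differentiating $\tilde H\phi_{E_0,\lambda}=\lambda\phi_{E_0,\lambda}$ along $t\mapsto x(t)$ and pairing with $\phi_{E_0,\mu}$ for $\mu\neq\lambda$, using selfadjointness of $\tilde H$, yields $\llangle\phi_{E_0,\mu}(x(t))|\frac{d}{dt}\phi_{E_0,\lambda}(x(t))\rrangle=\frac{\llangle\phi_{E_0,\mu}|(\dot H-\dot E_0\otimes 1_\E)|\phi_{E_0,\lambda}\rrangle}{\lambda(x(t))-\mu(x(t))}$, so that the adiabatic hypothesis $(\ref{adiabcond})$ says exactly that these nonadiabatic couplings are $\simeq 0$ (this tacitly uses the gap $\lambda(x(t))\neq\mu(x(t))$ already present in the hypothesis).

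Then I would run the classical argument. Split $\psi(t)=\psi_\parallel(t)+\psi_\perp(t)$, $\psi_\parallel(t)=Q(x(t))\psi(t)$, $\psi_\perp(t)=(1_{\SE}-Q(x(t)))\psi(t)$, and estimate $\|\psi_\perp(t)\|$. Differentiating, inserting $\ihbar\dot\psi=H\psi$ and using $[Q,H]=0$ gives $\dot\psi_\perp=-\dot Q\,\psi-\ihbar^{-1}H\psi_\perp$; since $H$ is selfadjoint the term $-\ihbar^{-1}H\psi_\perp$ contributes nothing to $\frac{d}{dt}\|\psi_\perp\|^2$, and from $Q^2=Q$ one has $Q\dot Q Q=0$ and $(1_{\SE}-Q)\dot Q(1_{\SE}-Q)=0$, whence $\llangle\psi_\perp|\dot Q\psi_\perp\rrangle=0$ and $\dot Q\psi_\parallel=(1_{\SE}-Q)\dot Q Q\psi_\parallel\in\mathrm{Ran}(1_{\SE}-Q)$. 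Thus $\frac{d}{dt}\|\psi_\perp(t)\|^2=-2\re\llangle\psi_\perp(t)|\dot Q(x(t))\psi_\parallel(t)\rrangle$, and the components of $\dot Q(x(t))\psi_\parallel(t)$ along the $\phi_{E_0,\mu}(x(t))$, $\mu\neq\lambda$, are governed by the couplings $\llangle\phi_{E_0,\mu}|\frac{d}{dt}\phi_{E_0,\lambda}\rrangle$ of fact (iii), hence $\simeq 0$. Since $\psi(0)=\phi^\alpha_{E_0,\lambda}(x(0))\in\mathrm{Ran}\,Q(x(0))$ we have $\psi_\perp(0)=0$, so integrating gives $\|\psi_\perp(t)\|\simeq 0$ for all $t$, i.e. $\psi(t)\in\ker(H(x(t))-E_0(x(t))\otimes 1_\E-\lambda(x(t))1_{\SE})$.

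The step I expect to carry the real content is fact (ii): it is the commutation $[E_0\otimes 1_\E,H]=0$ that makes the $H$-evolution preserve $\mathrm{Ran}\,Q$ and $\mathrm{Ran}(1_{\SE}-Q)$ separately at frozen time, so the only channel mixing $\psi_\parallel$ and $\psi_\perp$ is the geometric term $\dot Q$, which the adiabatic hypothesis controls; for a generic spectral subspace of $H$ rather than of $\tilde H=H-E_0\otimes 1_\E$ the decoupling would break down. The remaining obstacle is that the estimate is only heuristic, exactly as for the closed systems reviewed above: turning $\simeq$ into a rigorous bound would require the couplings to be small uniformly in the adiabatic time scale, a uniform spectral gap, Kato-type projectors for the possibly degenerate eigenspace, and --- were $E_0$ not selfadjoint --- a biorthogonal refinement; these refinements are deliberately set aside here (cf. the remark opening \ref{annexeD}).
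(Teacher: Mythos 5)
Your argument is correct at the same heuristic level as the paper's, but it follows a genuinely different route. The paper's proof passes to an interaction picture: it conjugates the dynamics by the time-ordered exponential $\Te^{-\ihbar^{-1}\int_0^t E_0(t')dt'}$, so that the transformed wave function $\tilde\psi$ obeys a Schr\"odinger equation generated by $\widetilde{H-E_0}$; it then checks that this conjugation preserves both the spectrum of $H-E_0\otimes 1_{\E}$ and the nonadiabatic couplings (the extra term being proportional to $[E_0,H]=0$), and finally invokes the \emph{usual} adiabatic approximation as a black box for $\widetilde{H-E_0}$ before undoing the transformation. You instead stay in the original picture and rerun the standard projector estimate directly: the commutation $[E_0\otimes 1_{\E},H]=0$ gives $[Q,H]=0$ for the spectral projector $Q$ of $H-E_0\otimes 1_{\E}$ onto $\lambda$, so the only channel between $\mathrm{Ran}\,Q$ and its complement is $\dot Q$, whose matrix elements are exactly the couplings killed by condition (\ref{adiabcond}). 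Both proofs hinge on the same structural fact --- the commutation built into the definition of an eigenoperator --- and your fact (iii) reproduces the paper's verification that (\ref{adiabcond}) is the standard gap-coupling condition for $H-E_0\otimes 1_{\E}$. What your version buys is self-containedness (you actually carry out the $\frac{d}{dt}\|\psi_\perp\|^2$ estimate rather than citing the usual theorem) and automatic handling of the degenerate eigenspace via the full projector $Q$; what the paper's version buys is that the unitary $\Te^{-\ihbar^{-1}\int_0^t E_0\,dt'}$ is exhibited explicitly, which is then reused as the operator-valued dynamical phase in the adiabatic transport formula of Property \ref{propadiabtransp}. Your closing caveats about rigour match the paper's own disclaimer at the start of \ref{annexeD}.
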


\begin{proof}
\begin{eqnarray}
& & \ihbar \frac{d \psi}{dt} = H \psi \nonumber \\
& & \iff \ihbar \frac{d}{dt} \left(\Te^{- \ihbar^{-1} \int_0^t E_0(t')dt'} \tilde \psi \right) = H \Te^{- \ihbar^{-1} \int_0^t E_0(t')dt'} \tilde \psi
\end{eqnarray}
with $\tilde \psi = \Te^{ \ihbar^{-1} \int_0^t E_0(t')dt'} \psi$, $\Te$ being the time ordered exponential (the Dyson series). We have then
\begin{eqnarray}
& & E_0 \Te^{- \ihbar^{-1} \int_0^t E_0(t')dt'} \tilde \psi + \ihbar \Te^{- \ihbar^{-1} \int_0^t E_0(t')dt'} \frac{d\tilde \psi}{dt} \nonumber \\
& & \qquad = H \Te^{- \ihbar^{-1} \int_0^t E_0(t')dt'} \tilde \psi
\end{eqnarray}
\begin{equation}
\Rightarrow \ihbar \frac{d\tilde \psi}{dt} = \Te^{\ihbar^{-1} \int_0^t E_0(t')dt'} (H-E_0) \Te^{- \ihbar^{-1} \int_0^t E_0(t')dt'} \tilde \psi
\end{equation}
\begin{equation}
\Rightarrow \ihbar \frac{d\tilde \psi}{dt} = \widetilde{(H-E_0)} \tilde \psi
\end{equation}
Let $\{\phi_{E_0,\lambda,a} \}_{\lambda \in \Sp(H-E_0),a \in \{1,...,\mathrm{deg}(\lambda)\}}$ be an orthonormal basis of the eigenspace of $H-E_0$ associated with $\lambda$. From the adiabatic condition we have $\forall \lambda \not= \mu, \forall a,b$
\begin{equation}
\llangle \phi_{E_0,\mu,a}|\partial_t|\phi_{E_0,\lambda,b} \rrangle = \frac{\llangle \phi_{E_0,\mu,a}|(\dot H - \dot E_0) \phi_{E_0,\lambda,b} \rrangle}{\lambda-\mu} \simeq 0
\end{equation}
$\Sp(\widetilde{H-E_0}) = \Sp(H-E_0)$ with orthonormal eigenbasis $\{\tilde \phi_{E_0,\lambda,a} = \Te^{\ihbar^{-1} \int_0^t E_0(t')dt'} \phi_{E_0,\lambda,a} \}$. We have then $\forall \lambda\not=\mu,\forall a,b$:
\begin{eqnarray}
\llangle \tilde \phi_{E_0,\mu,a}|\partial_t|\tilde \phi_{E_0,\lambda,b} \rrangle & = & \frac{\llangle \tilde \phi_{E_0,\mu,a}|(\dot{\tilde H} - \dot {\tilde E}_0) \tilde \phi_{E_0,\lambda,b} \rrangle}{\lambda-\mu} \nonumber\\
& = & \frac{\llangle \phi_{E_0,\mu,a}|(\dot H - \dot E_0) \phi_{E_0,\lambda,b} \rrangle}{\lambda-\mu} \nonumber \\
& & \quad + \ihbar^{-1} \frac{\llangle \phi_{E_0,\mu,a}|\overbrace{[E_0,H]}^{=0}|\phi_{E_0,\lambda,a}\rrangle}{\lambda-\mu} \nonumber \\
& \simeq & 0
\end{eqnarray}
We can then use the usual adiabatic approximation
\begin{equation}
\tilde \psi(t) = \sum_{a=1}^{\mathrm{deg}(\lambda)} c_a(t) \tilde \phi_{E_0,\lambda,a}(t)
\end{equation}
and then
\begin{equation}
\psi(t) = \sum_{a=1}^{\mathrm{deg}(\lambda)} c_a(t) \phi_{E_0,\lambda,a}(t) \in \ker(H-E_0-\lambda)
\end{equation}
\end{proof}
The use of the adiabatic approximation is here heuristic. The study of rigorous adiabatic theorems is not the subject of this paper; nevertheless this theorem shows that if the operator $H(x(t)) - E_0(x(t)) \otimes 1_{\E}$ satisfies a usual adiabatic theorem then the wave function of the universe stays in the $*$-eigenspace.

\begin{prop}
The adiabatic condition (\ref{adiabcond}) is equivalent to $R^\alpha_\mu(x(t)) \dot x^\mu(t) \simeq 0$ ($\forall t$) where $R^\alpha$ is the remainder of $C^*$-geometric phase generator.
\end{prop}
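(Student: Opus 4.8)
The plan is to show that $R^\alpha_\mu(x(t))\dot x^\mu(t)$, read on the support of $\rho_E^\alpha$, is nothing but a repackaging of the family of ratios appearing in the adiabatic condition (\ref{adiabcond}), so that the two conditions are negligible simultaneously. First I would contract the defining relation of the remainder, $R^\alpha\|\phi_E^\alpha\|^2_* = \langle\phi_E^\alpha|(1_{\S\otimes\E}-P_E)d\phi_E^\alpha\rangle_*$, with $\dot x^\mu$ along $\mathcal C$; using the chain rule $\frac{d}{dt}\phi_E^\alpha(x(t)) = \partial_\mu\phi_E^\alpha\,\dot x^\mu$, that $1_{\S\otimes\E}-P_E$ is an ordinary operator (hence commutes with the contraction) and $\|\phi_E^\alpha\|^2_* = \rho_E^\alpha$, this gives
\begin{equation}
R^\alpha_\mu(x(t))\dot x^\mu(t)\,\rho_E^\alpha(x(t)) = \langle\phi_E^\alpha|(1_{\S\otimes\E}-P_E)\tfrac{d\phi_E^\alpha}{dt}\rangle_* = \tr_\E\bigl(|(1_{\S\otimes\E}-P_E)\tfrac{d\phi_E^\alpha}{dt}\rrangle\llangle\phi_E^\alpha|\bigr).
\end{equation}
Thus the whole question is governed by the off-eigenspace component $(1_{\S\otimes\E}-P_E)\frac{d\phi_E^\alpha}{dt}$.

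Next I would expand this component on an orthonormal eigenbasis $\{\phi_{E_0,\mu,a}\}$ of $H(x)-E_0(x)\otimes 1_\E$ (with $E = E_0+\lambda 1_{\S}$ as in Proposition \ref{spectre}, so that $P_E$ projects onto the $\lambda$-eigenspace), namely $(1_{\S\otimes\E}-P_E)\frac{d\phi_E^\alpha}{dt} = \sum_{\mu\neq\lambda,a}\llangle\phi_{E_0,\mu,a}|\tfrac{d\phi_E^\alpha}{dt}\rrangle\,\phi_{E_0,\mu,a}$, and identify the coefficients with standard first-order perturbation quantities. Differentiating $(H-E_0\otimes1_\E)\phi_E^\alpha = \lambda\phi_E^\alpha$ in $t$, pairing with $\phi_{E_0,\mu,a}$ for $\mu\neq\lambda$, and using that $H-E_0\otimes1_\E$ is selfadjoint (so it acts to the left as $\mu$) together with $\llangle\phi_{E_0,\mu,a}|\phi_E^\alpha\rrangle = 0$, yields
\begin{equation}
\llangle\phi_{E_0,\mu,a}|\tfrac{d\phi_E^\alpha}{dt}\rrangle = \frac{\llangle\phi_{E_0,\mu,a}|\bigl(\tfrac{dH}{dt}-\tfrac{dE_0}{dt}\otimes1_\E\bigr)\phi_E^\alpha\rrangle}{\lambda-\mu},
\end{equation}
which is exactly the left-hand side of (\ref{adiabcond}) (this is the computation already carried out inside the proof of Theorem \ref{thadiab}). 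Hence (\ref{adiabcond}) holds if and only if every coefficient $\llangle\phi_{E_0,\mu,a}|\frac{d\phi_E^\alpha}{dt}\rrangle$ is negligible, i.e. if and only if $(1_{\S\otimes\E}-P_E)\frac{d\phi_E^\alpha}{dt}\simeq 0$ (the $\phi_{E_0,\mu,a}$ being orthonormal, the norm of the left side is just the $\ell^2$-norm of the coefficients).

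Finally I would close the equivalence with $R^\alpha_\mu\dot x^\mu\simeq 0$. The implication $(1_{\S\otimes\E}-P_E)\frac{d\phi_E^\alpha}{dt}\simeq 0 \Rightarrow R^\alpha_\mu\dot x^\mu\rho_E^\alpha\simeq 0$ is immediate from the first displayed identity; since $\rho_E^\alpha$ is positive definite on its range and $R^\alpha$ is fixed by its defining equation precisely on that range — its component on $\ker\rho_E^\alpha$ being an inert $\mathfrak j_x^1$-gauge choice, cf. \ref{annexeB} — the factor $\rho_E^\alpha$ may be cancelled there, giving $R^\alpha_\mu\dot x^\mu\simeq 0$. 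Conversely, since the coefficients in the expansion are literally the ratios of (\ref{adiabcond}), the statement $R^\alpha_\mu\dot x^\mu\simeq 0$ (read modulo the $\mathfrak j_x$-ambiguity, hence as $(1_{\S\otimes\E}-P_E)\frac{d\phi_E^\alpha}{dt}\simeq 0$ via the positive-definiteness of $\langle\cdot|\cdot\rangle_*$) is the same as (\ref{adiabcond}). I expect the main — indeed essentially the only — subtlety to be this last bookkeeping when $\rho_E^\alpha$ is not invertible: making precise the sense in which ``$R^\alpha_\mu\dot x^\mu\simeq 0$'' is to be understood modulo the $\mathfrak j_x$-ambiguity of $R^\alpha$, so that cancelling $\rho_E^\alpha$ is legitimate; everything else reduces to the elementary perturbation-theory identity above.
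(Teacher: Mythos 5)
Your proposal is correct and follows essentially the same route as the paper: contract the defining relation $R^\alpha\|\phi_E^\alpha\|^2_* = \langle\phi_E^\alpha|(1-P_E)d\phi_E^\alpha\rangle_*$ along $\dot x^\mu$, expand $(1-P_E)\frac{d\phi_E^\alpha}{dt}$ on the orthonormal eigenbasis of $H-E_0\otimes 1_{\E}$, and identify each coefficient with the perturbation-theory ratio of (\ref{adiabcond}). Your extra care about the converse direction and the cancellation of $\rho_E^\alpha$ modulo the $\mathfrak j_x$-ambiguity when $\ker\rho_E^\alpha\neq\{0\}$ is a refinement the paper silently glosses over, but it does not change the substance of the argument.
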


\begin{proof}
The set $\{\phi_{E_0,\lambda,a}\}_{\lambda \in \Sp(H-E_0),a =1,...,deg(\lambda)}$ constitutes an orthonormal basis of $\S \otimes \E$, we have then
\begin{equation}
 1 - P_{E_0,\lambda} = \sum_{\mu \not= \lambda} \sum_{a=1}^{deg(\mu)} = |\phi_{E_0,\mu,a}\rrangle \llangle \phi_{E_0,\mu,a}| 
\end{equation}
\begin{eqnarray}
R_\mu \dot x^\mu \|\phi_{E_0,\lambda}\|^2 & = & \tr_{\E}((1-P_{E_0,\lambda})|\partial_t \phi_{E_0,\lambda} \rrangle \llangle \phi_{E_0,\lambda}|) \nonumber\\
& = & \sum_{\mu \not=\lambda} \sum_a \llangle \phi_{E_0,\mu,a}|\partial_t \phi_{E_0,\lambda} \rrangle \nonumber \\
& & \qquad \times  \tr_{\E}(|\phi_{E_0,\mu,a} \rrangle \llangle \phi_{E_0,\lambda}|) \nonumber\\
& = & \sum_{\mu \not=\lambda} \sum_a \frac{\llangle \phi_{E_0,\mu,a}|(\dot H-\dot E_0 \otimes 1_{\E})|\phi_{E_0,\lambda} \rrangle}{\lambda-\mu} \nonumber \\
& & \qquad \times \tr_{\E}(|\phi_{E_0,\mu,a} \rrangle \llangle \phi_{E_0,\lambda}|) \nonumber \\
& \simeq & 0
\end{eqnarray}
\end{proof}

\subsection{Some properties of the adiabatic transport formula}
\begin{prop}
We have
\begin{equation}
\forall t \qquad \tr_{\S}\left(g_{EA}(t)\rho_E^\alpha(x(t)) g_{EA}(t)^\dagger \right) = 1
\end{equation}
\end{prop}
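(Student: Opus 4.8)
The plan is to reduce the statement to the conservation of the norm of the universe wave function. Recall from the proof of Proposition~\ref{propadiabtransp} that, under the adiabatic condition~(\ref{adiabcond}), the solution of the Lindblad equation with $\rho(0)=\rho_E^\alpha(x(0))$ is exactly $\rho(t)=g_{EA}(t)\rho_E^\alpha(x(t))g_{EA}(t)^\dagger$, and that this $\rho(t)$ is nothing but the partial trace $\tr_{\E}(|\psi(t)\rrangle\llangle\psi(t)|)$ of the pure state $\psi(t)$ solving $\ihbar\dot\psi=H(x(t))\psi$ with $\psi(0)=\phi_E^\alpha(x(0))$ (this is how the Lindblad equation was obtained in Section~3, and $\psi(t)=k_\eta(t)g_{EA}(t)\phi_E^\alpha(x(t))$ with $k_\eta(t)$ acting only, and unitarily, on $\E$, so that it drops out of $\tr_{\E}$). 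Hence I would first record
\begin{equation}
\tr_{\S}\!\left(g_{EA}(t)\rho_E^\alpha(x(t))g_{EA}(t)^\dagger\right)=\tr_{\S}\tr_{\E}\big(|\psi(t)\rrangle\llangle\psi(t)|\big)=\llangle\psi(t)|\psi(t)\rrangle .
\end{equation}
Since $H(x(t))$ is selfadjoint for $\llangle\cdot|\cdot\rrangle$ one has $\ihbar\frac{d}{dt}\llangle\psi|\psi\rrangle=\llangle\psi|(H-H^\ddagger)|\psi\rrangle=0$, so $\llangle\psi(t)|\psi(t)\rrangle=\llangle\psi(0)|\psi(0)\rrangle=\llangle\phi_E^\alpha(x(0))|\phi_E^\alpha(x(0))\rrangle=1$ by the normalization $\llangle\phi_E^\alpha|\phi_E^\alpha\rrangle=1$ imposed on the $*$-eigenvector. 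Equivalently, one may simply invoke that the Lindblad generator is trace-free, $\tr_{\S}\mathcal L_x=0$, whence $\tr_{\S}\rho(t)=\tr_{\S}\rho(0)=\tr_{\S}\rho_E^\alpha(x(0))=1$.

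Should a computation internal to the $C^*$-module picture be preferred, the alternative is to differentiate $f(t)=\tr_{\S}\!\big(g_{EA}\rho_E^\alpha g_{EA}^\dagger\big)$ directly, using $\dot g_{EA}=-g_{EA}(A^\alpha_\mu+\eta_\mu)\dot x^\mu-\ihbar^{-1}E g_{EA}$ from the proof of Proposition~\ref{propadiabtransp}, and $\frac{d}{dt}\rho_E^\alpha(x(t))=\big(\mathcal A^\alpha_\mu\rho_E^\alpha+\rho_E^\alpha(\mathcal A^\alpha_\mu)^\dagger\big)\dot x^\mu$ from equation~(\ref{geomphasematdens}). Collecting the terms and writing $\mathcal A^\alpha=A^\alpha+R^\alpha$, the $A^\alpha$-contributions cancel \emph{inside} the conjugation and one is left with
\begin{eqnarray}
\ihbar\dot f & = & \tr_{\S}\!\Big(g_{EA}\big[-\eta_\mu\rho_E^\alpha-\rho_E^\alpha\eta_\mu^\dagger+R^\alpha_\mu\rho_E^\alpha+\rho_E^\alpha (R^\alpha_\mu)^\dagger\big]\dot x^\mu\,g_{EA}^\dagger\Big) \nonumber \\
& & \quad +\,\tr_{\S}\!\big(g_{EA}\rho_E^\alpha g_{EA}^\dagger(E^\dagger-E)\big).
\end{eqnarray}
Here the particular $\eta$ of the adiabatic transport satisfies $\eta_\mu\rho_E^\alpha+\rho_E^\alpha\eta_\mu^\dagger=0$ (shown exactly as in the proof of Proposition~\ref{gaugetransf}, since $k_\eta^{-1}dk_\eta$ is antiselfadjoint), the term with $R^\alpha$ vanishes by the adiabatic condition $R^\alpha_\mu(x(t))\dot x^\mu(t)\simeq 0$, and the last term vanishes because in this section $E$ is assumed selfadjoint. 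Thus $\dot f\simeq 0$, and $f(0)=\tr_{\S}\rho_E^\alpha(x(0))=1$ since $g_{EA}(0)=1_{\S}$ and $\tr_{\S}\rho_E^\alpha=\llangle\phi_E^\alpha|\phi_E^\alpha\rrangle=1$.

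The main obstacle is a single point of care: $g_{EA}$ is \emph{not} unitary, so conjugation by it does not preserve $\tr_{\S}$ and one cannot simply cycle the $g_{EA}$ factors away; the cancellations of the $\eta$- and $E$-terms must therefore be carried out before the conjugating factors are removed, which is precisely why the $\mathfrak j_x$-orthogonality relation $\eta_\mu\rho_E^\alpha+\rho_E^\alpha\eta_\mu^\dagger=0$ and the selfadjointness of $E$ are exactly what is needed. The route through $\psi(t)$ sidesteps this entirely, which is why I would present it as the main argument and keep the direct differentiation as a consistency check.
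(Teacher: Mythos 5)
Your proposal is correct, and your primary argument takes a genuinely different route from the paper. The paper proves the statement by the direct differentiation you relegate to a ``consistency check'': it computes $\frac{d}{dt}\tr_{\S}(g_{EA}\rho_E g_{EA}^\dagger)$ using $\dot g_{EA}=-g_{EA}(A_\mu+\eta_\mu)\dot x^\mu-\ihbar^{-1}Eg_{EA}$ and $\dot\rho_E\simeq A_\mu\dot x^\mu\rho_E+\rho_E A_\mu^\dagger\dot x^\mu$, kills the $\eta$-contribution via $\eta_\mu\dot x^\mu\rho_E+\rho_E\eta_\mu^\dagger\dot x^\mu=0$, and kills the $E$-contribution not by selfadjointness of $E$ but by the identity $\tr_{\S}(Eg_{EA}\rho_Eg_{EA}^\dagger-g_{EA}\rho_Eg_{EA}^\dagger E^\dagger)=\tr_{\S}(\mathcal L(g_{EA}\rho_Eg_{EA}^\dagger))=\tr_{\S}\tr_{\E}([H,\cdot])=0$, which is marginally more general than your appeal to the standing selfadjointness assumption of \ref{annexeD} (though under that assumption cyclicity of $\tr_{\S}$ makes your version equally valid). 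Your main argument --- identifying $g_{EA}\rho_E g_{EA}^\dagger$ with $\tr_{\E}(|\psi\rrangle\llangle\psi|)$ via $\psi=k_\eta g_{EA}\phi_E$ and invoking unitarity of the universe evolution --- is shorter and more conceptual: it makes clear that the trace normalization is just norm conservation in $\S\otimes\E$ seen through the $C^*$-module. What it buys less of is the internal consistency of the $C^*$-module formalism that the paper's computation exhibits (the separate roles of $\eta\in\mathfrak j_x$, of the adiabatic condition on $R^\alpha$, and of the eigenoperator identity), which is presumably why the authors chose the longer route; your closing remark correctly identifies the one real trap (non-unitarity of $g_{EA}$, so no naive cycling) in that computation. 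Two trivial points of care: your displayed formula for $\ihbar\dot f$ has an inconsistent factor of $\ihbar$ between the two terms (the bracketed term should carry it too, or the whole equation should be stated for $\dot f$), and both your arguments, like the paper's, only yield $\tr_{\S}(\cdots)\simeq 1$ to the order of the adiabatic approximation, since the identity $\psi=k_\eta g_{EA}\phi_E$ with this particular $g_{EA}$ already uses $R_\mu\dot x^\mu\simeq 0$.
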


\begin{proof}
\begin{eqnarray}
\frac{d}{dt} \tr_{\S}(g_{EA}\rho_Eg_{EA}^\dagger) & = & \tr_{\S} (\dot g_{EA} \rho_E g_{EA}^\dagger) + \tr_{\S}(g_{EA} \dot \rho_E g_{EA}^\dagger) \nonumber \\
& & \qquad + \tr_{\S}(g_{EA}\rho_E \dot g_{EA}^\dagger)
\end{eqnarray}
but
\begin{equation}
\dot g_{EA} = - g_{EA}(A_\mu+\eta_\mu)\dot x^\mu  - \ihbar^{-1} E g_{EA}
\end{equation}
and
\begin{equation}
\dot \rho_E \simeq A_\mu \dot x^\mu \rho_E + \rho_E A^\dagger_\mu \dot x^\mu 
\end{equation}
then
\begin{eqnarray}
\frac{d}{dt} \tr_{\S}(g_{EA}\rho_Eg_{EA}^\dagger) 
& = & - \tr_{\S}(g_{EA}(\eta_\mu \dot x^\mu \rho_E + \rho_E \eta^\dagger_\mu \dot x^\mu) g_{EA}^\dagger) \nonumber \\
& & - \ihbar^{-1} \tr_{\S}(E g_{EA} \rho_E g_{EA}^\dagger \nonumber \\
& & \qquad \qquad \qquad \quad - g_{EA} \rho_E g_{EA}^\dagger E^\dagger)
\end{eqnarray}
Moreover we have
\begin{equation}
\eta_\mu \dot x^\mu \rho_E + \rho_E \eta^\dagger_\mu \dot x^\mu = 0
\end{equation}
and
\begin{eqnarray}
& & g_{EA} \phi_E \in \ker(H-E) \nonumber \\
& & \Rightarrow \mathcal L(g_{EA}\rho_E g_{EA}^\dagger) = E g_{EA} \rho_E g_{EA}^\dagger - g_{EA} \rho_E g_{EA}^\dagger E^\dagger
\end{eqnarray}
and since
\begin{equation}
\tr_{\S}(\mathcal L(g_{EA}\rho_E g_{EA}^\dagger)) = \tr_{\S} \tr_{\E} ([H,g_{EA}|\phi_E \rrangle \llangle \phi_E|g_{EA}^\dagger ]) = 0
\end{equation}
we have
\begin{equation}
\tr_{\S}(E g_{EA} \rho_E g_{EA}^\dagger - g_{EA} \rho_E g_{EA}^\dagger E^\dagger) = 0
\end{equation}
Finally we have
\begin{equation}
\frac{d}{dt} \tr_{\S}(g_{EA}\rho_Eg_{EA}^\dagger) = 0
\end{equation}
\end{proof}
This result is the parallel transport condition for the $C^*$-geometric phases:
\begin{prop}
The parallel transport condition is
\begin{equation}
\frac{d}{dt}\left( \Pe^{-\int_{x(0)}^{x(t)} (\mathcal A(x)+\eta(x))} \rho_E(x(t)) \left(\Pe^{-\int_{x(0)}^{x(t)} (\mathcal A(x)+\eta(x))}\right)^\dagger \right)= 0
\end{equation}
\end{prop}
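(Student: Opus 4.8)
The plan is to prove the identity by differentiating the matrix-valued function $\tilde\rho(t):=\Pe^{-\int_{x(0)}^{x(t)}(\mathcal A+\eta)}\,\rho_E(x(t))\,\big(\Pe^{-\int_{x(0)}^{x(t)}(\mathcal A+\eta)}\big)^\dagger$ and showing that the terms generated by the path-ordered exponential cancel exactly against those generated by $d\rho_E$, what remains vanishing because $\eta$ is $\mathfrak j_x$-valued. First I would abbreviate $g(t)=\Pe^{-\int_{x(0)}^{x(t)}(\mathcal A+\eta)}\in G_{x(t)}$; reading the path-ordered exponential in the form consistent with the identification $g_A=\Pe^{-\int(A+\eta)}$ made in the proof of Proposition \ref{propadiabtransp}, i.e. $g^{-1}\dot g=-(\mathcal A_\mu+\eta_\mu)\dot x^\mu$, one has $\dot g=-g(\mathcal A_\mu+\eta_\mu)\dot x^\mu$ and $\dot g^\dagger=-(\mathcal A_\mu+\eta_\mu)^\dagger\dot x^\mu\,g^\dagger$.

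Next I would apply the Leibniz rule $\frac{d}{dt}(g\rho_E g^\dagger)=\dot g\,\rho_E\,g^\dagger+g\,\dot\rho_E\,g^\dagger+g\,\rho_E\,\dot g^\dagger$ and substitute for $\dot\rho_E$ the relation $d\rho_E^\alpha=\mathcal A^\alpha\rho_E^\alpha+\rho_E^\alpha(\mathcal A^\alpha)^\dagger$ of equation (\ref{geomphasematdens}), which along $\mathcal C$ reads $\dot\rho_E=(\mathcal A_\mu\rho_E+\rho_E\mathcal A_\mu^\dagger)\dot x^\mu$. The terms linear in $\mathcal A_\mu$ then cancel in pairs — $-g\mathcal A_\mu\dot x^\mu\rho_E g^\dagger$ against $+g\mathcal A_\mu\dot x^\mu\rho_E g^\dagger$, and likewise for the Hermitian conjugates — leaving $\frac{d}{dt}(g\rho_E g^\dagger)=-g\,(\eta_\mu\rho_E+\rho_E\eta_\mu^\dagger)\,\dot x^\mu\,g^\dagger$. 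Finally I would invoke the fact, established inside the proof of Proposition \ref{gaugetransf}, that $\eta\in\Omega^1(U^\alpha,P^\alpha_{\mathfrak j})$, which is precisely the statement $\eta\rho_E+\rho_E\eta^\dagger=0$, i.e. $\eta_\mu\rho_E+\rho_E\eta_\mu^\dagger=0$ for each $\mu$; the residual term thus vanishes and $\frac{d}{dt}(g\rho_E g^\dagger)=0$, which is the claimed parallel transport condition. Note en route that no appeal to the adiabatic condition is needed for this particular identity, since (\ref{geomphasematdens}) holds for the full generator $\mathcal A$; and taking $\tr_{\S}$ of the vanishing matrix recovers the constancy of $\tr_{\S}(g_{EA}\rho_E g_{EA}^\dagger)$ from the previous proposition, the dynamical factor $\Te^{-\ihbar^{-1}\int E}$ that has been stripped off here being accounted for there through the almost-selfadjointness $\tr_{\S}(\rho_E(E-E^\dagger))=0$.

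The computation itself is short; the one genuine point of care is the ordering convention for $\Pe$. The bare definition given near the start of the paper writes the generator on the left, $\frac{d}{dt}\Pe^{-\int A}=-A_\mu\dot x^\mu\,\Pe^{-\int A}$, whereas the statement at hand, like Proposition \ref{propadiabtransp}, requires the right-acting form $g^{-1}\dot g=-(\mathcal A_\mu+\eta_\mu)\dot x^\mu$; this is the reading one must use, for otherwise the $\mathcal A_\mu$-terms would not cancel and one would be left with a generically nonzero, anti-Hermitian commutator $-[(\mathcal A_\mu+\eta_\mu)\dot x^\mu,\,g]\,\rho_E\,g^\dagger+\mathrm{h.c.}$ rather than zero. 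Once this is fixed, the only inputs are equation (\ref{geomphasematdens}) and the $\mathfrak j_x$-valuedness of $\eta$, both already available.
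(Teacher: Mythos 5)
Your proposal is correct and follows essentially the same route as the paper: Leibniz rule on $g\rho_E g^\dagger$ with $\dot g=-g(\mathcal A_\mu+\eta_\mu)\dot x^\mu$, substitution of $\dot\rho_E=(\mathcal A_\mu\rho_E+\rho_E\mathcal A_\mu^\dagger)\dot x^\mu$ from equation (\ref{geomphasematdens}), cancellation of the $\mathcal A$-terms, and annihilation of the residual term by $\eta_\mu\rho_E+\rho_E\eta_\mu^\dagger=0$. Your explicit remarks on the right-acting ordering convention for $\Pe$ and on the non-necessity of the adiabatic condition are sound clarifications of points the paper leaves implicit.
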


\begin{proof}
\begin{eqnarray}
\frac{d}{dt}(g_{\mathcal A} \rho_E g_{\mathcal A}^\dagger) & = & \dot g_{\mathcal A} \rho_E + g_{\mathcal A} \dot \rho_E g_{\mathcal A}^\dagger + g_{\mathcal A} \rho_E \dot g_{\mathcal A}^\dagger \nonumber \\
& = & - g_{\mathcal A} ({\mathcal A}_\mu+\eta_\mu)\dot x^\mu \rho_E g_{\mathcal A}^\dagger \nonumber \\
& & + g_{\mathcal A}({\mathcal A}_\mu \dot x^\mu \rho_E + \rho_E {\mathcal A}^\dagger_\mu \dot x^\mu ) g_{\mathcal A}^\dagger \nonumber \\
& & - g_{\mathcal A} \rho_E ({\mathcal A}^\dagger_\mu + \eta^\dagger_\mu)\dot x^\mu g_{\mathcal A}^\dagger \nonumber \\
& = & - g_{\mathcal A}(\eta_\mu \dot x^\mu \rho_E + \rho_E \eta^\dagger_\mu \dot x^\mu ) g_{\mathcal A} \nonumber \\
& = & 0
\end{eqnarray}
\end{proof}

\section*{References}

\end{document}